\theoremstyle{definition}
\newtheorem{dfn}{Definition}[section]
\newtheorem{prop}[dfn]{Proposition}
\newtheorem{lem}[dfn]{Lemma}
\newtheorem{thm}[dfn]{Theorem}
\newtheorem{cor}[dfn]{Corollary}
\newtheorem{rem}[dfn]{Remark}
\newtheorem{conj}[dfn]{Conjecture}
\newif\ifoverleaf
\newif\ifhidedone
\newif\ifcomment
\newcommand{\shumpei}[1]{\textcolor[RGB]{254,0,144}{\textbf{Shumpei: }#1}}
\newcommand{\nakajima}[1]{ \textcolor[RGB]{144,0,254}{\textbf{Nakajima: }#1}}
\newcommand{\tran}[1]{\textcolor[RGB]{60,104,234}{\textbf{Tran: }#1}}
\newcommand{\shumpei}[1]{\ignorespaces} \newcommand{\nakajima}[1]{\ignorespaces} \newcommand{\tran}[1]{\ignorespaces} \fi
\newcommand{\done}[1]{\ignorespaces}
\newcommand{\done}[1]{#1}
\DeclareSymbolFont{fouriersymbols}{FMS}{futm}{m}{n}
\DeclareSymbolFont{fourierlargesymbols}{FMX}{futm}{m}{n}
\DeclareMathDelimiter{\VERT}{\mathord}{fouriersymbols}{152}{fourierlargesymbols}{147}
\begin{document}
\title{Coherence influx is indispensable for quantum reservoir computing}
\author{Shumpei Kobayashi}
 \affiliation{Department of Creative Informatics, The University of Tokyo, Japan}
 \author{Quoc Hoan Tran}
 \affiliation{Next Generation Artificial Intelligence Research Center (AI Center), The University of Tokyo, Japan}
\author{Kohei Nakajima}
 \affiliation{Next Generation Artificial Intelligence Research Center (AI Center), The University of Tokyo, Japan}
 \affiliation{Department of Creative Informatics, The University of Tokyo, Japan}
 \affiliation{Department of Mechano-Informatics, The University of Tokyo, Japan}
  \thanks{Present Address for Q. H. Tran: Quantum Laboratory, Fujitsu Research, Fujitsu Limited}
\date{\today}             
\begin{abstract}

Echo state property (ESP) is a fundamental property that allows an input-driven dynamical system to perform information processing tasks. Recently, extensions of ESP to potentially nonstationary systems and subsystems, that is, nonstationary ESP and subset/subspace ESP, have been proposed. In this paper, we theoretically and numerically analyze the sufficient and necessary conditions for a quantum system to satisfy nonstationary ESP and subset/subspace nonstationary ESP. Based on extensive usage of the Pauli transfer matrix (PTM) form, we find that (1) the interaction with a quantum-coherent environment, termed \textit{coherence influx}, is indispensable in realizing  nonstationary ESP, and (2) the spectral radius of PTM can characterize the fading memory property of quantum reservoir computing (QRC). Our numerical experiment, involving a system with a Hamiltonian that entails a spin-glass/many-body localization phase, reveals that the spectral radius of PTM can describe the dynamical phase transition intrinsic to such a system. To comprehensively understand the mechanisms under ESP of QRC, we propose a simplified model, multiplicative reservoir computing (mRC), which is a reservoir computing (RC) system with a one-dimensional multiplicative input. Theoretically and numerically, we show that the parameters corresponding to the spectral radius and coherence influx in mRC directly correlates with its linear memory capacity (MC). Our findings about QRC and mRC will provide a theoretical aspect of PTM and the input multiplicativity of QRC. The results will lead to a better understanding of QRC and information processing in open quantum systems.
\end{abstract}

\maketitle
\section{Introduction}

Recently, quantum machine learning (QML) \cite{Schuld_2018}, especially NISQ \cite{Preskill_2018}-capable unitary parametric quantum models, has gained much attention because of its near-term realizability. Recent research progresses on QML include differentiability \cite{Mitarai2018,Schuld_2019}, formulation as a kernel method \cite{Schuld2021}, the data-reuploading technique \cite{P_rez_Salinas_2020}, supervised classification \cite{Schuld_2018,Mitarai2018,Havlicek2019,Schuld_2019_Hilbert,goto_2021} and representation learning \cite{lloyd2020quantum}. However, the training of variational parameters in many QML models suffers from a difficulty that originates from the barren plateau (BP) problem \cite{McClean_2018}. BP causes flatness in the loss function landscape when evaluating observables of QML models, necessitating a problematic number of quantum measurements to precisely evaluate the gradient. Another research direction is to utilize dissipation and decoherence for machine learning \cite{Kubota_2023, Domingo2023, sannia2022dissipation, sannia2023engineered}, where algorithms even incorporate natural quantum dynamics as an information processing medium that inevitably includes dissipations.

Quantum reservoir computing (QRC) \cite{Fujii_2017} is a parameterless temporal QML model that utilizes potentially uncontrollable quantum dynamics for information processing tasks. We denote QRC as parameterless in the sense that no parametric quantum gates are incorporated into its optimization, and only classical post-processing--typically linear regression--is necessary. Therefore, it is an attractive method for finding practical quantum machine learning methods because of its NISQ compatibility, even though the necessary precision of measurements will still be affected by global measurement--induced concentration of expected values of the observables involved \cite{Cerezo2021,thanasilp2022exponential,xiong2023fundamental}.

Theoretical analyses of QRC were recently conducted in several studies \cite{chen2019dissipative, Chen_2020, Mart_nez_Pe_a_2023, gonon2023universal}. Some of these \cite{chen2019dissipative, Mart_nez_Pe_a_2021} focused on the echo state property (ESP) \cite{Jaeger2001ESP} that ensures the trainability of reservoir computers for temporal information processing tasks. However, the traditional ESP definition is not always suitable for addressing the trainability of QRC because of the non-stationarity caused by dissipations. To overcome this problem, two extensions of ESP \cite{kobayashi2024hierarchy} have been proposed that will be useful for quantum systems and other possibly nonstationary systems: nonstationary ESP and subset/subspace ESP.

Recently, there exists a work \cite{Mart_nez_Pe_a_2023} that figured out equivalent conditions for traditional ESP of QRC. Those conditions include an existence of a matrix norm for the Pauli transfer matrix (PTM) \cite{wood2015tensor} of quantum channel driving the QRC to be less than 1. To the best of our knowledge, there has been no theoretical analysis of the conditions required for quantum systems to satisfy nonstationary ESP and subset/subspace nonstationary ESPs. Independently from \cite{Mart_nez_Pe_a_2023}, we derived rather concrete sufficient conditions for the traditional and nonstationary ESP that involves spectral properties of PTM. In addition, our analysis detaches input encoding procedures from fixed reservoir dynamics, so that our theory becomes clearer for practical applications.

 One of the key conclusions of the results is that a quantum reservoir (QR) has nonstationary ESP only if it has coherent interaction with its environment, which we call \textit{coherence influx}. Coherence influx is important in following ways. First, it ensures finite output signals from the system. Second, choosing an appropriate input encoding method based on its interaction with coherence influx ensures the input dependency of the output signal. Finally, coherence influx that is not nullified by internal dynamics ensures a subspace where the spectral radius of the PTM is less than 1, which is a practical key to ensure the fading memory of the system. Typical dynamics incorporating coherence influx include the amplitude damping noise or qubit-reset operation under a compatible Hamiltonian system, and a probabilistic swap operation that is theoretically proven to provide traditional ESP in QRC \cite{Chen_2020}. For instance, in \cite{Kubota_2023, Domingo2023}, the authors suggested that amplitude-damping is a preferable type of dissipation compared with depolarization in QRC, which is explained by the fact that amplitude-damping channel has coherence influx while depolarization does not. The schematics of the necessary components in QRC, including the coherence influx, are depicted in Fig.~\ref{fig:coherence-influx}.
 
We argue that such a requirement originates from the input multiplicativity of QRC. To demonstrate this relationship to the input method in a simple way, we devised a one-dimensional multiplicative-input classical reservoir: \textit{multiplicative RC} (mRC). mRC has a direct relationship between its model parameter and information processing capability, which is quantified by memory capacity (MC) \cite{Jaeger_2001}. Furthermore, the model parameters of mRC that determines the nonstationary ESP and MC correspond to the spectral radius of PTM and the coherence influx in QRC.

We conducted numerical experiments using QRC setups with a Sherrington–Kirkpatrick (SK) Hamiltonian \cite{Sherrington_1975} with an external field to examine our theoretical results. We successfully reproduced the well-known dynamical phase transition effect \cite{_unkovi__2018,Abanin_2019, Mart_nez_Pe_a_2021} in such systems with a spectral radius of PTM. We also computed the MC of the systems and found a clear correspondence between nonstationary ESP and MC.

\begin{figure}[t]
\centering
\includegraphics[width=0.75\hsize]{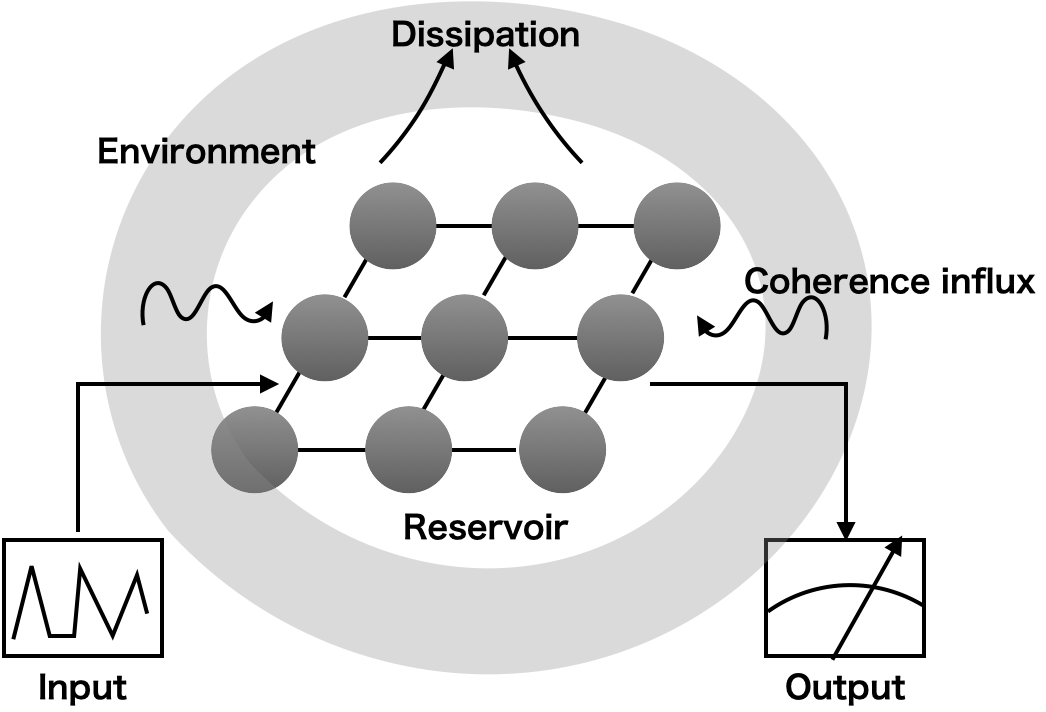}
\caption{Schematics of the components required for QRC. Coherence influx from the environment ensures a finite output signal of the QRC, while maintaining a forgetting property by forcing dissipation back to the environment.}
\label{fig:coherence-influx}
\end{figure}

Our contributions are summarized as follows:
\begin{itemize}
	\item We derived sufficient conditions for QRC to have nonstationary ESP.
	\item We theoretically proved the importance of coherence influx for fading memory.
	\item We devised multiplicative-input RC, a simple model for emphasizing the characteristics QRC.
	\item We numerically demonstrated the relationships between the spectral radius of a PTM, the nonstationary ESP, and the information processing capability.
\end{itemize}

\section{Preliminaries}
\subsection{Echo state prope sssrty}
The echo state property ensures the fading memory of reservoir dynamics regarding input and initial state dependency. Therefore, the existence of ESP ensures effective short-term memory that is required for the sequential processing of temporal information. The traditional ESP condition below that ensures the initial-state independent finite-length memory of an input history is known for stationary systems.
\begin{dfn}{Echo state property \cite{Jaeger2001ESP}}
\label{dfn:traditional-esp}

Let a compact system state space be $\mathcal{S}$ and a compact input space be $\mathcal{X}$. For an input-driven dynamical system with dynamical map $s_t = f(\{\mathbf{u}_\tau\}_{\tau < t};s_0)$ such that $f: \mathcal{S}\times \mathcal{X}^{\mathcal{T}} \to \mathcal{S}$, where $s_0$ is the initial state and $\{\mathbf{u}_\tau\}$ is a sequence of inputs indexed by time $\tau$, the ESP holds if and  only if
\begin{equation}
\label{eqn:esp}
\begin{aligned}
	\forall \{\mathbf{u}_\tau\},\ &\forall (s_0, s_0'),\\
	 &\quad \|f(\{\mathbf{u}_\tau\}_{\tau \leq t};s_0) -  f(\{\mathbf{u}_\tau\}_{\tau\leq t};s_0')\| \underset{t \to \infty} \to 0.
	\end{aligned}
\end{equation}
\end{dfn}

However, potential nonstationary systems, for example, quantum systems under uniform depolarization, cannot be well handled by traditional ESP \cite{kobayashi2024hierarchy}. An extension of the definition to potentially nonstationary systems is defined as follows \cite{kobayashi2024hierarchy}:
\begin{dfn}{Nonstationary ESP \cite{kobayashi2024hierarchy}}

Given a system dynamics $f: \mathcal{S} \times \mathcal{X}^\mathcal{T} \to \mathcal{S}$, $f$ has the \textit{nonstationary} ESP if the following condition holds:
\begin{equation}
\label{eqn:esp_ext2}
\begin{aligned}
	&\forall \{\mathbf{u}_\tau\} ,\ \forall (s_0, s_0'),\ \exists w \in \mathbb{N} < +\infty \text{ s.t. }\\
	&\quad \underset{t\to \infty}{\liminf} \left\|\mathrm{Var}_w^t[\{\mathbf{u}_\tau\}]\right\| > 0 \Rightarrow\\
	 &\quad\quad\quad\quad\frac{\left\|f(\{\mathbf{u}_\tau\}_{\tau \leq t}; s_0) -  f(\{\mathbf{u}_\tau\}_{\tau \leq t}; s_0') \right\|}{\sqrt{\min\boldsymbol{\left(}\left\|\overline{\mathrm{Var}}_w^t[f; s_0]\right\|, \left\|\overline{\mathrm{Var}}_w^t[f; s'_0]\right\|\boldsymbol{\right)}}} \underset{t \to \infty} \to 0,
	\end{aligned}
\end{equation}
where
\begin{equation}
\begin{aligned}
	\mathrm{E}_w^t[\{\mathbf{u}_\tau\}] &\equiv \frac{1}{w}\sum_{k=0}^{w-1}\mathbf{u}_{t - k},\\
	\boldsymbol{\left(}\mathrm{Var}_w^t[\{\mathbf{u}_\tau\}]\boldsymbol{\right)}_i &\equiv \mathrm{E}_w^t\boldsymbol{\left(}\{(\mathbf{u}_{\tau} - \mathrm{E}_w^t[\{\mathbf{u}_{\tau}\}])_i^2\}\boldsymbol{\right)},\\
	\overline{\mathrm{E}}_w^t[f; s_0] &\equiv \frac{1}{w}\sum_{k=0}^{w-1}f(\{\mathbf{u}_\tau\}_{\tau \leq t-k}; s_0),\\
	\left(\overline{\mathrm{Var}}_w^t[f; s_0]\right)_i &\equiv \overline{\mathrm{E}}_w^t\left[\left(f - \overline{\mathrm{E}}_w^t[f, s_0]\right)_i^2; s_0\right],\\
\end{aligned}
\end{equation}
where $(\mathbf{v})_i$ denotes the $i$-th element of a vector $\mathbf{v}$.
\end{dfn}

Here, we require output signal variety under a variable input sequence, which is ensured by the denominator part of Eq.~\eqref{eqn:esp_ext2}.

Furthermore, to cover cases where only a part of the system has ESP, subspace ESP and subset ESP are defined in \cite{kobayashi2024hierarchy}. A typical example of such a system is a two-qubit tensor product system in which only one of the qubits has ESP. 

\begin{dfn}{Subspace nonstationary ESP \cite{kobayashi2024hierarchy}}

Given system dynamics $f: \mathcal{S} \times \mathcal{X}^\mathcal{T} \to \mathcal{S}$, $f$ has \textit{subspace nonstationary ESP} if there exists a transformation $\mathbb{P}: \mathcal{S} \to \mathcal{S}'$ such that $\mathcal{S'} \subseteq \mathcal{S}$, and $\mathbb{P}\circ f$ holds nonstationary ESP.

\end{dfn}

\begin{dfn}{Subset nonstationary ESP \cite{kobayashi2024hierarchy}}

Given system dynamics $f: \mathcal{S} \times \mathcal{X}^\mathcal{T} \to \mathcal{S}$, $f$ has \textit{subset nonstationary ESP} if there exists a subset selection procedure  $\mathcal{P}: \mathcal{S} \to \mathcal{S'}$ such that $\mathcal{S'} \leq \mathcal{S}$ and $\mathcal{P}\circ f$ holds nonstationary ESP. 

	Here, the symbol $A \leq B$ denotes that $A$ is a non-void element-wise subset of $B$. For instance, if $A \equiv \mathbb{R}^m$, then $B = \mathbb{R}^n$ $(1\leq m\leq n)$. 
\end{dfn}

\subsection{Pauli transfer matrix}
The \textit{Pauli transfer matrix} \cite{Jakob_2001} representation of a quantum channel as well as a corresponding vector representation of quantum states, historically called a \textit{coherence-vector representation}, are defined below.
\begin{dfn} {Pauli transfer matrix}

Let a set of all $N$-qubit Kraus operators be $\mathcal{K}(N)$; then,
	\begin{equation}
	\label{eqn:transfer_matrix_kraus}
	\begin{aligned}
	\hat{\mathcal{O}}(N) &\equiv \bigg\{\hat{O} \in \mathbb{R}^{4^N \times 4^N} \bigg|\\&\hat{O}_{i,j} = \mathrm{tr}\bigg(P_i \sum_k K_k P_j K_k^\dagger \bigg),\ \{K_k\} \in \mathcal{K}(N)\bigg\}
	\end{aligned}
\end{equation}
is a set of all $N$-qubit PTM.

Here, Pauli strings $P_i = \bigotimes_{j=1}^N \sigma_{k_j}^{(j)}$ s.t $i = \sum_{l=0}^{N-1}k_j4^l$ and $\sigma_0^{(j)} = I^{(j)}, \sigma_1^{(j)} = X^{(j)}, \sigma_2^{(j)} = Y^{(j)}, \sigma_3^{(j)} = Z^{(j)}$ are Pauli matrices applied onto the $j$-th qubit.

We further define
\begin{equation}
	\mathcal{O}(N) \equiv \left\{ W \bigg| \begin{pmatrix}
		1 & 0\\
		\mathbf{b} & W
	\end{pmatrix} \in \hat{\mathcal{O}}(N)\right\}.
\end{equation}
\end{dfn}
A PTM can be written as 
$\hat{O} = \begin{pmatrix}
		1 & \mathbf{0}^T\\
		\mathbf{b} & W
	\end{pmatrix}$,
where $\mathbf{0} = (0, 0, \cdots, 0) \in \mathbb{R}^{4^N-1}$ and $W \in \mathbb{R}^{(4^N-1) \times (4^N-1)}$. We define the \textit{coherence influx} of $\hat{O}$ below.

\begin{dfn}{Coherence influx}\\
	The \textit{coherence influx} of a PTM $\hat{O} = \begin{pmatrix}
		1 & \mathbf{0}^T\\
		\mathbf{b} & W
	\end{pmatrix}$ is defined as $\mathbf{b}$.
\end{dfn}

We have the following simple characterization of the coherence influx.
\begin{rem}{Unital channel and coherence influx}
\label{rem:unital}

For a quantum channel $\mathcal{E}$ in PTM form: $\hat{O}_{\mathcal{E}} = \begin{pmatrix}
		1 & \mathbf{0}^T\\
		\mathbf{b} & W
	\end{pmatrix}$, $\|\mathbf{b}\| = 0$ if and only if $\mathcal{E}$ is unital; that is, $\mathcal{E}(I) = I$ in density matrix formulation.
\end{rem}

This can be induced by a probabilistic swap of quantum states with their environments, such as a local amplitude damping channel.

Given a density matrix $\rho \in \mathbb{C}^{2^N \times 2^N}$ such that $\mathrm{tr}(\rho) = 1$ and $\rho \succeq 0$, a quantum state in PTM formulation can be written as a vector $|\rho\rangle\rangle \in \mathbb{R}^{4^N}$ such that 
\begin{equation}
\label{eqn:rho_ket_ket}
	|\rho\rangle\rangle_i = \mathrm{tr}(P_i\rho).
\end{equation}
Because $\mathrm{tr}(I\rho) =  1$ for any density matrix, the first element of this vector is always a unit, so it can be written as follows:
\begin{equation}
	|\rho\rangle\rangle = \begin{pmatrix}
		1\\
		\mathbf{r}
	\end{pmatrix},
\end{equation}
where $\mathbf{r} \in \mathbb{R}^{4^N-1}$.
Let us denote a set of quantum states in PTM form as follows:
\begin{dfn}{Physical states}

	For an $N$-qubits system, we define the following set of ``physical" states:
	\begin{equation}
	\begin{aligned}
		\hat{\mathcal{Q}}(N) &\equiv \left\{(|\rho\rangle\rangle)_i = \mathrm{tr}(P_i\rho)\mid \rho^\dagger = \rho, \mathrm{tr}(\rho) = 1,\ \rho \succeq 0 \right\},\\
		\mathcal{Q}(N) &\equiv \left\{(\mathbf{r})_i = \mathrm{tr}(P_{i+1}\rho)\mid \rho^\dagger = \rho, \mathrm{tr}(\rho) = 1,\ \rho \succeq 0 \right\}.
		\end{aligned}
	\end{equation}
\end{dfn}

Several fundamental properties of physical states are proved below.
\begin{lem} {Property of a physical state}
\label{lem:mq_bloch}

\begin{enumerate}
	\item $\rho = \frac{1}{2^N}\sum_i P_i |\rho\rangle\rangle_i$.
	\item $\|\mathbf{r}\|$ does not change under unitary transformation. Therefore, 
	all pure states have the same norm.
	\item $\|\mathbf{r}\| \leq \sqrt{2^N-1} \equiv c_N$, and equality holds if and only if $\mathbf{r}$ is pure; that is, $\rho$ has an eigenvalue one.
	\item $\mathbf{r} \in \mathcal{Q}(N) \Rightarrow \forall c \leq 1,\ c\mathbf{r} \in \mathcal{Q}(N)$.
\end{enumerate}
\end{lem}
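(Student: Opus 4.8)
The plan is to reduce all four claims to a single pair of facts about Pauli strings: the orthogonality relation $\mathrm{tr}(P_iP_j)=2^N\delta_{ij}$ (which follows factor-by-factor from $\mathrm{tr}(\sigma_a\sigma_b)=2\delta_{ab}$ for one-qubit Paulis), and the fact that $\{P_i\}_{i=0}^{4^N-1}$ is a real basis of the Hermitian $2^N\times2^N$ matrices.

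For item 1, expand $\rho=\sum_i c_iP_i$ in the Pauli basis; pairing with $P_j$ and using orthogonality gives $c_j=2^{-N}\mathrm{tr}(P_j\rho)=2^{-N}|\rho\rangle\rangle_j$, which is exactly the stated inversion formula. This is where the spanning property enters. The engine for items 2 and 3 is the purity identity obtained by substituting item 1 into $\mathrm{tr}(\rho^2)$: using $\mathrm{tr}(P_iP_j)=2^N\delta_{ij}$ one gets $\mathrm{tr}(\rho^2)=2^{-N}\sum_i|\rho\rangle\rangle_i^2=2^{-N}\bigl(1+\|\mathbf r\|^2\bigr)$, since $|\rho\rangle\rangle_0=\mathrm{tr}(\rho)=1$. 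For item 2, note that $\rho\mapsto U\rho U^\dagger$ leaves $\mathrm{tr}(\rho^2)$ invariant by cyclicity of the trace, hence leaves $\|\mathbf r\|$ invariant; and since any two pure states are related by a unitary, all pure states share one common value of $\|\mathbf r\|$. For item 3, diagonalize $\rho$ with eigenvalues $\lambda_k\ge0$, $\sum_k\lambda_k=1$: then $\mathrm{tr}(\rho^2)=\sum_k\lambda_k^2\le\bigl(\sum_k\lambda_k\bigr)^2=1$, with equality iff exactly one $\lambda_k$ equals $1$, i.e.\ $\rho$ is a rank-one projector. Plugging this bound into the purity identity yields $\|\mathbf r\|^2\le 2^N-1$ with equality iff $\rho$ is pure, which also identifies the common pure-state norm of item 2 as $c_N=\sqrt{2^N-1}$.

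For item 4, use item 1 to name the state whose coherence vector is $c\mathbf r$: writing $\sum_{i\ge1}|\rho\rangle\rangle_iP_i=2^N\rho-I$, the candidate is $\rho_c\equiv 2^{-N}\bigl(I+c\sum_{i\ge1}|\rho\rangle\rangle_iP_i\bigr)=c\,\rho+(1-c)\,I/2^N$, a convex combination of $\rho$ with the maximally mixed state. For $0\le c\le1$ this is manifestly Hermitian, unit-trace, and positive semidefinite (a convex combination of positive semidefinite matrices), so $c\mathbf r\in\mathcal Q(N)$; this is the range used later, and one should note that positivity can fail for $c$ outside $[0,1]$ (e.g.\ a length-one Bloch vector rescaled by $c=-2$).

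I do not expect a genuine obstacle: every step is standard linear algebra in the Pauli basis, and the identity $\mathrm{tr}(\rho^2)=2^{-N}(1+\|\mathbf r\|^2)$ makes items 2 and 3 essentially immediate. The only points requiring care are the ``only if'' directions — equality iff pure in item 3, and the \emph{uniform} pure-state norm in item 2 — which need, respectively, the eigenvalue/Cauchy–Schwarz characterization of equality and the transitivity of the unitary group on pure states; and the sign restriction in item 4, which should be stated explicitly rather than left as $c\le1$.
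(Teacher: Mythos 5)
Your proof is correct, and for items 2 and 3 it takes a genuinely different (and in one respect tighter) route than the paper. Items 1 and 4 coincide with the paper's argument: the Pauli orthonormal-basis expansion for item 1, and mixing with the maximally mixed state, $\rho_c = c\rho + (1-c)I/2^N$, for item 4; your explicit remark that positivity can fail for $c<0$ (e.g.\ for $N\ge 2$ the negation of a pure state's coherence vector gives $2^{1-N}I-\rho$, which has a negative eigenvalue) is a legitimate correction to the statement's ``$\forall c\le 1$'', and the paper's own proof likewise only establishes the claim for $c\in[0,1]$. For items 2 and 3 the paper argues differently: unitary invariance of $\|\mathbf r\|$ is obtained by observing that $\{U^\dagger P_i U/\sqrt{2^N}\}$ is again an orthonormal basis (so the expansion coefficients are merely rotated), and the bound $\|\mathbf r\|\le c_N$ comes from convexity of the state space plus the triangle inequality, with the value $c_N=\sqrt{2^N-1}$ extracted by counting the $2^N-1$ nonvanishing Pauli expectations of $|0\rangle^{\otimes N}$. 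Your single identity $\mathrm{tr}(\rho^2)=2^{-N}\bigl(1+\|\mathbf r\|^2\bigr)$ delivers both statements at once and, via $\sum_k\lambda_k^2\le\bigl(\sum_k\lambda_k\bigr)^2=1$ with equality iff exactly one eigenvalue is $1$, gives a cleaner proof of the ``equality iff pure'' direction than the paper's triangle-inequality argument, where strictness for genuinely mixed states is left implicit. The paper's route has the minor advantage of exhibiting the extremal value by direct computation on a concrete state, but both are elementary and both are valid.
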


A physical state $|\rho\rangle\rangle$ follows the state transition rule under $\hat{O}$, as follows:
\begin{equation}
	|\rho'\rangle\rangle = \hat{O}|\rho\rangle\rangle\\,
\end{equation}
or equivalently,
\begin{equation}	
	\begin{pmatrix}
		1\\
		\mathbf{r'}
	\end{pmatrix} = \begin{pmatrix}
		1\\
		\mathbf{b} + W\mathbf{r}
	\end{pmatrix}.
\end{equation}

Suppose that we have an input signal $\mathbf{u}$ and an input-dependent ``encoding" PTM $\hat{R}(\mathbf{u}) = \begin{pmatrix}
	1 & \mathbf{0}^T\\
	\mathbf{0} & R(\mathbf{u})
\end{pmatrix}$. Then, the overall state update can be written as
\begin{equation}
\begin{aligned}
	\begin{pmatrix}
		1\\
		\mathbf{r'}
	\end{pmatrix} &= \hat{O}\hat{R}(\mathbf{u})\begin{pmatrix}
		1\\
		\mathbf{r}
	\end{pmatrix}\\
	&=\begin{pmatrix}
		1\\
		 WR(\mathbf{u})\mathbf{r} + \mathbf{b}
	\end{pmatrix}.
	\end{aligned}
\end{equation}
That is,
\begin{equation}
\label{eqn:qrc_state_update}
	\mathbf{r}' =  WR(\mathbf{u})\mathbf{r} + \mathbf{b}.
\end{equation}

Here, we also define the input encoding methods of QRC in PTM form. 

In general, given an input sequence $\{\mathbf{u}_t \in \mathcal{X}\}_{t \in \mathcal{T}}$ with countable $\mathcal{T}$, the input-driven system dynamics governed by input encoding $\hat{E}(\mathbf{u})= \begin{pmatrix}
	1 & \mathbf{0}^T\\
	\mathbf{a}(\mathbf{u}) & E(\mathbf{u})
\end{pmatrix}$ can be written as follows:
\begin{equation}
\begin{aligned}
	\mathbf{r}^{(t+1)} &= \mathbf{b} + W\mathbf{a}(\mathbf{u}_t) + WE(\mathbf{u}_t)\mathbf{r}^{(t)},
\end{aligned}
\end{equation}
where $\mathbf{a}: \mathcal{X} \to \mathbb{R}^{4^N-1}$ and $E: \mathcal{X}\to \mathbb{R}^{\left(4^N-1\right) \times \left(4^N-1\right)}$ are the input-dependent vector and matrix, respectively. Specifically, \textit{unitary} input encoding is defined below.
\begin{dfn}{Unitary input encoding}

The \textit{unitary input encoding} for an input sequence composed of $\mathcal{X}\equiv \mathbb{R}^d$ inputs at each step, is a parametric unitary $\hat{R} \equiv \begin{pmatrix}
 1 & 0\\
 0 & R	
 \end{pmatrix}$ such that $R: \mathcal{X} \to \mathbb{R}^{\left(4^N-1\right) \times \left(4^N-1\right)}$.
\end{dfn}

Therefore, the general form of the system state under an input-driven dynamics with unitary input encoding is
\begin{equation}
\label{eqn:qrc_gen_form}
\begin{aligned}
	\mathbf{r}^{(t)} &= \sum^{t-1}_{\tau=0}\left(\prod_{1\leq n \leq \tau}WR(\mathbf{u}_{t - n})\right)\mathbf{b} 
	+ \left(\prod_{\tau \leq t}WR(\mathbf{u}_{\tau})\right)\mathbf{r}^{(0)}.
	\end{aligned}
\end{equation}

In \cite{Mart_nez_Pe_a_2023}, it was proved that QRC driven by Eq.~\eqref{eqn:qrc_gen_form} has \textit{traditional} ESP if and only if there exists a sub-multiplicaive matrix norm $\|\cdot\|$ such that $\|WR(\mathbf{u})\| < 1$ for all $\mathbf{u} \in \mathcal{X}$, where $R$ is continuous, and eventually $\{R(\mathbf{u}) \mid \mathbf{u} \in \mathcal{X}\}$ forms a compact set provided that $\mathcal{X}$ is compact. If, for instance, such norm is the spectral norm or the Frobenius norm, thenm a simpler equivalent condition can be obtained in our notation. That is, $\|W\| < 1$, because $R(\mathbf{u})$ is an orthogonal transform, and does not change these norms. It should be noted that the compactness condition is only used for necessity of the existence of such norm for the traditional ESP, as we can see in the proof of Thm.~2.19 and Cor.~6.4 in \cite{hartfiel2002nonhomogeneous}, and only boundedness of $R(\mathbf{u})$s are required for the sufficiency. Because the space of all CPTP maps is bounded, even if $\mathcal{X}$ is not compact nor $R$ is not continuous, the existence of the norm is sufficient for the traditional ESP. Following this fact, we do not care about the continuity of $R$ nor the compactness of the input space $\mathcal{X}$ when dealing with the nonstationary ESP in this manuscript.

\subsection{Memory capacity and information processing capacity}
Memory capacity \cite{Jaeger_2001} and information processing capacity (IPC) \cite{dambre2012} quantify linear and non-linear input dependency of the output sequence, respectively, in input-driven dynamical systems by only using the input sequence and state sequence. Suppose we have collected state sequence $\{x_t\}$ under inputs of $\{u_t\}$; then, the total memory capacity $C_{tot}^{MC}$ can be calculated as
\begin{equation}
\begin{aligned}
	C_{tot}^{MC}(\{u_t\}, \{x_t\}) &= \sum_k C_{k}^{MC}(\{u_t\}, \{x_t\}), \text{ where}\\
	C_{k}^{MC}(\{u_t\}, \{x_t\}) &= \frac{\mathbb{E}_t[u_{t-k}x_t^T] \mathbb{E}_t[x_tx_t^T] \mathbb{E}_t[u_{t-k}x_t]}{\mathbb{E}_t[\left(u_t - \mathbb{E}_t[u_t]\right)^2]}.
	\end{aligned}
\end{equation}

The calculation of IPC requires a set of orthogonal functions that non-linearly transform the input sequence $\{u_t\}$. Given a set of orthogonal functions $\mathcal{Y}$, the total information processing capacity $C_{tot}^{IPC}$ can be calculated as
\begin{equation}
\begin{aligned}
	C_{tot}^{IPC}&\left(\{u_t\}, \{x_t\};\mathcal{Y}\right) = \sum_{d}C_{d}^{IPC}\left(\{u_t\}, \{x_t\}; \mathcal{Y}\right), \text{ where}\\
	C_{d}^{IPC}&\left(\{u_t\}, \{x_t\}; \mathcal{Y}\right) = \\
	&\sum_{\{v_t\} \in \mathcal{Y}\left(\{u_t\}; d\right)}
	\frac{\mathbb{E}_t[v_{t}x_t^T] \mathbb{E}_t[x_tx_t^T] \mathbb{E}_t[v_{t}x_t]}{\mathbb{E}_t[\left(v_t - \mathbb{E}_t[v_t]\right)^2]},
\end{aligned}
\end{equation}
where $\mathcal{Y}\left(\{u_t\}; d\right)$ is a set of degree $d$ non-linear transformed sequence of inputs. An element of $\mathcal{Y}\left(\{u_t\}; d\right)$: $\{v_t\}$ can be calculated using degree $d_i$ functions $Y_{d_i} \in \mathcal{Y}$ as
\begin{equation}
\label{eqn:v_t}
	v_t = \prod_i Y_{d_i}(u_{t-k_i})\quad \text{s.t.}\quad \sum_{i}d_i = d.
\end{equation}
Here, each argument of $Y_{d_i}$ on the right-hand side of Eq.~\eqref{eqn:v_t} involves delays $t_i$. Ideally, every combinations of $\{t_i \in [0, \infty)\}$ and $\{Y_{d_i} \mid  \sum_{i}d_i = d\}$ must be used to obtain the full spectrum of $C_{d}^{IPC}$. However, only a selected subset of them are used in real calculations because of limited computational resources.

\section{Main results}
For the rest of the manuscript, we denote the spectral norm a matrix $A$, that is, $\sup_{\mathbf{v}}\frac{\|A\mathbf{v}\|_2}{\|\mathbf{v}\|_2}$, as $\sigma_{\mathrm{max}}(A)$, and the spectral radius of a matrix $A$, that is, $\max_{i}|\lambda_i(A)|$, as $\rho(A)$, where $\lambda_i(A)$ is the $i$-the eigenvalue of $A$.
\subsection{Necessity of coherence influx}
\label{subsec:esp}
\subsubsection{Theoretical result}
First, we show a simplified sufficient condition for the traditional ESP of QRC defined in Def.~\ref{dfn:traditional-esp}.
\begin{lem}{Sufficient condition for traditional ESP}
\label{lem:suffice_trad_esp}

Suppose that we have a QRC with a PTM $\hat{O} = \begin{pmatrix}
	1 & \mathbf{0}^T\\
	\mathbf{b} & W
\end{pmatrix} \in \hat{\mathcal{O}}(N)$ which is driven under a unitary input encoding $\hat{R} = \begin{pmatrix}
		1 & \mathbf{0}^T\\
		\mathbf{0} & R
	\end{pmatrix}: \mathcal{X} \to \hat{\mathcal{O}}(N)$.
In addition, let
\begin{equation}
	s_t\left(W, R; \{\mathbf{u}_t\}\right) \equiv \sigma_{\mathrm{max}}\left(\prod_t WR(\mathbf{u}_t)\right),
\end{equation}
for an input sequence $\{\mathbf{u}_t \in \mathcal{X}\}$, then, the QRC has traditional ESP if 
		\begin{equation}
		\label{eqn:spectral_converge}
			s_t\left(W, R; \{\mathbf{u}_t\}\right) \underset{t\to \infty}\to 0,
		\end{equation} holds for all $\{\mathbf{u}_t\} \in \mathcal{X}^\mathcal{T}$.
\end{lem}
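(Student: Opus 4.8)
The plan is to track the difference between two trajectories driven by the \emph{same} input sequence but launched from different initial states, and to show that this difference is squeezed to zero by the factor $s_t$.

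First I would unroll the state-update rule Eq.~\eqref{eqn:qrc_state_update} specialized to the unitary input encoding, i.e.\ iterate $\mathbf{r}^{(t+1)} = \mathbf{b} + WR(\mathbf{u}_t)\mathbf{r}^{(t)}$, which yields the closed form Eq.~\eqref{eqn:qrc_gen_form}. Let $\mathbf{r}^{(t)}$ and $\mathbf{r}'^{(t)}$ be the states at time $t$ obtained from the initial states $s_0 = \mathbf{r}^{(0)}$ and $s_0' = \mathbf{r}'^{(0)}$ under the same input sequence $\{\mathbf{u}_\tau\}$. The essential point is that the inhomogeneous part of Eq.~\eqref{eqn:qrc_gen_form} --- the sum built from $\mathbf{b}$ --- depends only on $\{\mathbf{u}_\tau\}$ and is therefore identical for the two trajectories; it cancels upon subtraction, leaving only the homogeneous term:
\begin{equation}
	\mathbf{r}^{(t)} - \mathbf{r}'^{(t)} = \Big(\prod_{\tau \le t} WR(\mathbf{u}_\tau)\Big)\big(\mathbf{r}^{(0)} - \mathbf{r}'^{(0)}\big).
\end{equation}

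Next I would take Euclidean norms and apply the bound $\|A\mathbf{v}\|_2 \le \sigma_{\mathrm{max}}(A)\|\mathbf{v}\|_2$, so that
\begin{equation}
	\big\|\mathbf{r}^{(t)} - \mathbf{r}'^{(t)}\big\|_2 \;\le\; s_t\big(W, R;\{\mathbf{u}_\tau\}\big)\;\big\|\mathbf{r}^{(0)} - \mathbf{r}'^{(0)}\big\|_2 .
\end{equation}
The remaining factor is bounded uniformly in $t$: since the physical state space is compact its diameter is finite, and concretely Lemma~\ref{lem:mq_bloch}(3) gives $\|\mathbf{r}^{(0)} - \mathbf{r}'^{(0)}\|_2 \le \|\mathbf{r}^{(0)}\|_2 + \|\mathbf{r}'^{(0)}\|_2 \le 2c_N$. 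Invoking the hypothesis Eq.~\eqref{eqn:spectral_converge}, $s_t(W,R;\{\mathbf{u}_\tau\}) \to 0$ as $t\to\infty$, hence $\|\mathbf{r}^{(t)} - \mathbf{r}'^{(t)}\|_2 \to 0$. Since the argument applies to every input sequence $\{\mathbf{u}_\tau\}\in\mathcal{X}^{\mathcal{T}}$ and every pair $(s_0,s_0')$, condition Eq.~\eqref{eqn:esp} of Def.~\ref{dfn:traditional-esp} holds and the QRC has traditional ESP.

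There is no deep obstacle in this argument; the only points that need care are (i) the cancellation of the $\mathbf{b}$-dependent terms, which genuinely uses that both trajectories are fed the same input sequence, and (ii) the observation that Def.~\ref{dfn:traditional-esp} asks only for convergence for each fixed input sequence, so no uniformity over $\mathcal{X}^{\mathcal{T}}$ --- and in particular neither continuity of $R$ nor compactness of $\mathcal{X}$ --- is required for this sufficiency direction.
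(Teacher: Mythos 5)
Your proposal is correct and follows essentially the same route as the paper's proof: subtract the two trajectories so that the $\mathbf{b}$-dependent inhomogeneous terms cancel, bound the resulting homogeneous difference by $s_t\left(W,R;\{\mathbf{u}_\tau\}\right)\,\|\mathbf{r}^{(0)}-\mathbf{r}'^{(0)}\|$, and let the hypothesis Eq.~\eqref{eqn:spectral_converge} drive it to zero. The only additions beyond the paper's argument are the explicit uniform bound $2c_N$ on the initial-state difference and the remark about not needing compactness of $\mathcal{X}$, both of which are harmless.
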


The proposition below shows the importance of the coherence influx for QRC's nonstationary ESP. In other words, the unital channel does not have the nonstationary ESP under unitary input encoding.

\begin{prop}{Coherence influx is indispensable}
\label{prop:coherence-influx}
Suppose that we have a PTM $\hat{O} = \begin{pmatrix}
	1 & \mathbf{0}^T\\
	\mathbf{b} & W
\end{pmatrix} \in \hat{\mathcal{O}}(N)$. If $\|\mathbf{b}\|= 0$, the nonstationary ESP does not hold under any unitary input encoding.
\end{prop}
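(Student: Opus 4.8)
The plan is to show that a unital reservoir ($\|\mathbf{b}\|=0$; equivalently, by Remark~\ref{rem:unital}, $\mathcal{E}(I)=I$), when started from the maximally mixed state, emits an identically trivial signal, so that the normalization built into the definition Eq.~\eqref{eqn:esp_ext2} is unavailable and the nonstationary ESP cannot hold. Concretely, I would first specialize the general evolution Eq.~\eqref{eqn:qrc_gen_form} to $\mathbf{b}=\mathbf{0}$: the trajectory collapses to $\mathbf{r}^{(t)}=\bigl(\prod_{\tau\leq t} W R(\mathbf{u}_\tau)\bigr)\mathbf{r}^{(0)}$, a purely homogeneous linear image of the initial coherence vector with no additive input contribution. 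By Lemma~\ref{lem:mq_bloch}(1), the maximally mixed state $I/2^N$ is the unique physical state with $\mathbf{r}^{(0)}=\mathbf{0}$, and for that choice $\mathbf{r}^{(t)}=\mathbf{0}$ for every $t$, every input sequence, and every unitary encoding.

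Next, I would fix an arbitrary unitary input encoding $\hat R:\mathbb{R}^d\to\hat{\mathcal{O}}(N)$, pick two distinct inputs $\mathbf{u}_A\neq \mathbf{u}_B$, and drive the system with the alternating sequence $\mathbf{u}_A,\mathbf{u}_B,\mathbf{u}_A,\dots$, which satisfies $\liminf_{t\to\infty}\|\mathrm{Var}_w^t[\{\mathbf{u}_\tau\}]\|>0$ for every window $w\geq 2$, so the hypothesis of the implication in Eq.~\eqref{eqn:esp_ext2} is active. Then I would take $s_0'$ to be the maximally mixed state and $s_0$ any pure state (so that $\|\mathbf{r}^{(0)}\|=c_N$ is as large as possible, by Lemma~\ref{lem:mq_bloch}(3)). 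Since $f(\{\mathbf{u}_\tau\}_{\tau\leq t'};s_0')$ is the fixed maximally mixed state for every $t'$ — the constant leading coordinate of the coherence vector cancels in all differences and has zero variance, so it is immaterial whether $f$ returns $\mathbf{r}^{(t)}$ or $(1,\mathbf{r}^{(t)})$ — one gets $\overline{\mathrm{Var}}_w^t[f;s_0']=\mathbf{0}$, hence the normalizing denominator $\sqrt{\min(\|\overline{\mathrm{Var}}_w^t[f;s_0]\|,\|\overline{\mathrm{Var}}_w^t[f;s_0']\|)}$ vanishes identically in $t$.

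The numerator, $\|\bigl(\prod_{\tau\leq t}WR(\mathbf{u}_\tau)\bigr)\mathbf{r}^{(0)}\|$, is merely nonnegative, so the fraction in Eq.~\eqref{eqn:esp_ext2} cannot converge to $0$: it is $+\infty$ along a subsequence if the numerator stays positive, and otherwise it is the indeterminate form $0/0$. Thus, for this input and this pair of initial states, no finite window $w$ makes Eq.~\eqref{eqn:esp_ext2} hold, and since $\hat R$ was arbitrary, the nonstationary ESP fails under every unitary input encoding.

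The step requiring the most care is this last one: the argument depends on reading the definition so that a vanishing windowed output variance makes the normalized difference \emph{not} a sequence that converges to $0$ — there is no route through bounding the numerator away from $0$, since for the depolarizing channel ($W=(1-p)I$) it decays like $(1-p)^t$, and for the fully depolarizing channel ($W=0$) both numerator and denominator are identically $0$. I would therefore make explicit, in the write-up, the convention on division by the identically zero output variance, and note that the degenerate window $w=1$ (for which the input variance is always zero, so the implication's hypothesis is false) conveys no information and cannot be invoked to satisfy the property.
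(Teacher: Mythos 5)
Your proof is correct and follows essentially the same route as the paper's: with $\mathbf{b}=\mathbf{0}$ the trajectory started from the maximally mixed state ($\mathbf{r}^{(0)}=\mathbf{0}$) is identically zero, so the windowed output variance in the denominator of Eq.~\eqref{eqn:esp_ext2} vanishes identically and the normalized difference cannot tend to $0$. Your added care about the $0/0$ degeneracy (e.g., complete depolarization, where the numerator also vanishes) and about the vacuous window $w=1$ goes beyond the paper's one-line argument, which simply asserts that the quantity diverges.
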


From Rem.~\ref{rem:unital}, Prop.~\ref{prop:coherence-influx} indicates that no unital channel can be used as a dynamics of QRC under unitary input encoding. We need a formal definition of an \textit{injective} map for our theorems.

\begin{dfn}{Injective map}
	
	Let $\mathcal{X}$ and $\mathcal{Y}$ be metric spaces with their respective distance function denoted as $d(\cdot, \cdot)$. A function $f: \mathcal{X} \to \mathcal{Y}$ is called injective if and only if for any positive real $\delta> 0$ and $u,v\in \mathcal{X}$ such that $d(u, v) > \delta$, there exists a positive real $\epsilon > 0$ such that $d(f(u), f(v))> \epsilon$. 
\end{dfn}
We have the following sufficient condition for the nonstationary ESP using the definition above:

\begin{thm}{Sufficient condition for the nonstationary ESP of QRC}

\label{thm:sufficient_ns_esp}
A QRC with a PTM $\hat{O} = \begin{pmatrix}
	1 & \mathbf{0}^T\\
	\mathbf{b} & W
\end{pmatrix} \in \hat{\mathcal{O}}(N)$ has the nonstationary ESP under a unitary input encoding $\hat{R} = \begin{pmatrix}
		1 & \mathbf{0}^T\\
		\mathbf{0} & R
	\end{pmatrix}: \mathcal{X} \to \hat{\mathcal{O}}(N)$ if all of the following conditions hold:
	\begin{enumerate}
		\item Inverse matrices $\mathcal{G}^{-1}(\mathbf{u}_t)\equiv \left(I - WR(\mathbf{u}_t)\right)^{-1}$ always exists for all $\mathbf{u}_t \in \mathcal{X}$.
		\item $\mathbf{u}_t \mapsto \mathcal{G}^{-1}(\mathbf{u}_t)\mathbf{b}$ is an injective map from $\mathcal{X}$ to $\mathbb{R}^{4^N-1}$.
		\item Eq.~\eqref{eqn:spectral_converge} holds.
		
	\end{enumerate}
\end{thm}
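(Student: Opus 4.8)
The plan is to exhibit an explicit "limiting trajectory" determined entirely by the input history, show that every orbit converges to it (this handles the initial-state forgetting encoded in the numerator of Eq.~\eqref{eqn:esp_ext2}), and then use the injectivity hypothesis together with the existence of $\mathcal{G}^{-1}$ to guarantee that the trajectory genuinely depends on the input, so that the denominator in Eq.~\eqref{eqn:esp_ext2} stays bounded away from zero whenever the input variance does. Throughout I will work in the PTM coordinates $\mathbf{r}^{(t)}$, using the update rule $\mathbf{r}^{(t+1)} = WR(\mathbf{u}_t)\mathbf{r}^{(t)} + \mathbf{b}$ and its unrolled form~\eqref{eqn:qrc_gen_form}.

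First I would dispose of the initial-state dependence. Given two initial states $\mathbf{r}^{(0)}, \mathbf{r}'^{(0)} \in \mathcal{Q}(N)$, their difference after $t$ steps is $\bigl(\prod_{\tau \le t} WR(\mathbf{u}_\tau)\bigr)(\mathbf{r}^{(0)} - \mathbf{r}'^{(0)})$, whose norm is at most $s_t(W,R;\{\mathbf{u}_\tau\}) \cdot \|\mathbf{r}^{(0)} - \mathbf{r}'^{(0)}\| \le 2c_N\, s_t(W,R;\{\mathbf{u}_\tau\})$ by Lemma~\ref{lem:mq_bloch}(3). By condition~3 (Eq.~\eqref{eqn:spectral_converge}) this tends to $0$; indeed this is exactly the content of Lemma~\ref{lem:suffice_trad_esp}. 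So the numerator of the fraction in~\eqref{eqn:esp_ext2} vanishes as $t \to \infty$, and moreover the orbit $\mathbf{r}^{(t)}$ is asymptotically independent of $\mathbf{r}^{(0)}$; one can define $\mathbf{r}^\infty_t \equiv \sum_{\tau=0}^{\infty}\bigl(\prod_{1\le n\le \tau}WR(\mathbf{u}_{t-n})\bigr)\mathbf{b}$ (convergent by the same estimate) as the canonical trajectory. The remaining work is entirely about the denominator: I must show $\liminf_{t\to\infty}\|\mathrm{Var}_w^t[\{\mathbf{u}_\tau\}]\| > 0$ for some finite $w$ forces $\liminf_{t\to\infty}\|\overline{\mathrm{Var}}_w^t[f;\mathbf{r}^{(0)}]\|$ to be bounded below, so that the fraction indeed $\to 0$.

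For that, I would choose $w = 2$ (two consecutive steps suffice to see the input-dependence). The key algebraic observation: if at some time the state were already at the fixed point $\mathbf{r}^\ast(\mathbf{u}) \equiv \mathcal{G}^{-1}(\mathbf{u})\mathbf{b}$ of the map $\mathbf{r}\mapsto WR(\mathbf{u})\mathbf{r} + \mathbf{b}$, then feeding $\mathbf{u}$ keeps it there; so the spread of $\mathbf{r}^{(t)}$ over a window where the input varies is controlled below by how far apart the fixed points $\mathcal{G}^{-1}(\mathbf{u})\mathbf{b}$ are for the inputs appearing in that window — and condition~2 says exactly that distinct inputs (distance $>\delta$) give fixed points at distance $>\epsilon(\delta)$. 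Concretely, writing $\mathbf{r}^{(t+1)} - \mathbf{r}^\ast(\mathbf{u}_t) = WR(\mathbf{u}_t)\bigl(\mathbf{r}^{(t)} - \mathbf{r}^\ast(\mathbf{u}_t)\bigr)$ and iterating, the deviation of the true orbit from the relevant fixed point is itself governed by $s_t$ and hence small for large $t$; so over a window $[t-1,t]$ on which the input variance is $\ge v_0 > 0$, the two states $\mathbf{r}^{(t-1)},\mathbf{r}^{(t)}$ are $\epsilon$-close to $\mathbf{r}^\ast(\mathbf{u}_{t-2}),\mathbf{r}^\ast(\mathbf{u}_{t-1})$ respectively, which are separated by at least $\epsilon(\delta(v_0))$, giving a lower bound on $\|\overline{\mathrm{Var}}_2^t[f;\mathbf{r}^{(0)}]\|$ that does not decay. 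Combining the $O(s_t)$ upper bound on the numerator with this positive lower bound on the denominator yields the ratio $\to 0$.

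I expect the main obstacle to be the second half — making the fixed-point comparison quantitative and uniform in $t$. Two technical points need care: (i) the contraction factor $s_t$ controls convergence to the orbit but one must check that the orbit stays within a fixed compact set (which it does, since $\mathcal{Q}(N)$ is compact and invariant, by Lemma~\ref{lem:mq_bloch}), so that "distance between fixed points" is genuinely bounded below once the inputs are bounded apart; and (ii) translating "$\liminf \|\mathrm{Var}_w^t\| > 0$" into "there is a fixed window in which two input samples are a fixed distance $\delta$ apart, infinitely often" — this requires relating the scalar/vector variance over a window to a pairwise-separation statement, which is elementary but must be done with the right $w$ and with attention to the $\liminf$ (it may be necessary to pass to a subsequence of times $t_k$ along which the input variance is bounded below, and argue the ratio $\to 0$ along an arbitrary sequence by the same bound). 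Once those are handled, assembling the pieces is routine.
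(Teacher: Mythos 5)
Your overall strategy matches the paper's: condition~3 together with Lem.~\ref{lem:suffice_trad_esp} disposes of the numerator of Eq.~\eqref{eqn:esp_ext2}, and the denominator is handled by comparing the orbit with the input-dependent fixed points $\mathbf{r}^\ast(\mathbf{u}) = \mathcal{G}^{-1}(\mathbf{u})\mathbf{b}$, whose pairwise separation is exactly what condition~2 provides. The paper's proof introduces $\mathbf{b}^{(t)} \equiv \sum_{\tau}\bigl(\prod_{1\le n\le\tau}WR(\mathbf{u}_{t-n})\bigr)\mathbf{b}$ (your $\mathbf{r}^\infty_t$), observes via condition~3 that every orbit converges to it, and then argues contrapositively: $\mathbf{b}^{(t)}$ is invariant under the step driven by $\mathbf{u}_t$ only if $\mathbf{b}^{(t)} = \mathcal{G}^{-1}(\mathbf{u}_t)\mathbf{b}$, so a window containing two distinct inputs cannot leave the state sequence stationary, whence the windowed state variance is positive.

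Where your write-up goes beyond the paper --- and where it breaks --- is the quantitative claim that for large $t$ the states $\mathbf{r}^{(t-1)},\mathbf{r}^{(t)}$ are $\epsilon$-close to $\mathbf{r}^\ast(\mathbf{u}_{t-2}),\mathbf{r}^\ast(\mathbf{u}_{t-1})$. The identity $\mathbf{r}^{(t+1)} - \mathbf{r}^\ast(\mathbf{u}_t) = WR(\mathbf{u}_t)\bigl(\mathbf{r}^{(t)} - \mathbf{r}^\ast(\mathbf{u}_t)\bigr)$ cannot be iterated as you suggest: at the previous step the relevant fixed point was $\mathbf{r}^\ast(\mathbf{u}_{t-1})$, so the recursion carries an extra drift term proportional to $\mathbf{r}^\ast(\mathbf{u}_{t-1}) - \mathbf{r}^\ast(\mathbf{u}_t)$ that does not vanish for a varying input. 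The orbit therefore tracks a weighted average of past fixed points rather than the instantaneous one, and your claimed lower bound $\epsilon(\delta(v_0))$ on $\|\overline{\mathrm{Var}}_2^t\|$ does not follow; a single application of $WR(\mathbf{u}_t)$ only contracts toward $\mathbf{r}^\ast(\mathbf{u}_t)$ by one factor, it does not land near it. To repair this you should argue as the paper does --- the limiting trajectory can be constant across a window only if it sits at the common fixed point of all inputs in that window, which condition~2 forbids once two inputs in the window are $\delta$-separated --- or else supply a genuinely uniform-in-$t$ lower bound on the windowed variance, which your sketch (and, to be fair, the paper's own qualitative ``$>0$'' step) does not yet establish.
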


The equivalent condition for the existence of $\mathcal{G}^{-1}(\mathbf{u}_t)$ is that no eigenvalue of $WR(\mathbf{u}_t)$ is equal to 1 for all $\mathbf{u}_t \in \mathcal{X}$. In addition, the typical conditions necessary for $\mathbf{u}_t \mapsto \mathcal{G}^{-1}(\mathbf{u}_t)\mathbf{b} : \mathcal{X} \to \mathbb{R}^{4^N-1}$ to be injective are summarized in Rem.~\ref{rem:typical_ns_esp}.

\begin{rem}{Typical conditions necessary for injectiveness}

The following are the typical conditions necessary for $\mathbf{u}_t \mapsto \mathcal{G}^{-1}(\mathbf{u}_t)\mathbf{b}$ to be injective as a map from $\mathcal{X}$ to $\mathbb{R}^{4^N-1}$: 
\begin{enumerate}
	\item There exists a real positive $\delta$ such that $\|\mathbf{b}\| > \delta$. For instance, an amplitude-damping channel or a qubit-reset operation exists.
	\item $R(\mathbf{u}_t): \mathcal{X} \to \mathcal{O}(N)$ is injective. For instance, it is a rotation around a fixed axis on $\mathbb{R}^{4^N-1}$ whose degree of rotation is proportional to the input $\mathbf{u}_t$.
	\item $\mathbf{b}$ is not an eigenvector of $R(\mathbf{u}_t)$ for any $\mathbf{u}_t \in \mathcal{X}$. This includes cases where $\mathbf{b}$ has zero entries only on dimensions where $R(\mathbf{u}_t)$ applies. For example, $R$ is not a local $R_Z$ when $\mathbf{b}$ comes only from local amplitude damping.
	\item $W$ does not nullify the input encoding $R(\mathbf{u}_t)$. For example, it does not have zero entries that void the $\mathbf{u}_t$ dependency of $R(\mathbf{u}_t)$. A counter-example is a case in which $R$ is applied to a subset of qubits and $\hat{O}$ applies complete depolarization on that set of qubits.
\end{enumerate}
\label{rem:typical_ns_esp}
\end{rem}

As we can see from Rem.~\ref{rem:typical_ns_esp}, under typical rotational unitary input encoding $R$ that is compatible with the system dynamics $W$, finite coherence influx $\mathbf{b}$ that can be modulated by $R$ is the key for nonstationary ESP.

We have sufficient conditions for Eq.~\eqref{eqn:spectral_converge} to hold. First, we need the following definition of Schur stability \cite{yildiz2012re}:

\begin{dfn}{Schur stability \cite{yildiz2012re}}

	Suppose that we have a matrix $W \in \mathbb{R}^{N \times N}$; then, $W$ is Schur stable if there exists a symmetric matrix $P \succ 0$ such that $W^TPW - P \prec 0$. Here, $A \succ 0$ and $A \prec 0$ means positive definiteness and negative definiteness of a matrix $A$, respectively.
\end{dfn}

Based on the following monotonicity condition of Hilbert--Schmidt distance \cite{Wang_2009}, sufficient conditions for Eq.~\eqref{eqn:spectral_converge} can be written:

\begin{lem}{Strict contraction of Hilbert--Schmidt distance}
\label{lem:monotonic_hs_dist}
	
Suppose that we have a PTM $\hat{O} = \begin{pmatrix}
	1 & \mathbf{0}^T\\
	\mathbf{b} & W
\end{pmatrix} \in \hat{\mathcal{O}}(N)$ and a unitary input encoding $\hat{R} = \begin{pmatrix}
		1 & \mathbf{0}^T\\
		\mathbf{0} & R
	\end{pmatrix}: \mathcal{X} \to \hat{\mathcal{O}}(N)$. For any initial states $\mathbf{r}^{(0)}$ and $\mathbf{r}^{'(0)}$, let the Euclidean norm between these states at time $t$, which is equivalent to the Hilbert--Schmidt distance between corresponding density matrices as follows:
	\begin{equation}
			\|\Delta\mathbf{r}^{(t)}\| \equiv \|\mathbf{r}^{(t)}- \mathbf{r}^{'(t)}\|.
	\end{equation}
Then, the following statements are true:
\begin{enumerate}
	\item $\|\Delta\mathbf{r}^{(t)}\|$ is strictly decreasing with respect to $t$ if and only if $\mathcal{G}(\mathbf{u}_t) + \mathcal{G}(\mathbf{u}_t)^T$ is positive definite for every $\mathbf{u}_t \in \mathcal{X}$.
	\item $\|\Delta\mathbf{r}^{(t)}\|$ is strictly decreasing with respect to $t$ if $WR(\mathbf{u}_t)$ does not have an eigenvalue 1 and $\mathcal{G}(\mathbf{u}_t)$ is diagonalizable for every $\mathbf{u}_t \in \mathcal{X}$.
	\item $\|\Delta\mathbf{r}^{(t)}\|$ is strictly decreasing with respect to $t$ if there exists a positive symmetric matrix $P$ such that $WR(\mathbf{u}_t)$ is Schur stable with respect to $P$ for every $\mathbf{u}_t \in \mathcal{X}$.
\end{enumerate}
\end{lem}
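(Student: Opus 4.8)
The starting observation is that the coherence influx cancels when one subtracts the two driven trajectories: writing the update rule Eq.~\eqref{eqn:qrc_state_update} for the two initial conditions under the shared input sequence and subtracting gives $\Delta\mathbf r^{(t+1)}=WR(\mathbf u_t)\,\Delta\mathbf r^{(t)}$, so the evolution of $\|\Delta\mathbf r^{(t)}\|$ is controlled entirely by the linear maps $WR(\mathbf u_t)$ and is $\mathbf b$-independent. Moreover, by item~4 of Lemma~\ref{lem:mq_bloch} the physical-state set contains a ball around the maximally mixed state $\mathbf r=\mathbf 0$, so the attainable initial differences $\Delta\mathbf r^{(0)}=\mathbf r^{(0)}-\mathbf r'^{(0)}$ fill a neighbourhood of the origin and hence span $\mathbb R^{4^N-1}$; thus every ``for all pairs of initial states'' requirement becomes a ``for all $\mathbf v\in\mathbb R^{4^N-1}$'' requirement.

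For item~1, I would substitute $WR(\mathbf u_t)=I-\mathcal G(\mathbf u_t)$ and expand the squared distance. Setting $\mathbf v\equiv\Delta\mathbf r^{(t)}$ and $\mathcal G\equiv\mathcal G(\mathbf u_t)$,
\begin{equation}
\|\Delta\mathbf r^{(t+1)}\|^2-\|\mathbf v\|^2=-\big\langle\mathbf v,(\mathcal G+\mathcal G^T)\mathbf v\big\rangle+\|\mathcal G\mathbf v\|^2 .
\end{equation}
Strict decrease at every step and for every attainable difference is then equivalent to negativity of this quadratic form for all $\mathbf v\neq\mathbf 0$, which by the monotonicity criterion of \cite{Wang_2009} is the positive-definiteness of $\mathcal G(\mathbf u_t)+\mathcal G(\mathbf u_t)^T$ (equivalently $I-(WR(\mathbf u_t))^T WR(\mathbf u_t)\succ0$). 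Necessity comes from feeding the worst-case eigenvector of the form in as an initial difference, using the span argument above; sufficiency comes from the fact that $\{R(\mathbf u)\}$, together with $W$ fixed, ranges over a bounded subset of CPTP maps, so the least eigenvalue of $\mathcal G(\mathbf u)+\mathcal G(\mathbf u)^T$ is bounded away from $0$ uniformly in $\mathbf u$, yielding a single contraction factor $<1$ --- this uniformity is exactly what later powers Eq.~\eqref{eqn:spectral_converge}.

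Items~2 and 3 I would reduce to the Euclidean estimate above by a change of norm. For item~3, Schur stability of $WR(\mathbf u_t)$ with respect to $P\succ0$ says precisely that $V(\mathbf v)=\mathbf v^T P\mathbf v$ is a strict Lyapunov function, $\|WR(\mathbf u_t)\mathbf v\|_P<\|\mathbf v\|_P$ with a uniform rate $q<1$ over the (compact) reachable family, whence $\|\Delta\mathbf r^{(t)}\|_P\le q^t\|\Delta\mathbf r^{(0)}\|_P\to0$; equivalence of $\|\cdot\|_P$ and $\|\cdot\|_2$ transfers this to $\|\Delta\mathbf r^{(t)}\|$ (monotone throughout when $P\propto I$, monotone after a bounded transient otherwise). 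For item~2, I would put $WR(\mathbf u_t)$ in real diagonal (Jordan-free) form $V_tD_tV_t^{-1}$ and invoke the standard fact that the PTM of the CPTP map $\mathcal E\circ\mathcal R$ has spectrum in the closed unit disk, so $\rho(WR(\mathbf u_t))\le1$; absence of the eigenvalue $1$ then pushes the spectrum strictly inside the disk, making $D_t$ a strict contraction, and one finishes by applying item~3 with a $P$ built from the eigenbasis $V_t$ (or, for constant input, directly from the geometric decay of $D_t^{\,n}$).

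I expect item~3 to be essentially routine --- a discrete Lyapunov inequality --- and the real work to lie in items~1 and 2. In item~1 the delicate points are (i) that the quadratic-form condition is \emph{necessary}, not merely sufficient, which needs the span argument plus careful handling of the second-order term $\|\mathcal G\mathbf v\|^2$, and (ii) that the contraction factor can be taken uniform in the input, which leans on compactness of the reachable set of channels. In item~2 the subtle step is that ``$WR(\mathbf u_t)$ has no eigenvalue $1$'' together with diagonalisability must be parlayed, via the block structure of PTMs of CPTP maps, into the exclusion of \emph{all} peripheral eigenvalues: a unitary block on the traceless sector always contributes the eigenvalue $1$ on traceless diagonal operators, so its absence rules such dynamics out, and this is where one must be careful that nothing on the unit circle survives. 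Once the relevant spectrum sits strictly inside the unit disk, the remaining estimates are standard.
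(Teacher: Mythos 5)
Your reduction $\Delta\mathbf r^{(t+1)}=WR(\mathbf u_t)\,\Delta\mathbf r^{(t)}$ and the one-step expansion of the squared Euclidean norm is exactly the paper's route for item~1 (Lem.~\ref{lem:equiv_strict_dec_hs}), and on the key computation you are actually \emph{more} careful than the paper: the paper applies the product rule $D\|\delta_t\|^2=D\delta_t^T\delta_t+\delta_t^TD\delta_t$ to a discrete difference and silently drops the second-order term $\|\mathcal G\delta_t\|^2$ that you retain. But you then undo this care by asserting that negativity of the full form $-\mathbf v^T(\mathcal G+\mathcal G^T)\mathbf v+\|\mathcal G\mathbf v\|^2$ is ``the positive-definiteness of $\mathcal G+\mathcal G^T$ (equivalently $I-(WR)^TWR\succ0$).'' Those two conditions are not equivalent. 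Negativity of the full form for all $\mathbf v\neq\mathbf 0$ is precisely $I-(WR(\mathbf u_t))^TWR(\mathbf u_t)\succ0$, i.e.\ $\sigma_{\mathrm{max}}(WR(\mathbf u_t))<1$; positive-definiteness of $\mathcal G+\mathcal G^T$ is only the necessary half (obtained by discarding the nonnegative $\|\mathcal G\mathbf v\|^2$). A concrete CPTP counterexample on one qubit: the channel with Kraus operators $X/\sqrt2$ and $Y/\sqrt2$ has $W=\mathrm{diag}(0,0,-1)$, so $\mathcal G+\mathcal G^T=\mathrm{diag}(2,2,4)\succ0$, yet $\|W\delta\|=\|\delta\|$ for $\delta\propto(0,0,1)^T$. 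So your necessity direction (span of attainable differences, worst-case eigenvector) is sound, but the sufficiency direction as you state it does not go through; one either reproduces the paper's first-order identity (accepting the dropped term) or strengthens the hypothesis to $\sigma_{\mathrm{max}}(WR(\mathbf u_t))<1$. Separately, your parenthetical that boundedness of the channel family gives a least eigenvalue of $\mathcal G+\mathcal G^T$ uniformly bounded away from $0$ is unjustified (boundedness alone does not prevent accumulation at $0$), though the lemma as stated does not need uniformity.

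For item~2 you genuinely diverge from the paper, which derives $\mathcal G+\mathcal G^T=2U\,\mathrm{Re}(D)\,U^\dagger\succ0$ from the hypotheses and feeds it into item~1 (Lem.~\ref{lem:suffice_strict_dec_hs}); you instead try to push the spectrum of $WR(\mathbf u_t)$ strictly inside the unit disk and build a Lyapunov matrix from its eigenbasis. The pivotal step, ``absence of the eigenvalue $1$ then pushes the spectrum strictly inside the disk,'' is false: a PTM of a CPTP map can carry peripheral eigenvalues $e^{i\theta}\neq1$ (the channel above has eigenvalue $-1$ and no eigenvalue $1$), and your own closing caveat that one must check ``nothing on the unit circle survives'' is exactly the point you never establish; without it the adapted norm is not a strict contraction and the argument collapses. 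For item~3 your discrete Lyapunov argument is the standard one and matches what the paper actually does (in the proof of Prop.~\ref{prop:suffice_convergence}, following \cite{yildiz2012re}) --- note that the paper's one-line proof of this lemma, ``composition of Lem.~\ref{lem:equiv_strict_dec_hs} and Lem.~\ref{lem:suffice_strict_dec_hs},'' does not cover item~3 at all --- but, as you yourself flag, Schur stability yields strict decrease of the $P$-weighted norm and only eventual decay of the Euclidean norm, which is weaker than the step-by-step monotonicity of $\|\Delta\mathbf r^{(t)}\|$ that the lemma asserts unless $P\propto I$.
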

Here, condition 3 is similar to the sufficient condition of the traditional ESP of classical echo state network (ESN) \cite{yildiz2012re}, a reservoir based on the artificial neural network framework \cite{Jaeger2001ESP}, where diagonally Schur stability of recurrent weight implies the traditional ESP. This result indicates that techniques used in the analysis of classical reservoir computing will be useful in the analysis of QRC when it is represented by the PTM form. 

Another sufficient condition for Eq.~\eqref{eqn:spectral_converge} is the existence of a matrix norm $\|\cdot\|$ such that $\|WR(\mathbf{u})\| < 1$ for all $\mathbf{u} \in \mathcal{X}$ \cite{Mart_nez_Pe_a_2023}.

\begin{prop}{Sufficient condition for the convergence of spectral norms}
\label{prop:suffice_convergence}

Suppose that we have a PTM $\hat{O} = \begin{pmatrix}
	1 & \mathbf{0}^T\\
	\mathbf{b} & W
\end{pmatrix} \in \hat{\mathcal{O}}(N)$ and a unitary input encoding $\hat{R} = \begin{pmatrix}
		1 & \mathbf{0}^T\\
		\mathbf{0} & R
	\end{pmatrix}: \mathcal{X} \to \hat{\mathcal{O}}(N)$. Then, Eq.~\eqref{eqn:spectral_converge} holds if the following conditions hold:
	
	\begin{enumerate}
		\item $\mathcal{G}(\mathbf{u}_t) + \mathcal{G}(\mathbf{u}_t)^T$ is positive definite for every $\mathbf{u}_t \in \mathcal{X}$.
		\item $\mathrm{Ker}\boldsymbol{(}\mathcal{G}(\mathbf{u}_t)\boldsymbol{)} = \emptyset$ and $\mathcal{G}(\mathbf{u}_t)$ is diagonalizable for every $\mathbf{u}_t \in \mathcal{X}$.
		\item There exists a positive symmetric matrix $P$ such that $WR(\mathbf{u}_t)$ is Schur stable with respect to $P$ for every $\mathbf{u}_t \in \mathcal{X}$.
 		\item There exists a matrix norm $\|\cdot\|$ such that $\|WR(\mathbf{u})\| < 1$ for every $\mathbf{u} \in \mathcal{X}$.
	\end{enumerate}
	
\end{prop}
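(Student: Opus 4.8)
The plan is to treat the four hypotheses as independent sufficient criteria and, in each case, reduce the claim $s_t\to 0$ to results already in hand. Throughout, write $A(\mathbf u)\equiv WR(\mathbf u)$ and $M_t\equiv\prod_{\tau\le t}A(\mathbf u_\tau)$, so that $s_t(W,R;\{\mathbf u_t\})=\sigma_{\max}(M_t)$, and observe that for two trajectories driven by the same input from initial states $\mathbf r^{(0)},\mathbf r^{'(0)}$ the coherence influx $\mathbf b$ cancels, giving $\Delta\mathbf r^{(t)}=M_t\,\Delta\mathbf r^{(0)}$. Two reductions are then used in every case: (i) since each $R(\mathbf u)$ is orthogonal, all the matrices $A(\mathbf u)$ lie in the compact set $W\cdot\mathcal O(4^N-1)$; and (ii) by item~4 of Lemma~\ref{lem:mq_bloch} (together with the maximally mixed state lying in the interior of $\mathcal Q(N)$), the admissible differences $\Delta\mathbf r^{(0)}$ fill a neighbourhood of the origin and so span $\mathbb R^{4^N-1}$, whence $\Delta\mathbf r^{(t)}\to0$ for all admissible initial pairs is equivalent to $M_t\to0$ entrywise, which (the spectral norm being a norm) is equivalent to $s_t\to0$. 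Hence it suffices to prove $\|M_t\,\mathbf v\|\to0$ for each fixed $\mathbf v$.

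\textbf{Handling the four conditions.} For condition~4 this is immediate: it is exactly the hypothesis of the criterion of Ref.~\cite{Mart_nez_Pe_a_2023}, which gives the traditional ESP, i.e.\ $\|\Delta\mathbf r^{(t)}\|\to0$ for every input and every initial-state pair; by (ii) this is $s_t\to0$. (Equivalently, one estimates $s_t\le c_0\prod_{\tau\le t}\|A(\mathbf u_\tau)\|$ in the given norm, $c_0$ from equivalence of norms, and invokes the boundedness-of-CPTP argument recalled after Eq.~\eqref{eqn:qrc_gen_form}, i.e.\ Thm.~2.19 and Cor.~6.4 of \cite{hartfiel2002nonhomogeneous}.) For conditions~1, 2 and~3, note first that they coincide verbatim with conditions~1, 2 and~3 of Lemma~\ref{lem:monotonic_hs_dist}, each of which already yields: for every input sequence and every pair of initial states, $t\mapsto\|\Delta\mathbf r^{(t)}\|$ is strictly decreasing (for condition~3 one may also note that Schur stability with respect to $P$ is simply $\|A(\mathbf u)\|_P<1$ for the Euclidean-type norm $\|\mathbf v\|_P=\sqrt{\mathbf v^\top P\mathbf v}$, which brings it under the umbrella of condition~4). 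It then remains to promote ``strictly decreasing'' to ``converging to $0$'': if $\|M_t\mathbf v\|\searrow\ell>0$, then the unit vectors $\hat{\mathbf d}_t\equiv M_t\mathbf v/\|M_t\mathbf v\|$ satisfy $\|A(\mathbf u_t)\hat{\mathbf d}_t\|\to1$; by compactness of $W\cdot\mathcal O(4^N-1)$ and of the unit sphere, pass to a subsequence along which $A(\mathbf u_{t_k})\to A_\star$ and $\hat{\mathbf d}_{t_k}\to\mathbf d_\star$, obtaining a unit vector with $\|A_\star\mathbf d_\star\|=\|\mathbf d_\star\|$, to be contradicted by the strict-contraction content of the chosen condition.

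\textbf{Main obstacle.} The hard part is precisely this promotion step: monotone strict decrease does not by itself force a zero limit, and the limiting argument must be made airtight in that the strictness in conditions~1--3 should not be lost when passing to a limit $A_\star$ that need not itself be of the form $WR(\mathbf u)$ (the set $\{R(\mathbf u)\}$ need not be closed, and the defining inequalities of conditions~1--3 are open). Here one leans on the compactness of the orthogonal group together with the CPTP constraints on $W$ to exclude modulus-one behaviour of $A_\star$ on $\mathbf d_\star$; controlling the input-dependent direction $\hat{\mathbf d}_t$ uniformly in $t$ is the one place where a genuine compactness argument, rather than bookkeeping, is required. Everything else --- the cancellation of $\mathbf b$, the span argument via Lemma~\ref{lem:mq_bloch}, the identification of Schur stability with a norm bound, and the appeal to Ref.~\cite{Mart_nez_Pe_a_2023} for condition~4 --- is routine.
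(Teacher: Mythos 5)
Your reduction is sound and matches the paper's route: condition~4 is handled by sub-multiplicativity of the given norm and equivalence of norms, and conditions~1--3 are fed into Lemma~\ref{lem:monotonic_hs_dist} (your extra observation that Schur stability with respect to $P$ is exactly $\|WR(\mathbf u)\|_P<1$ for the induced $P$-norm, so that condition~3 collapses into condition~4, is a genuine simplification the paper does not make). The span argument via Lemma~\ref{lem:mq_bloch} correctly converts ``$\Delta\mathbf r^{(t)}\to0$ for all admissible pairs'' into $s_t\to0$. The problem is the step you yourself flag: you never actually promote ``$\|\Delta\mathbf r^{(t)}\|$ strictly decreasing'' to ``$\|\Delta\mathbf r^{(t)}\|\to0$.'' A strictly decreasing nonnegative sequence converges, but not necessarily to zero, and the compactness argument you sketch does not close this: the subsequential limit $A_\star$ lives in the closure of $\{WR(\mathbf u)\mid\mathbf u\in\mathcal X\}$, and since the defining inequalities of conditions~1--3 (positive definiteness of $\mathcal G+\mathcal G^T$, absence of eigenvalue $1$, $A^TPA-P\prec0$) are open conditions, they may fail at $A_\star$, so $\|A_\star\mathbf d_\star\|=\|\mathbf d_\star\|$ yields no contradiction. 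What is actually needed is a \emph{uniform} contraction bound, e.g.\ $\sup_{\mathbf u}\lambda_{\max}\bigl((WR(\mathbf u))^TPWR(\mathbf u)-P\bigr)=-\epsilon<0$, which gives $F(z_{t+1})\le(1-\epsilon/\lambda_{\max}(P))F(z_t)$ and hence geometric decay; such uniformity follows if $\{WR(\mathbf u)\}$ is compact (continuous $R$ on compact $\mathcal X$) but must otherwise be added as a hypothesis. The same remark applies to condition~4: each factor $\|WR(\mathbf u_\tau)\|<1$ does not force $\prod_\tau\|WR(\mathbf u_\tau)\|\to0$ unless the factors are bounded away from $1$.

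In fairness, you have put your finger on the weak point of the paper's own argument rather than introduced a new one: the paper dispatches conditions~1 and~2 as ``direct consequences'' of Lemma~\ref{lem:monotonic_hs_dist}, and for condition~3 it likewise concludes $F(z_t)\to0$ from $F(z_t)$ being a ``strongly decreasing sequence,'' in both cases making exactly the leap from strict monotonicity to a zero limit that you identify as the main obstacle. So your proposal is not less rigorous than the published proof; but to count as complete it must either supply the uniform bound above or explicitly assume the compactness needed to extract it.
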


However, we argue that finding a matrix norm that satisfies condition 4 is not simple. For instance, when evaluating the spectral norm $\sigma_{\mathrm{max}}\boldsymbol{(}WR(\mathbf{u})\boldsymbol{)} = \sigma_{\mathrm{max}}(W)$, it becomes clear that it is too strict, because applying the local reset unitary input encoding below, which is a typical protocol of input encoding in QRC, always makes $\sigma_{\mathrm{max}}(W)$ larger than 1, yet such QRC is successful in producing temporal information processing capabilities.

\begin{dfn}{Local reset unitary input encoding}
\label{dfn:reset-input}

    Given a set of qubits $K_{in}$ for input targets, a parametric unitary after complete amplitude damping as
    
   \label{dfn:reset-encoding}
    \begin{equation}
        \hat{O}_{reset}(\mathbf{u}) \equiv \bigotimes_{i \notin K_{in}} I^{(i)} \otimes \bigotimes_{i \in K_{in}} \left(\hat{R}(\mathbf{u})^{(i)} \hat{\Gamma}(1)^{(i)}\right),
    \end{equation}
    where
    \begin{equation}
    	\hat{\Gamma}(\gamma) = \begin{pmatrix}
    		1 & 0 & 0 & 0\\
    		0 & \sqrt{1 - \gamma} & 0 & 0\\
    		0 & 0 & \sqrt{1 - \gamma} & 0\\
    		\gamma & 0 & 0 & 1 - \gamma\\
    	\end{pmatrix},
    \end{equation}
    is called a \textit{local reset unitary input encoding}.
\end{dfn}
A QRC under a local reset unitary input encoding has the property below.
\begin{rem}{Spectral norm under local reset unitary input encoding}

\label{rem:spectral_norm_reset}
$N$-qubits QRC with $M<N$-qubits local reset unitary input encoding (Def.~\ref{dfn:reset-input}) without any dissipations except those from reset operations has $\sigma_{\mathrm{max}}(W) = 2^\frac{M}{2}$.
\end{rem}
Rem.~\ref{rem:spectral_norm_reset} implies that $\sigma_{\mathrm{max}}(W) > 1$ for every $1 \leq M < N$-qubits reset unitary input encoding under unitary dynamics, which means that the spectral norm of $W$ does not have any information about ESP. In addition, the fact that Frobenius norm $\|W\|_F = \sqrt{\sum_i \sigma_i(W)^2}$ is lower bounded by the spectral norm implies that Frobenius norm is also uninformative about the ESP of QRC.

It should be noted that the condition 1 and 2 of Prop.~\ref{prop:coherence-influx}, that is, positive definiteness of $\mathcal{G}(\mathbf{u}_t) + \mathcal{G}(\mathbf{u}_t)^T$ and $\mathrm{Ker}\boldsymbol{(}\mathcal{G}(\mathbf{u}_t)\boldsymbol{)} = \emptyset$ both imply $\rho\boldsymbol{(}WR(\mathbf{u}_t)\boldsymbol{)} \neq 1$. Therefore, we expect that there will be a large dependence of the nonstationary ESP of QRC on the spectral radius $\rho(W)$ and $\rho(WR(\mathbf{u}_t))$, which are lower bounds of all matrix norms of $W$ and $WR(\mathbf{u}_t)$, respectively. This is similar to the case in the ESN \cite{Basterrech2017}, where the spectral radius of recurrent weight is found to be practically important for the traditional ESP. We define the averaged spectral radius of $WR(\mathbf{u}_t)$ as follows:

\begin{dfn}{Effective input-driven spectral radius}\\

For each input sequence $\{\mathbf{u}_\tau \in \mathcal{X}\}_{\tau \in \mathcal{T}}$ and input encoding $R: \mathcal{X} \times \mathcal{Q}(N) \to \mathcal{Q}(N)$, the \textit{effective input-driven spectral radius} of $W$ is
\begin{equation}
	\rho_{\mathrm{eff}}(W; R,\{\mathbf{u}_{\tau}\}) \equiv \mathbb{E}_\tau^{g}[\rho\boldsymbol{(}WR(\mathbf{u_\tau})\boldsymbol{)}],
\end{equation}
where $\mathbb{E}^{g}$ denotes a geometric mean.
\end{dfn}
It should be noted that if $\mathbb{E}_{\{\mathbf{u}_t\}}[\rho_{\mathrm{eff}}\boldsymbol{(}W; R,\{\mathbf{u}_{\tau}\}\boldsymbol{)}] < 1$, Eq.~\eqref{eqn:spectral_converge} is practically satisfied.

We expect the conjecture below.
\begin{conj}{Existence of unitary input encoding for the nonstationary ESP}
\label{conj:exist_unitary}

If $\rho(W) < 1$, then, there always exists a non-trivial unitary input encoding $\hat{R}$ such that 

Eq.~\eqref{eqn:spectral_converge} is satisfied. Furthermore, the convergence rate with respect to $t$ is $O\boldsymbol{(}\rho_{\mathrm{eff}}(W; R,\{\mathbf{u}_{\tau}\})^t\boldsymbol{)}$.
\end{conj}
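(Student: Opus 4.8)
The plan is to split the conjecture into its two assertions and to attack the existence claim by a Householder-type norm construction, leaving the sharp rate as the delicate part. For the existence of a non-trivial unitary input encoding making Eq.~\eqref{eqn:spectral_converge} hold, I would first use the classical consequence of the (real) Jordan form: since $\rho(W) < 1$, for every sufficiently small $\epsilon > 0$ there is an invertible $S \in \mathbb{R}^{(4^N-1)\times(4^N-1)}$ (obtained from a real Jordan basis of $W$ composed with a rescaling $\mathrm{diag}(1,\mu,\mu^2,\dots)$ with $\mu$ small) such that the induced operator norm $\|A\|_* \equiv \|SAS^{-1}\|_2$ satisfies $\|W\|_* \le \rho(W) + \epsilon =: q_0 < 1$. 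This norm is submultiplicative, it satisfies $\|R\|_* = \|I + S(R-I)S^{-1}\|_2 \le 1 + \kappa(S)\|R - I\|_2$ for every orthogonal $R$ with $\kappa(S) = \|S\|_2\|S^{-1}\|_2$, and it is equivalent to the spectral norm via $\sigma_{\mathrm{max}}(A) = \|A\|_2 \le \kappa(S)\|A\|_*$.

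Next I would \emph{choose} the encoding explicitly: fix a nonzero real antisymmetric generator $G$ with $\|G\|_2 = 1$ and a non-constant modulation $g:\mathcal{X}\to[-1,1]$ (injective, if one also wants the nonstationary ESP), and put $R(\mathbf{u}) \equiv \exp\!\big(\theta_0\, g(\mathbf{u})\, G\big)$, a genuinely input-dependent unitary input encoding with $\sup_{\mathbf{u}}\|R(\mathbf{u}) - I\|_2 \le \theta_0 e^{\theta_0} =: \delta$. Picking $\theta_0$ small enough that $q \equiv q_0\big(1 + \kappa(S)\delta\big) < 1$, submultiplicativity of $\|\cdot\|_*$ gives $\|WR(\mathbf{u})\|_* \le \|W\|_*\,\|R(\mathbf{u})\|_* \le q < 1$ uniformly in $\mathbf{u}\in\mathcal{X}$, hence $\big\|\prod_{\tau\le t} WR(\mathbf{u}_\tau)\big\|_* \le q^{\,t}$ for every input sequence, and therefore $s_t(W,R;\{\mathbf{u}_\tau\}) = \sigma_{\mathrm{max}}\!\big(\prod_{\tau\le t} WR(\mathbf{u}_\tau)\big) \le \kappa(S)\,q^{\,t} \to 0$ as $t\to\infty$, which is Eq.~\eqref{eqn:spectral_converge}. (If moreover $\|\mathbf{b}\| > 0$ and, say, $\mathbf{b}$ is not an eigenvector of $G$, the typical conditions of Rem.~\ref{rem:typical_ns_esp} hold, so by Thm.~\ref{thm:sufficient_ns_esp} this same $R$ also yields the nonstationary ESP, not merely Eq.~\eqref{eqn:spectral_converge}.)

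For the rate, the crude estimate above gives only $O(q^{\,t})$ with $q$ slightly above $\rho(W)$, rather than $O\big(\rho_{\mathrm{eff}}(W;R,\{\mathbf{u}_\tau\})^{\,t}\big)$. I would sharpen it by letting $\theta_0\to 0$: by continuity of the spectral radius, $\rho(WR(\mathbf{u}_\tau))\to\rho(W)$ uniformly, so $\rho_{\mathrm{eff}}\to\rho(W)$, and the joint spectral radius of $\{WR(\mathbf{u})\}_{\mathbf{u}\in\mathcal{X}}$ is bounded by $\sup_{\mathbf{u}}\|WR(\mathbf{u})\|_* \le q$, hence also tends to $\rho(W)$. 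The remaining step — that the \emph{realized} growth rate along a given sequence is controlled by the running geometric mean $\big(\prod_{\tau\le t}\rho(WR(\mathbf{u}_\tau))\big)^{1/t}$ up to sub-exponential factors — would be handled by a perturbative diagonalization $WR(\mathbf{u}) = V(\mathbf{u})\Lambda(\mathbf{u})V(\mathbf{u})^{-1}$ with $V(\mathbf{u}) = V + O(\theta_0)$, so that telescoping the product isolates $\prod_\tau \Lambda(\mathbf{u}_\tau)$ (whose norm is $\le \prod_\tau\rho(WR(\mathbf{u}_\tau))$ up to a fixed conditioning constant) times an accumulated product of near-identity factors $V(\mathbf{u}_{\tau})^{-1}V(\mathbf{u}_{\tau+1}) = I + O(\theta_0)$.

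The main obstacle is precisely the last point: bounding the accumulated $O(\theta_0)$ corrections uniformly over all input sequences without incurring an extra exponential factor. It cannot be done for arbitrary matrix families (two nilpotent matrices have zero spectral radii yet a product with $\rho = 1$), so the argument must genuinely exploit the common left factor $W$ and the near-identity orthogonal factors; this is, I expect, why the statement is phrased as a conjecture. A natural intermediate result that should be reachable is the rate $O\big((\rho_{\mathrm{eff}} + \epsilon)^{\,t}\big)$ for every $\epsilon > 0$, or the exact rate $O(\rho_{\mathrm{eff}}^{\,t})$ under the additional hypothesis that $W$ is diagonalizable with a simple dominant eigenvalue, where the perturbation series converges cleanly.
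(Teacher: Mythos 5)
The statement you are proving is left as a \emph{conjecture} in the paper: there is no proof of it anywhere in the manuscript, only numerical evidence (Fig.~\ref{fig:sigma_vs_rho_sk}) and the remark that the trivial encoding $R\equiv I$ gives $\rho(W^t)=\rho(W)^t\to 0$. So there is no paper proof to compare against; what matters is whether your argument actually settles the claim. Your first half essentially does. The Householder-type norm $\|A\|_*=\|SAS^{-1}\|_2$ with $\|W\|_*\le\rho(W)+\epsilon<1$, combined with a near-identity encoding so that $\|WR(\mathbf{u})\|_*\le q<1$ uniformly, exhibits exactly the sub-multiplicative norm demanded by condition 4 of Prop.~\ref{prop:suffice_convergence}, and Eq.~\eqref{eqn:spectral_converge} follows with rate $O(q^t)$. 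One fixable gap: you take $G$ to be an arbitrary antisymmetric generator, but a unitary input encoding in this paper must be the PTM of a unitary \emph{channel}, i.e.\ $R(\mathbf{u})$ must lie in the image of the adjoint representation of $SU(2^N)$, which is a proper subgroup of $O(4^N-1)$ (compare Lem.~\ref{lem:unitary_process}: $R_\mathcal{E}\in\mathcal{W}\subset\mathrm{O}(4^N-1)$). You should instead take $G=\mathrm{ad}_{-iH}$ for some fixed Hamiltonian $H$ (e.g.\ a single-qubit $Y$ rotation, as in Eq.~\eqref{eqn:reset_unitary_input_sk}); the smallness argument is unaffected, so this does not endanger the conclusion.

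The second half — the rate $O\boldsymbol{(}\rho_{\mathrm{eff}}(W;R,\{\mathbf{u}_\tau\})^t\boldsymbol{)}$ — you do not prove, and you say so. Your diagnosis is correct: the running geometric mean of the individual spectral radii $\rho(WR(\mathbf{u}_\tau))$ does not in general control the norm of the product (your nilpotent example is the standard counterexample; this is the gap between the generalized spectral radius of a single matrix and the joint spectral radius of a family), so any proof must exploit the specific structure of the family $\{WR(\mathbf{u})\}$, and your perturbative telescoping only yields $O((\rho_{\mathrm{eff}}+\epsilon)^t)$ or the exact rate under extra hypotheses (diagonalizable $W$ with a simple dominant eigenvalue). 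That is an honest and accurate assessment; it is consistent with the authors' decision to phrase the statement as a conjecture rather than a theorem. In summary: you have a correct proof of the existence clause (after restricting the generator to physical unitaries), a genuinely new contribution relative to the paper, and the rate clause remains open exactly where you say it does.
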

Here, a non-trivial unitary input encoding means that it is not always an identity. If it is always identity, then $\rho(W^t) = \rho(W)^t \underset{t\to \infty} \to 0$ implies Eq.~\eqref{eqn:spectral_converge}. In addition, the convergence rate is expected to be approximately $O\boldsymbol{(}\rho(W)^t\boldsymbol{)}$.

\subsubsection{Numerical result}

\begin{figure}[t]
		\centering
		\includegraphics[width=0.9\hsize]{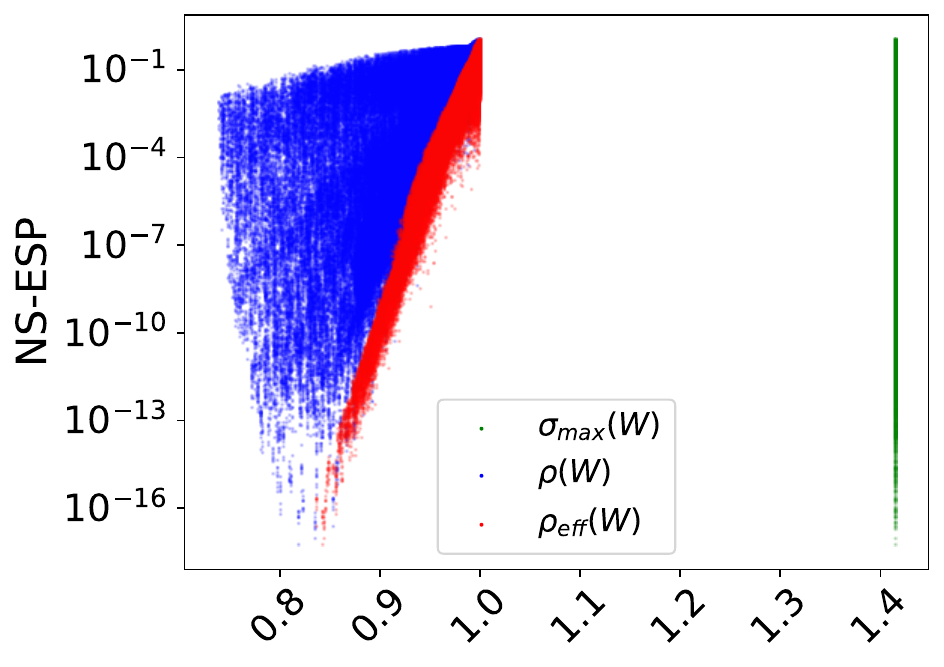}
\caption{Nonstationary ESP indicator (NS-ESP) calculated by Eq.~\eqref{eqn:ns_esp_indicator}. The horizontal axis for each color is noted in the legend. The dependency of the nonstationary ESP indicator values for the spectral radius of $W$: $\rho(W)$ (blue), effective spectral radius of $W$: $\rho_{\mathrm{eff}}(W)$ (red), and the spectral norm of $W$: $\sigma_{\mathrm{max}}(W)$ (green) are plotted. The experiment setups are described in Sec.~\ref{subsec:esp}. $|\{\mathbf{u}_t\}| = 200$ and $w=20$ are used for calculating nonstationary ESP indicator.}
		\label{fig:sigma_vs_rho_sk}
\end{figure}
To demonstrate numerically the system's ESP dependency to $\rho(W)$, $\rho_{\mathrm{eff}}(W)$, and $\sigma_{\mathrm{max}}(W)$, we calculated the nonstationary ESP indicator defined as follows:
\begin{equation}
\label{eqn:ns_esp_indicator}
\begin{aligned}
	&\mathcal{I}_{NS}(t, s_0, s_0', w) \equiv \\
	&\quad \mathcal{I}_{ESP}(t, s_0, s_0') \times \frac{\sqrt{\min\left(\left\|\overline{\mathrm{Var}}_w^w[f; s_0]\right\|, \left\|\overline{\mathrm{Var}}_w^w[f; s'_0]\right\|\right)}}{\sqrt{\min\left(\left\|\overline{\mathrm{Var}}_w^t[f; s_0]\right\|, \left\|\overline{\mathrm{Var}}_w^t[f; s'_0]\right\|\right)}}\\
	&\text{where}\\
	&\mathcal{I}_{ESP}(t, s_0, s_0') \equiv \frac{\|f(\{\mathbf{u}_\tau\}_{\tau \leq t};s_0) -  f(\{\mathbf{u}_\tau\}_{\tau\leq t};s_0')\|}{\|s_0 -s_0'\|},
	\end{aligned}
\end{equation}
using two-qubit QRC governed by the following SK-type Hamiltonian:
\begin{equation}
\label{eqn:sk_hamiltonian}
	H=\sum_{i>j=1}^N J_{i j} \sigma_i^x \sigma_j^x+\frac{1}{2} \sum_{i=1}^N\left(h+D_i\right) \sigma_i^z,
\end{equation}
where

\begin{equation}
\label{eqn:sk_sampling}
\begin{aligned}
 J_{ij} &\sim \mathrm{Uniform}([-J_s/2, J_s/2]),\\
 D_i &\sim \mathrm{Uniform}([-WJ_s/2, WJ_s/2]),
\end{aligned}
\end{equation}
and the reset-input encoding that is defined in Def.~\ref{dfn:reset-encoding}.
The parametric unitary used in our reset unitary input encoding is
\begin{equation}
\label{eqn:reset_unitary_input_sk}
	\hat{R}(\mathbf{u})^{(i)} = \begin{pmatrix}
		1 & \mathbf{0}^T\\
		\mathbf{0} & R_y\left(\arccos\left(\mathbf{u}\right)\right)
	\end{pmatrix}.
\end{equation}

Ten Hamiltonians were sampled for each configuration in $J_s, K \in \{10^{-2 + 4k/99} \mid k\in \{0,1,\dots,99\}\}$, resulting in a total of 50,000 random SK Hamiltonians generated for this calculation. The results are depicted in Fig.~\ref{fig:sigma_vs_rho_sk}. We can observe that, consistent with Conj.~\ref{conj:exist_unitary} $\rho_{\mathrm{eff}}(W; R, \{\mathbf{u}_t\})$ has a stronger log-linear relationship with the nonstationary ESP indicator than $\rho(W)$ itself. However, it should be noted that the nonstationary ESP is always satisfied if $\rho(W) < 1$ in our setup. In addition, $\sigma_{\mathrm{max}}(W)$ has no relation to the nonstationary ESP indicators, as suggested by Rem.~\ref{rem:spectral_norm_reset}.

\subsection{Non-vanishing coherence influx}
\subsubsection{Theoretical results}
We also have the following theorem:
\begin{lem}{Positivity-ensured subspace under non-vanishing coherence influx}
\label{lem:positiveity_subspace}

Suppose that we have a PTM $\hat{O} = \begin{pmatrix}
	1 & \mathbf{0}^T\\
	\mathbf{b} & W
\end{pmatrix} \in \hat{O}(N)$ such that $W$ is diagonalizable. If $\|W\mathbf{b}\| > 0$, then, there exists an invariant subspace $\mathcal{Q}_\mathbf{b}$ of $W$ such that $\mathbf{b} \in \mathcal{Q}_\mathbf{b}$ and $\vec\lambda \notin \mathcal{Q}_\mathbf{b}$ for all eigenvectors $\vec\lambda$ of $W$ that has corresponding eigenvalue of 1. That is, $W$ does not have an eigenvalue 1 in $\mathcal{Q}_\mathbf{b}$; hence, $(I - W)^T + (I - W)$ is positive definite in $\mathcal{Q}_\mathbf{b}$ provided that $W$ is diagonalizable. That is, $\mathbb{P}_{\mathcal{Q}_\mathbf{b}}\left[(I - W)^T + (I - W)\right]$ is positive definite where $\mathbb{P}_{\mathcal{Q}_\mathbf{b}}$ is a projector from $\mathcal{Q}(N)$ onto $\mathcal{Q}_\mathbf{b}$. Furthermore, $\mathcal{Q}_b^{\perp} \oplus \mathcal{Q}_b = \mathcal{Q}(N)$, where $\mathcal{Q}_b^{\perp}= \mathrm{span}(\{\vec\lambda(W) \mid \lambda(W) = 1\})$. Here, $\vec\lambda(W)$ and $\lambda(W)$ denote an eigenvector and the corresponding eigenvalue of $W$, respectively.
	
\end{lem}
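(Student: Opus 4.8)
The plan is to realize $\mathcal{Q}_\mathbf{b}$ directly from the eigenstructure of $W$ and to pin $\mathbf{b}$ inside the ``non-stationary'' part of it using complete positivity. First I would split the coherence-vector space $\mathbb{R}^{4^N-1}$ as $\mathcal{Q}_b^{\perp}\oplus\mathcal{Q}_\mathbf{b}$, where $\mathcal{Q}_b^{\perp}\equiv\mathrm{Ker}(I-W)=\mathrm{span}(\{\vec\lambda(W)\mid\lambda(W)=1\})$ and $\mathcal{Q}_\mathbf{b}\equiv\mathrm{Im}(I-W)$ is the span of all eigenvectors of $W$ with eigenvalue $\neq 1$ (using the real two-dimensional invariant blocks for conjugate pairs of complex eigenvalues). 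Diagonalizability of $W$ makes the eigenvalue $1$ semisimple, so this is a genuine decomposition into $W$-invariant subspaces with $\mathcal{Q}_b^{\perp}\cap\mathcal{Q}_\mathbf{b}=\{\mathbf{0}\}$; hence $I-W$ is a bijection of $\mathcal{Q}_\mathbf{b}$ onto itself, no eigenvector of $W$ with eigenvalue $1$ lies in $\mathcal{Q}_\mathbf{b}$, and $\mathcal{Q}_b^{\perp}\oplus\mathcal{Q}_\mathbf{b}$ is all of the ambient space. This already delivers the structural assertions of the lemma except for $\mathbf{b}\in\mathcal{Q}_\mathbf{b}$ and the positivity statement.

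Next I would show $\mathbf{b}\in\mathcal{Q}_\mathbf{b}$. Since $\hat{O}$ is the PTM of a CPTP map $\mathcal{E}$ that maps the compact convex set of density matrices into itself, $\mathcal{E}$ has a fixed point $\rho_*$ (Brouwer); in coherence-vector form Eq.~\eqref{eqn:qrc_state_update} gives $\mathbf{r}_*=\mathbf{b}+W\mathbf{r}_*$, i.e. $\mathbf{b}=(I-W)\mathbf{r}_*\in\mathrm{Im}(I-W)=\mathcal{Q}_\mathbf{b}$. (A route staying within the paper's toolbox: the coherence influx of $\mathcal{E}^t$ is $\mathbf{b}^{(t)}=\sum_{k=0}^{t-1}W^k\mathbf{b}$, the coherence vector of the physical state $\mathcal{E}^t(I/2^N)$, so $\|\mathbf{b}^{(t)}\|\le c_N$ for every $t$ by Lem.~\ref{lem:mq_bloch}; writing $\mathbf{b}=\mathbf{b}_\perp+\mathbf{b}_\mathcal{Q}$ along $\mathcal{Q}_b^\perp\oplus\mathcal{Q}_\mathbf{b}$, the $\mathcal{Q}_b^\perp$-component of $\mathbf{b}^{(t)}$ is exactly $t\,\mathbf{b}_\perp$ since $W$ acts as the identity there, whereas the $\mathcal{Q}_\mathbf{b}$-component stays bounded because $W$ is diagonalizable with $\rho(W)\le 1$ and the eigenvalue $1$ excluded on $\mathcal{Q}_\mathbf{b}$; boundedness then forces $\mathbf{b}_\perp=\mathbf{0}$.) The hypothesis $\|W\mathbf{b}\|>0$ enters only to guarantee $\mathbf{b}\neq\mathbf{0}$, hence $\mathcal{Q}_\mathbf{b}\neq\{\mathbf{0}\}$ and the claim is not vacuous.

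Finally, for positive definiteness of $\mathbb{P}_{\mathcal{Q}_\mathbf{b}}[(I-W)^T+(I-W)]$ on $\mathcal{Q}_\mathbf{b}$: since $\mathcal{Q}_\mathbf{b}$ is $W$-invariant and $W|_{\mathcal{Q}_\mathbf{b}}$ is diagonalizable with no eigenvalue $1$, this is precisely the hypothesis of Lem.~\ref{lem:monotonic_hs_dist}(2) applied to the dynamics restricted to $\mathcal{Q}_\mathbf{b}$, so $\mathcal{G}=I-W$ is invertible and diagonalizable there and the symmetrized operator is positive definite on the subspace. The main obstacle is exactly this last step: the Euclidean symmetrization $(I-W)^T+(I-W)$ need not be positive definite when $W|_{\mathcal{Q}_\mathbf{b}}$ is highly non-normal, so ``positive definite in $\mathcal{Q}_\mathbf{b}$'' must be read with respect to the adapted (oblique, or $P$-weighted) inner product in which $W|_{\mathcal{Q}_\mathbf{b}}$ is a contraction — this is where diagonalizability genuinely enters, essentially via a Lyapunov/Schur-stability argument of the kind behind Lem.~\ref{lem:monotonic_hs_dist}(3) and the Schur stability definition — or one must invoke the CPTP constraints on $W$ to control that non-normality. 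A secondary, purely bookkeeping issue is the treatment of complex conjugate eigenpairs so that $\mathcal{Q}_\mathbf{b}$ is a bona fide real subspace, which is routine.
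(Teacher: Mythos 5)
Your construction is sound in its structural parts but takes a genuinely different route from the paper's. The paper (see Lem.~\ref{lem:unit_spectral} and Cor.~\ref{cor:non-vanishing-coherenc-influx}) first proves that every eigenvector of $W$ with eigenvalue $1$ is orthogonal to $\mathbf{b}$, by expanding $\sum_{m}W^m\mathbf{b}$ in a Jordan basis and arguing that the unit-eigenvalue components would make this bounded physical quantity diverge; it then takes $\mathcal{Q}_\mathbf{b}$ to be the \emph{orthogonal} complement of $\mathrm{span}(\{\vec\lambda\mid\lambda=1\})$. You instead take $\mathcal{Q}_\mathbf{b}=\mathrm{Im}(I-W)$, which has two advantages: it is manifestly $W$-invariant (the orthogonal complement of a $W$-invariant subspace is $W^T$-invariant, not $W$-invariant, so the paper's choice only works when the unit eigenspace happens to be orthogonal to the rest of the spectrum), and membership $\mathbf{b}\in\mathcal{Q}_\mathbf{b}$ falls out immediately from the fixed-point identity $\mathbf{b}=(I-W)\mathbf{r}_*$, sidestepping the paper's delicate step where a statement about left eigenvectors ($\vec\lambda^{-1}\cdot\mathbf{b}=0$) is converted into one about right eigenvectors. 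Your parenthetical boundedness argument ($t\,\mathbf{b}_\perp$ bounded forces $\mathbf{b}_\perp=\mathbf{0}$) is essentially the paper's Lem.~\ref{lem:unit_spectral} done component-wise, so you have both routes covered. Note also that the paper's displayed ``proof'' of this lemma is a verbatim restatement of the claim; the actual content lives in the appendix corollary.

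The one step you leave open is a real gap, but it is a gap in the statement as written, not only in your argument: positive definiteness of the Euclidean symmetrization $\mathbb{P}_{\mathcal{Q}_\mathbf{b}}\left[(I-W)^T+(I-W)\right]$ does not follow from $W|_{\mathcal{Q}_\mathbf{b}}$ being diagonalizable with no eigenvalue $1$. A triangular $2\times 2$ block $\begin{pmatrix}a & c\\ 0 & b\end{pmatrix}$ with $a\neq b$ and $|c|$ large is diagonalizable with spectrum $\{a,b\}$ away from $1$, yet $\det\!\left[(I-W)^T+(I-W)\right]=4(1-a)(1-b)-c^2<0$; and large non-normality is not excluded by the CPTP constraint, since the paper itself shows $\sigma_{\mathrm{max}}(W)=2^{M/2}>1$ for reset encodings. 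The paper's own supporting Lem.~\ref{lem:suffice_strict_dec_hs} silently assumes unitary diagonalizability ($UD^*U^\dagger$ rather than $MDM^{-1}$), i.e.\ normality, so it suffers from exactly the defect you identify. Your proposed repair --- reading positive definiteness with respect to a $P$-weighted inner product in which $W|_{\mathcal{Q}_\mathbf{b}}$ is Schur stable, or restricting to normal $W$ --- is the correct fix, and stating it explicitly would strengthen the lemma rather than weaken it.
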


\begin{cor}{Subspace nonstationary ESP under non-vanishing coherence influx}
\label{cor:subspace_ns_esp_suffice}

Suppose that we have a PTM $\hat{O} = \begin{pmatrix}
	1 & \mathbf{0}^T\\
	\mathbf{b} & W
\end{pmatrix} \in \hat{\mathcal{O}}(N)$ such that $\|W\mathbf{b}\| > 0$.

Let the projector onto $\mathcal{Q}_\mathbf{b}$ be $\mathbb{P}_{\mathcal{Q}_\mathbf{b}}$. If there exists an input encoding $\hat{R}$ such that $\mathcal{Q}_\mathbf{b}$ and $\mathcal{Q}_\mathbf{b}^\perp$ are its invariant subspaces, then, $\hat{O}$ has the subspace nonstationary in $\mathcal{Q}_\mathbf{b}$ under $\hat{R}$ if the following conditions hold:
 \begin{enumerate}
 	\item Inverse matrices $[\mathcal{G}_\mathbf{b}(\mathbf{u}_t)]_{\mathcal{Q}_\mathbf{b}}^{-1}$ always exist for all $\mathbf{u}_t \in \mathcal{X}$, where $\mathcal{G}_\mathbf{b}(\mathbf{u}_t) \equiv \mathbb{P}_{\mathcal{Q}_\mathbf{b}}\left(I - WR(\mathbf{u}_t)\right)$ and $[\cdot]_{{\mathcal{Q}_\mathbf{b}}}$ denotes a representation of a vector or a matrix in $\mathcal{Q}_\mathbf{b}$ with its orthonormal basis set.
 	\item $\mathbf{u}_t \mapsto \mathcal{G}_\mathbf{b}^{-1}(\mathbf{u}_t)\mathbf{b}$ is an injective map from $\mathcal{X}$ to $\mathcal{Q}_\mathbf{b}$.
 	\item $s_t^{\mathbf{b}}\left(W, R; \{\mathbf{u}_t\}\right) \equiv \sigma_{\mathrm{max}}\left(\mathbb{P}_{\mathcal{Q}_\mathbf{b}}\prod_t WR(\mathbf{u}_t)\right) \underset{t\to \infty}\to 0$ for any $\{\mathbf{u}_t \in \mathcal{X}\}$.
 \end{enumerate}
\end{cor}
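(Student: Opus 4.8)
The plan is to reduce the statement to Thm.~\ref{thm:sufficient_ns_esp} by restricting the whole QRC recursion to the invariant subspace $\mathcal{Q}_\mathbf{b}$ supplied by Lem.~\ref{lem:positiveity_subspace}. First I would record the structural facts: by Lem.~\ref{lem:positiveity_subspace}, $\mathcal{Q}_\mathbf{b}$ and $\mathcal{Q}_\mathbf{b}^\perp$ are $W$-invariant and satisfy $\mathcal{Q}_\mathbf{b}\oplus\mathcal{Q}_\mathbf{b}^\perp=\mathcal{Q}(N)$, with $\mathbf{b}\in\mathcal{Q}_\mathbf{b}$; by the hypothesis of the corollary both are also invariant under $R(\mathbf{u})$ for every $\mathbf{u}\in\mathcal{X}$, hence under $WR(\mathbf{u})$. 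Consequently the (in general oblique, along $\mathcal{Q}_\mathbf{b}^\perp$) projector $\mathbb{P}_{\mathcal{Q}_\mathbf{b}}$ satisfies $\mathbb{P}_{\mathcal{Q}_\mathbf{b}}WR(\mathbf{u})=\mathbb{P}_{\mathcal{Q}_\mathbf{b}}WR(\mathbf{u})\mathbb{P}_{\mathcal{Q}_\mathbf{b}}$ and $\mathbb{P}_{\mathcal{Q}_\mathbf{b}}\mathbf{b}=\mathbf{b}$. Applying $\mathbb{P}_{\mathcal{Q}_\mathbf{b}}$ to Eq.~\eqref{eqn:qrc_state_update} and writing $\mathbf{s}^{(t)}\equiv[\mathbb{P}_{\mathcal{Q}_\mathbf{b}}\mathbf{r}^{(t)}]_{\mathcal{Q}_\mathbf{b}}$ in a fixed orthonormal basis of $\mathcal{Q}_\mathbf{b}$, one obtains the self-contained affine recursion
\begin{equation}
\mathbf{s}^{(t+1)} = [WR(\mathbf{u}_t)]_{\mathcal{Q}_\mathbf{b}}\,\mathbf{s}^{(t)} + [\mathbf{b}]_{\mathcal{Q}_\mathbf{b}},
\end{equation}
which is exactly of the form Eq.~\eqref{eqn:qrc_state_update} under the substitutions $W\mapsto[W]_{\mathcal{Q}_\mathbf{b}}$, $R(\mathbf{u})\mapsto[R(\mathbf{u})]_{\mathcal{Q}_\mathbf{b}}$, $\mathbf{b}\mapsto[\mathbf{b}]_{\mathcal{Q}_\mathbf{b}}$, but now on the Euclidean space $\mathcal{Q}_\mathbf{b}$ rather than $\mathbb{R}^{4^N-1}$.

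Next I would check that the three hypotheses of the corollary translate verbatim into the three hypotheses of Thm.~\ref{thm:sufficient_ns_esp} for this reduced system. Because $WR(\mathbf{u})$ maps $\mathcal{Q}_\mathbf{b}$ into itself, $\mathbb{P}_{\mathcal{Q}_\mathbf{b}}$ acts as the identity on its image, so $[\mathcal{G}_\mathbf{b}(\mathbf{u})]_{\mathcal{Q}_\mathbf{b}}=I-[WR(\mathbf{u})]_{\mathcal{Q}_\mathbf{b}}$ is precisely the matrix ``$I-WR(\mathbf{u})$'' of the reduced recursion, and likewise $[\mathbb{P}_{\mathcal{Q}_\mathbf{b}}\prod_t WR(\mathbf{u}_t)]_{\mathcal{Q}_\mathbf{b}}=\prod_t[WR(\mathbf{u}_t)]_{\mathcal{Q}_\mathbf{b}}$. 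Hence condition~1 of the corollary is exactly invertibility of the reduced ``$\mathcal{G}$'', condition~2 is exactly injectivity of the reduced constant-input fixed-point map $\mathbf{u}\mapsto(I-[WR(\mathbf{u})]_{\mathcal{Q}_\mathbf{b}})^{-1}[\mathbf{b}]_{\mathcal{Q}_\mathbf{b}}$, and condition~3 is exactly Eq.~\eqref{eqn:spectral_converge} for the reduced recursion. I would then observe that the argument behind Thm.~\ref{thm:sufficient_ns_esp} uses only these three algebraic inputs together with the affine form of the update — it never invokes the specific dimension $4^N-1$, nor orthogonality of $R$ — so it can be rerun word for word for $\mathbf{s}^{(t)}$, giving that $t\mapsto\mathbf{s}^{(t)}$ satisfies the nonstationary-ESP inequality with the Euclidean norm on $\mathcal{Q}_\mathbf{b}$-coordinates.

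Finally, I would identify $\mathbf{s}^{(t)}$ with $(\mathbb{P}_{\mathcal{Q}_\mathbf{b}}\circ f)(\{\mathbf{u}_\tau\}_{\tau\le t};s_0)$ read in the chosen basis: since $\mathbf{s}^{(t)}$ depends on $s_0$ only through $\mathbf{s}^{(0)}=[\mathbb{P}_{\mathcal{Q}_\mathbf{b}}s_0]_{\mathcal{Q}_\mathbf{b}}$ and evolves autonomously, $\mathbb{P}_{\mathcal{Q}_\mathbf{b}}\circ f$ is a well-defined dynamics on $\mathcal{Q}_\mathbf{b}\subseteq\mathcal{Q}(N)$, and the previous step says it has the nonstationary ESP; by the definition of subspace nonstationary ESP (with the transformation $\mathbb{P}:=\mathbb{P}_{\mathcal{Q}_\mathbf{b}}$) this is the claim. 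The step I expect to be the main obstacle is the first one — verifying that the projected recursion truly closes up. This requires combining the $W$-invariance of $\mathcal{Q}_\mathbf{b}^\perp$ from Lem.~\ref{lem:positiveity_subspace} with the $R$-invariance assumed in the corollary, and being careful with an oblique projector, since $\mathcal{Q}_\mathbf{b}^\perp$ is defined as the eigenvalue-$1$ eigenspace of $W$ and need not be the metric orthogonal complement of $\mathcal{Q}_\mathbf{b}$. One must also keep track that all the variance/denominator quantities entering the nonstationary-ESP inequality are those of $\mathbb{P}_{\mathcal{Q}_\mathbf{b}}\circ f$ (which is what the definition and Step~2 supply), not of $f$, and that passing to the inherited Euclidean norm on the $\mathcal{Q}_\mathbf{b}$-coordinate space is legitimate because the nonstationary-ESP condition is formulated for abstract normed state spaces.
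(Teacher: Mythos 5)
Your proposal is correct and takes essentially the same route as the paper's proof: both use the $W$- and $R$-invariance of $\mathcal{Q}_\mathbf{b}$ and $\mathcal{Q}_\mathbf{b}^\perp$ together with $\mathbf{b}\in\mathcal{Q}_\mathbf{b}$ to close the projected affine recursion $[\mathbb{P}_{\mathcal{Q}_\mathbf{b}}\mathbf{r}^{(t+1)}]_{\mathcal{Q}_\mathbf{b}}=[WR(\mathbf{u}_t)]_{\mathcal{Q}_\mathbf{b}}[\mathbb{P}_{\mathcal{Q}_\mathbf{b}}\mathbf{r}^{(t)}]_{\mathcal{Q}_\mathbf{b}}+[\mathbf{b}]_{\mathcal{Q}_\mathbf{b}}$ and then rerun the argument of Thm.~\ref{thm:sufficient_ns_esp} on the reduced system, matching the three hypotheses one-for-one. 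Your added care about the projector possibly being oblique is a reasonable refinement the paper glosses over, but it does not change the substance of the argument.
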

An important indication here is that even if $s_t$ does not converge, we can use QRC by projecting outputs to $\mathcal{Q}_\mathbf{b}$ if $s_t^\mathbf{b}$ converges. Furthermore, $\mathbf{b}\in \mathcal{Q}_\mathbf{b}$ is consistent with the existence of an injective map $\mathbf{u}_t \mapsto \mathcal{G}_\mathbf{b}^{-1}(\mathbf{u}_t)\mathbf{b}$.
\subsubsection{Numerical results}
\label{subsubsec:wb_numerical}
\begin{figure}[t]
	\centering
	\includegraphics[width=0.9\hsize]{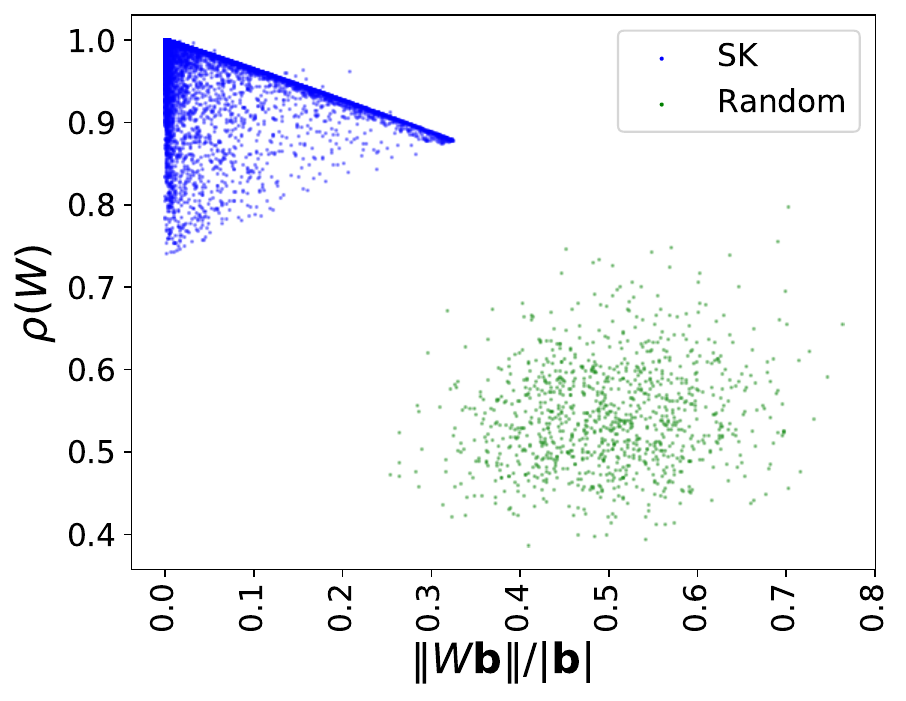}
	\includegraphics[width=0.9\hsize]{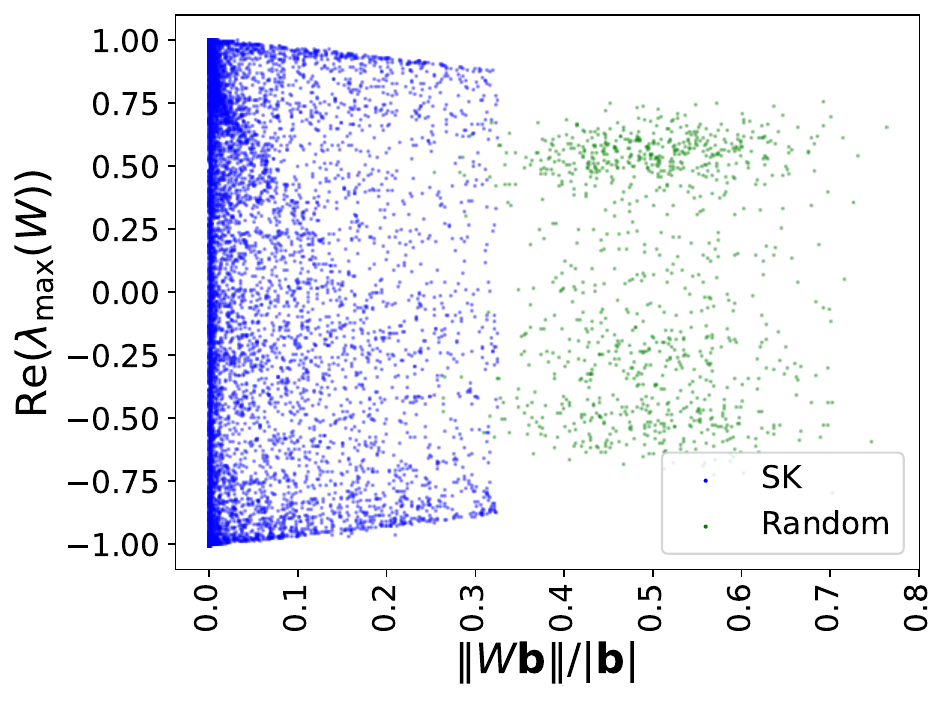}
	\caption{ Distribution of (a): $\rho(W)$ and (b): $\mathrm{Re}(\lambda_{\max}(W))$ to $\|W\mathbf{b}\|/\|\mathbf{b}\|$ in randomly sampled $2$-qubit QRC with reset-input encoding by $R_Y$, as described in Sec.~\ref{subsec:esp}. Blue dots (SK) use the same setup as in Sec.~\ref{subsec:esp}. The hamiltonians used in calculating green dots (random) are randomly sampled by the method described in Sec.~\ref{subsubsec:wb_numerical}. Non-diminishing coherence influx $\mathbf{b}$ under internal dynamics $W$---Namely finite $W\mathbf{b}$ induces non-unit spectral radius $\rho(W) < 1$.}
	\label{fig:wb-rho}
\end{figure}
We found that $\|W\mathbf{b}\| > 0$ enforces $\lambda_\mathrm{max}(W) \neq 1$, which likely induces nonstationary ESP in numerical simulations. In addition, $\rho(W)$ and $|\mathrm{Re}(\lambda_{\mathrm{max}}(W))|$ are likely upper bounded by some function of $\|W\mathbf{b}\|/\|\mathbf{b}\|$ in our results. We employed QRCs governed by two types of Hamiltonians that are randomly generated to depict that relationship in Fig.~\ref{fig:sigma_vs_rho_sk}. One type of Hamiltonian uses the SK Hamiltonian, which is identical to the setup in Sec.~\ref{subsec:esp}, and the other type is generated by Hamiltonian generated by first sampling a uniform random matrix $\hat{O}_{\mathrm{rand}} \in \mathbb{R}^{4^N \times 4^N}$ and then taking $\hat{O}_{\mathrm{rand}}$ as a PTM and projecting it onto a completely positive trace-preserving (CPTP) manifold through a Choi form \cite{knee2018qpt}. A Choi form $\mathcal{C}$ of a PTM $\hat{O}_{\mathrm{rand}}$ can be calculated as follows \cite{daniel_2015}:
\begin{equation}
	\mathcal{C} = \sum_{i,j=0}^{4^N-1} (\hat{O}_{\mathrm{rand}})_{i,j} P_j^T \otimes P_i,
\end{equation}
where $P_i$ is the $i$-th Pauli string. All QRC in this experiment uses the identical reset-input encoding used in Sec.~\ref{subsec:esp}.

Figure~\ref{fig:wb-rho} numerically shows that, at least in simple Hamiltonian cases, $\|W\mathbf{b}\| > 0$ always ensures $\rho(\mathbb{P}_{\mathcal{Q}_b}W) < 1$ and $|\mathrm{Re}(\lambda_\mathrm{max}(W))| < 1$. This, in conjunction with the results in Fig.~\ref{fig:sigma_vs_rho_sk} and the second statement of Prop.~\ref{prop:suffice_convergence}, indicates a possibility of subspace nonstationary ESP under finite $W\mathbf{b}$.

\subsection{Multiplicative RC (mRC)}
\subsubsection{Theoretical results}
\begin{figure}[h]
\centering
\subfloat[$C_{tot}^{MC}$]{
\includegraphics[width=0.9\hsize]{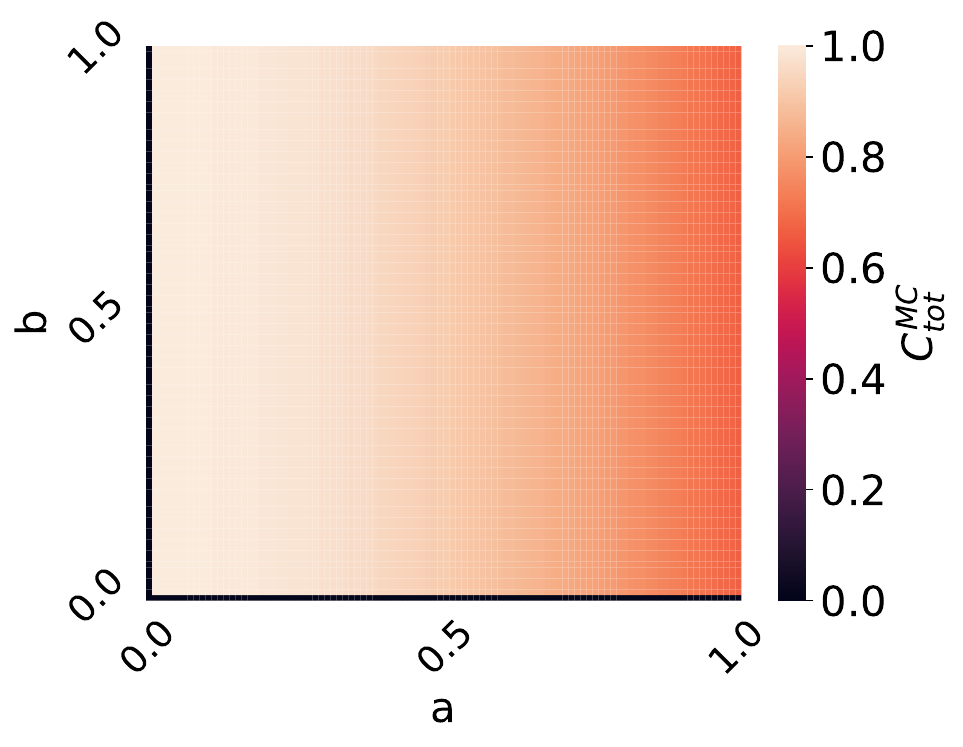}
}
\par
\subfloat[$C_{2+}^{IPC}$]{
\includegraphics[width=0.9\hsize]{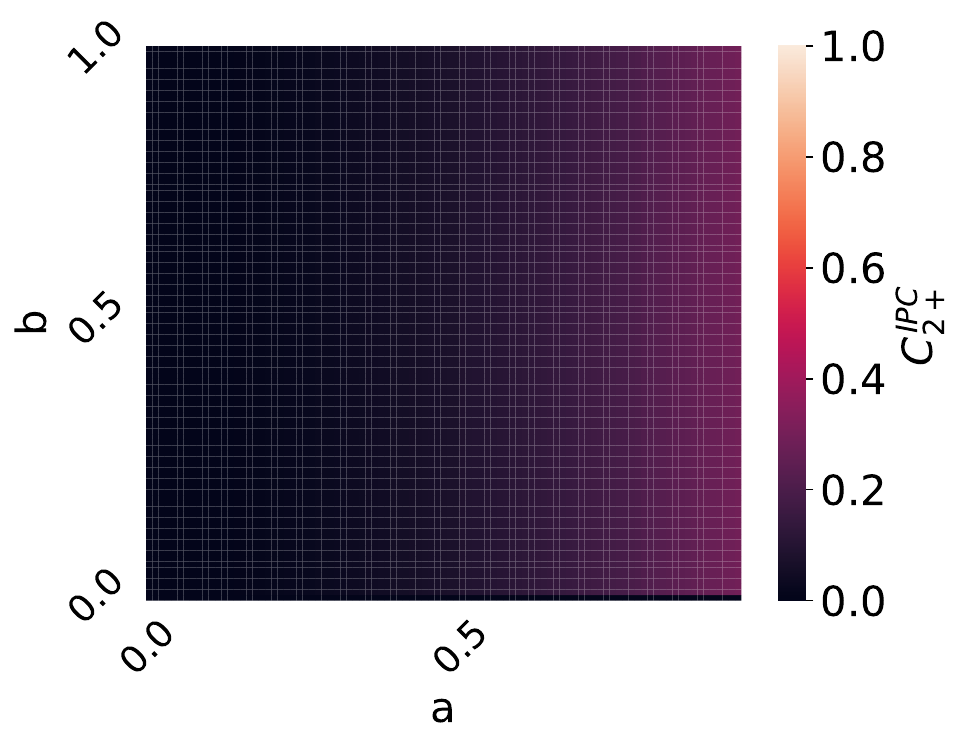}
}

\caption{a) $C_{tot}^{MC}$ and b) $C_{2+}^{IPC}\equiv \sum_{d\geq 2} C_{d}^{IPC}$ of mRC. It should be noted that negating $a$ is equivalent to negating $x$. Therefore, the results must be symmetric to a change of the sign of $a$.}
\label{fig:mrc-mc-ipc}
\end{figure}
The QR model governed by Eq.~\eqref{eqn:qrc_state_update} can, in its essence, be simplified to the RC model below with multiplicative inputs defined as follows:
\begin{dfn}{Multiplicative input reservoir computer}

Let a set of input $\mathcal{X} \subset \mathbb{R}$, and parameters $a, b \in \mathbb{R}$. A one-dimensional mRC has the following state update rule:
\begin{equation}
\label{eqn:mrc_state_update}
	x_{t+1} = a u_t x_t + b.
\end{equation}

If $|b| > 0$, we can divide both sides by $b$ and replace $\frac{x_t}{b}$ with $y_t$ to find the following equivalent state update rule:
\begin{equation}
\label{eqn:mrc-b-equivalence}
	y_{t+1} = a u_t y_t + 1.
\end{equation}
\end{dfn}

Here, we denote it as mRC and investigate its parameter dependency and non-linearity of the memory profile. While leave the comprehensive matrix analysis of the dynamics of QRC described in Eq.~\eqref{eqn:qrc_state_update} to future work because of its complexity.

Eq.~\eqref{eqn:mrc_state_update} corresponds to Eq.~\eqref{eqn:qrc_state_update} in the sense that
\begin{itemize}
\item $a$ in Eq.~\eqref{eqn:mrc_state_update} corresponds to $W$ in Eq.~\eqref{eqn:qrc_state_update}, and $|a|$ corresponds to $\rho(W)$.
\item $u_t$ in Eq.~\eqref{eqn:mrc_state_update} corresponds to $R(\mathbf{u}_t)$ in Eq.~\eqref{eqn:qrc_state_update}.
\item $b$ in Eq.~\eqref{eqn:mrc_state_update} corresponds to $\mathbf{b}$ in Eq.~\eqref{eqn:qrc_state_update}.

\end{itemize}

We can directly write the analytical form of the general state as follows:
\begin{equation}
\label{eqn:mrc-general}
\begin{aligned}
	x_{t+1} &= ax_tu_t + b\\
	&= a(ax_{t-1}u_{t-1} + b)u_t + b\\
	&= 	a^2 x_{t-1}u_{t-1}u_t + abu_t + b\\
	&\ \ \ \vdots\\
	&= a^{t+1}x_0\prod_{\tau=0}^{t}u_\tau + b\sum_{\tau=2}^{t}a^{\tau}\prod_{\sigma=0}^{\tau-1} u_{t-\sigma} + abu_t + b.
\end{aligned}
\end{equation}

If $|a| < 1$ and $t \gg 1$, 
\begin{equation}
\label{eqn:mrc-general2}
	x_{t+1} \sim b\sum_{\tau=2}^{t}a^{\tau}\prod_{\sigma=0}^{\tau-1} u_{t-\sigma} + abu_t + b.
\end{equation}

We have the following theorem:
\begin{thm} {Nonstationary ESP of mRC}
\label{thm:mrc-ns-esp}

An mRC has nonstationary ESP if and only if $0 < |a| < 1$ and $|b| > 0$.
\end{thm}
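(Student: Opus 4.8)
\emph{Proof strategy.} The plan is to argue directly from the closed form of the mRC trajectory in Eq.~\eqref{eqn:mrc-general}, which already splits the output into an initial-condition-dependent transient $a^{t}x_0\prod_{\tau<t}u_\tau$ and a driven part independent of $x_0$. Write $z_t$ for the driven trajectory (the solution started from $x_0=0$, obeying $z_{t}=b+a u_{t-1}z_{t-1}$ with $z_0=0$). Then one has the exact identity $x_t(s_0)-x_t(s_0')=a^{t}(s_0-s_0')\prod_{\tau=0}^{t-1}u_\tau$, so the numerator of the nonstationary-ESP ratio is explicit, while its windowed-variance denominator is controlled by $\{z_\tau\}$ up to a term that dies with the transient. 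Throughout I use $|u_\tau|\le1$ (the scalar $u_\tau$ being the one-dimensional counterpart of the orthogonal encoding $R(\mathbf u_\tau)$ in the correspondence), and, when $b\neq0$, the rescaling in Eq.~\eqref{eqn:mrc-b-equivalence} to reduce to $b=1$.

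\emph{Sufficiency.} Assume $0<|a|<1$ and $|b|>0$. Equation~\eqref{eqn:mrc_state_update} is the scalar instance of Eq.~\eqref{eqn:qrc_state_update} with $W\mapsto a$, $R(\mathbf u)\mapsto u$, $\mathbf b\mapsto b$, and the argument follows the proof of Thm.~\ref{thm:sufficient_ns_esp} in one dimension. The numerator satisfies $|x_t(s_0)-x_t(s_0')|\le|a|^{t}|s_0-s_0'|\to0$ because $|u_\tau|\le1$, so it vanishes geometrically on every input. For the denominator, the transient dies uniformly, so on any fixed window the variances of $\{x_\tau(s_0)\}$ and of $\{x_\tau(s_0')\}$ both converge to that of $\{z_\tau\}$, which is bounded since $|a|<1$; and on any time-stretch where $z$ is nearly constant, $z\approx b+auz$ forces $|z|$ to be bounded away from $0$ (as $|1-au|\le1+|a|<2$), whence $z_{t}-z_{t-1}=a(u_{t-1}z_{t-1}-u_{t-2}z_{t-2})\approx a z\,(u_{t-1}-u_{t-2})$ cannot be small when consecutive inputs stay apart. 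Hence persistent windowed variation of $\{u_\tau\}$ forces persistent windowed variation of $\{z_\tau\}$ over a small fixed window, the denominator is bounded below by a positive constant, and the ratio tends to $0$. Equivalently, one verifies the three hypotheses of Thm.~\ref{thm:sufficient_ns_esp}: $1-au_t\neq0$ since $|au_t|<1$; $u\mapsto b/(1-au)$ is injective because $\tfrac{d}{du}\tfrac{b}{1-au}=\tfrac{ab}{(1-au)^2}\neq0$ exactly when $ab\neq0$; and $\prod_t|au_t|\le|a|^t\to0$.

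\emph{Necessity.} The excluded regimes are $a=0$, $b=0$, and $|a|\ge1$, treated in turn. If $a=0$ then $x_{t+1}=b$ identically, so the output is the constant $b$ from $t=1$ on: numerator and windowed-variance denominator are both eventually $0$, the map has no fading memory, and the normalized quantity does not tend to $0$. If $b=0$ then $x_t=a^{t}x_0\prod_{\tau<t}u_\tau$, so $s_0=0$ makes the output identically $0$ and hence the denominator identically $0$, while for $s_0'\neq0$ and the alternating input $u_\tau=(-1)^\tau$ (which varies persistently and has $\big|\prod_{\tau<t}u_\tau\big|=1$) the numerator $|a|^{t}|s_0'|$ is positive, so the ratio is $+\infty$. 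Finally, if $|a|\ge1$ and $b\neq0$, rescale to $b=1$ and again take $u_\tau=(-1)^\tau$: then $|x_t(s_0)-x_t(s_0')|=|a|^{t}|s_0-s_0'|$, whereas $|x_t(s_0)|\le|a|^{t}\big(|s_0|+\tfrac1{|a|-1}\big)$ for $|a|>1$ and $x_t(s_0)$ is eventually $4$-periodic (hence bounded) for $|a|=1$; in either case the windowed standard deviation of the output is at most a constant multiple of $|a|^{t}$ (resp.\ bounded), so for $s_0\neq s_0'$ the ratio is bounded below by a positive constant and does not tend to $0$. In every excluded regime the nonstationary ESP fails.

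\emph{Main obstacle.} The delicate point is the denominator lower bound in the sufficiency direction: the windowed variance of $\{z_\tau\}$ is not monotone and could a priori collapse along a subsequence even while the input keeps varying. The remedy is the recursion $z_{t}=b+a u_{t-1}z_{t-1}$ together with the fact that $|z_t|$ cannot be near $0$ on a near-constant stretch; turning the ``$\approx$'' into inequalities via a compactness/$\varepsilon$--$\delta$ argument on the compact input set is the only real work, and it is precisely the one-dimensional shadow of why injectivity of $\mathbf u_t\mapsto\mathcal{G}^{-1}(\mathbf u_t)\mathbf b$ is a hypothesis of Thm.~\ref{thm:sufficient_ns_esp}.
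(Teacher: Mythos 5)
Your proposal is correct and follows essentially the same route as the paper's proof: the explicit solution of the recursion for the contraction of state differences, the fixed point $b/(1-au_t)$ (the scalar analogue of $\mathcal{G}^{-1}(\mathbf{u}_t)\mathbf{b}$) for the variance lower bound, and a case split on $a=0$, $0<|a|<1$, $|a|=1$, $|a|>1$. If anything you are more careful than the paper in the necessity direction, where you supply an explicit persistently varying witness input $u_\tau=(-1)^\tau$ and actually bound the normalizing denominator in the $|a|\ge 1$ and $b=0$ cases rather than arguing from the unnormalized state difference alone.
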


This theorem is a simplified version of Thm.~\ref{thm:sufficient_ns_esp} in which $\sigma_{\mathrm{max}}(W)$ is replaced by $|a|$. In addition, the injectiveness of $(I - WR(\mathbf{u}_t))\mathbf{b}$ is replaced by the injectiveness of $(1 - au_t)b$, which is satisfied if $|b| > 0$. Therefore, $|b| > 0$ is the simplified version of the injectiveness condition.

\begin{cor}{Result of boundedness in mRC}
\label{cor:mrc-bounded}

For an mRC, if the reservoir state space $\mathcal{S}$ is bounded, then $|b| > 0$ implies $|a| < 1$, which induces the nonstationary ESP of this mRC.
\end{cor}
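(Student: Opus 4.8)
The plan is to prove the contrapositive of the first implication: assuming $|b|>0$ together with $|a|\ge 1$, I will exhibit an input sequence and an initial state whose trajectory leaves every bounded set, which is incompatible with $\mathcal{S}$ being bounded. The device is to drive the reservoir with the \emph{constant} input $u_t\equiv u^\ast$, where $u^\ast\in\mathcal{X}$ is chosen so that $au^\ast=|a|$; for the natural mRC input range $\mathcal{X}=[-1,1]$ inherited from the QRC correspondence this is just $u^\ast=\mathrm{sign}(a)$, and in general it suffices that $\mathcal{X}$ contains a point of the appropriate sign with absolute value at least $1/|a|\le 1$. Under this input, Eq.~\eqref{eqn:mrc_state_update} collapses to the scalar affine recursion $x_{t+1}=|a|\,x_t+b$, whose solution is already contained in Eq.~\eqref{eqn:mrc-general}: specializing there gives $x_t=x^\ast+|a|^{t}(x_0-x^\ast)$ with $x^\ast=b/(1-|a|)$ when $|a|>1$, and $x_t=x_0+tb$ when $|a|=1$.

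From these closed forms the divergence is immediate. If $|a|=1$ then $|x_t|=|x_0+tb|\to\infty$ for \emph{every} initial state, since $|b|>0$. If $|a|>1$ then $|x_t|\to\infty$ whenever $x_0\ne x^\ast$; such an initial state is available because a genuine reservoir state space is not a single point, and in any case $b\ne x^\ast$ (the equality $b=x^\ast$ would force $1-|a|=1$, i.e.\ $a=0$), so one may instead first apply any other input value to step off the fixed point. Either way the orbit escapes every ball, contradicting boundedness of $\mathcal{S}$; by contraposition, bounded $\mathcal{S}$ and $|b|>0$ force $|a|<1$.

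It then remains only to invoke Thm.~\ref{thm:mrc-ns-esp}: with $0<|a|<1$ and $|b|>0$ the mRC has nonstationary ESP, the sole caveat being that the degenerate case $a=0$ is excluded as a trivially input-independent reservoir (for which the output variance entering the denominator of the nonstationary-ESP condition vanishes and the property is vacuous). I expect the main obstacle to be not the core estimate — which is merely the elementary instability of a scalar affine map of slope $\ge 1$ with nonzero offset — but the careful statement of hypotheses: making precise what ``$\mathcal{S}$ bounded'' means (invariance of a bounded set under all admissible inputs), ensuring the admissible input set $\mathcal{X}$ actually contains a suitable $u^\ast$, and cleanly handling the fixed-point and $a=0$ corner cases when handing off to Thm.~\ref{thm:mrc-ns-esp}.
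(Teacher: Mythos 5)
Your proof takes essentially the same route as the paper's: both argue by contraposition, driving the reservoir with a constant input so that the update collapses to a scalar affine map with slope of modulus $\ge 1$ and nonzero offset, whose orbit escapes every bounded set, and then hand off to Thm.~\ref{thm:mrc-ns-esp}. Your version is in fact slightly more careful than the paper's (which fixes $u_t\equiv 1$ and bounds $x_t \ge a^t x_0 + tb$, a step that is shaky for negative $a$ or negative $b$): choosing $u^\ast=\mathrm{sign}(a)$, treating the $|a|>1$ fixed-point case, and flagging the $a=0$ degeneracy are all worthwhile refinements rather than departures.
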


Note that, based on the probability distribution from which $\{u_t\}$ are sampled, $|a| > 1$ cases can also have fading memory. This is because the multiplication of the input sequence helps in the convergence of the state difference on average, even if $|a| > 1$. For instance, if $\{u_t\}$ is drawn from a uniform distribution of $[-1, 1]$, because $\mathbb{E}(u_t^2) = \frac{1}{3}$, $0 < |a| < \sqrt{3}$ should satisfy nonstationary ESP. We have the following analytical form of memory capacity in mRC:

\begin{figure}[t]
\centering
\includegraphics[width=0.8\hsize]{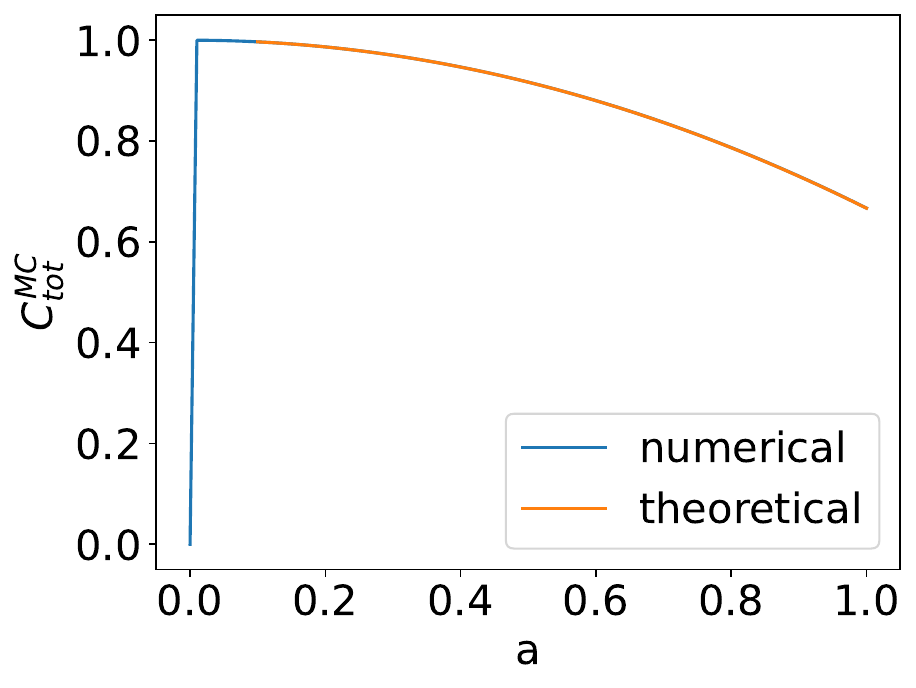}
\caption{Theoretical and numerical $C_{tot}^{MC}$ of mRC. The numerical calculation exactly matches the theoretical result of Eq.~\eqref{eqn:mrc_mc}. The theoretical curve is drawn in $[0.1,1.0]$ instead of $[0, 1.0]$ to make the overlap visible. The numerical calculation of $C_{tot}^{MC}$ has been carried out for $a \in \{k/99 \mid k \in \{0,1,\dots,99\}\}$ and $b=0.5$.
}
\label{fig:mrc-mc-theory}
\end{figure}
\begin{rem}{MC of mRC}

\label{rem:mrc-ipc}
If nonstationary ESP holds for an mRC under uniform-random input sequences---namely, $0 < |a| <  \sqrt{3}$---then,
\begin{equation}
\label{eqn:mrc_mc}
\begin{aligned}
	C_{tot}^{MC} &= C_{0}^{MC} = 1 - \frac{a^2}{3},
\end{aligned}
\end{equation}
\end{rem}
The reason why $C_{tot}^{MC} = C_{0}^{MC}$ is that every term containing $u_{t-k}$ for $k\geq 1$ is multiplied by $u_{t-k'}$ of $k'\neq k$, as shown in Eq.~\eqref{eqn:mrc-general2}. The factor $\frac{1}{3}$ appears as a result of taking uniform distribution for our inputs $\{u_t\}$. Please refer to the appendix for the proof.

\subsubsection{Numerical results}
We have numerically examined the MC and IPC of mRC. Fig.~\ref{fig:mrc-mc-ipc} shows the MC and IPC of the mRC with different parameter configurations. We can observe that MC becomes larger as $a$ increases while IPC decreases, meaning that a larger $a$ makes the system more linear. The invariance of the information processing capability by changing $b$, which is implied by Eq.~\eqref{eqn:mrc-b-equivalence}, can also be observed. In Fig.~\ref{fig:mrc-mc-theory}, we compare the theoretical MC in Eq.~\eqref{eqn:mrc_mc} and the numerical experimental result of $b =0.5$. The numerical relationship of MC and $b$ exactly matches the theoretical result.

As shown in Eq.~\eqref{eqn:mrc-general}, all degree $d$ non-linear terms have the form of $\prod_{\tau=t}^{t-d-1}u_\tau$. Those types of non-linearities also present in the output signals of QRC, and we expect that those RCs are appropriate for temporal tasks that have such types of non-linearities.

\subsection{Additional numerical experiments}
\label{subsec:num_exp}
\subsubsection{Reproduction of the dynamical phase transition \cite{Mart_nez_Pe_a_2021}}
\label{subsubsec:dymical_phase}
\begin{figure}[t]
\centering

\subfloat[NS ESP]{
\label{subfig:sk_esp_vanilla}
	\includegraphics[width=0.8\hsize]{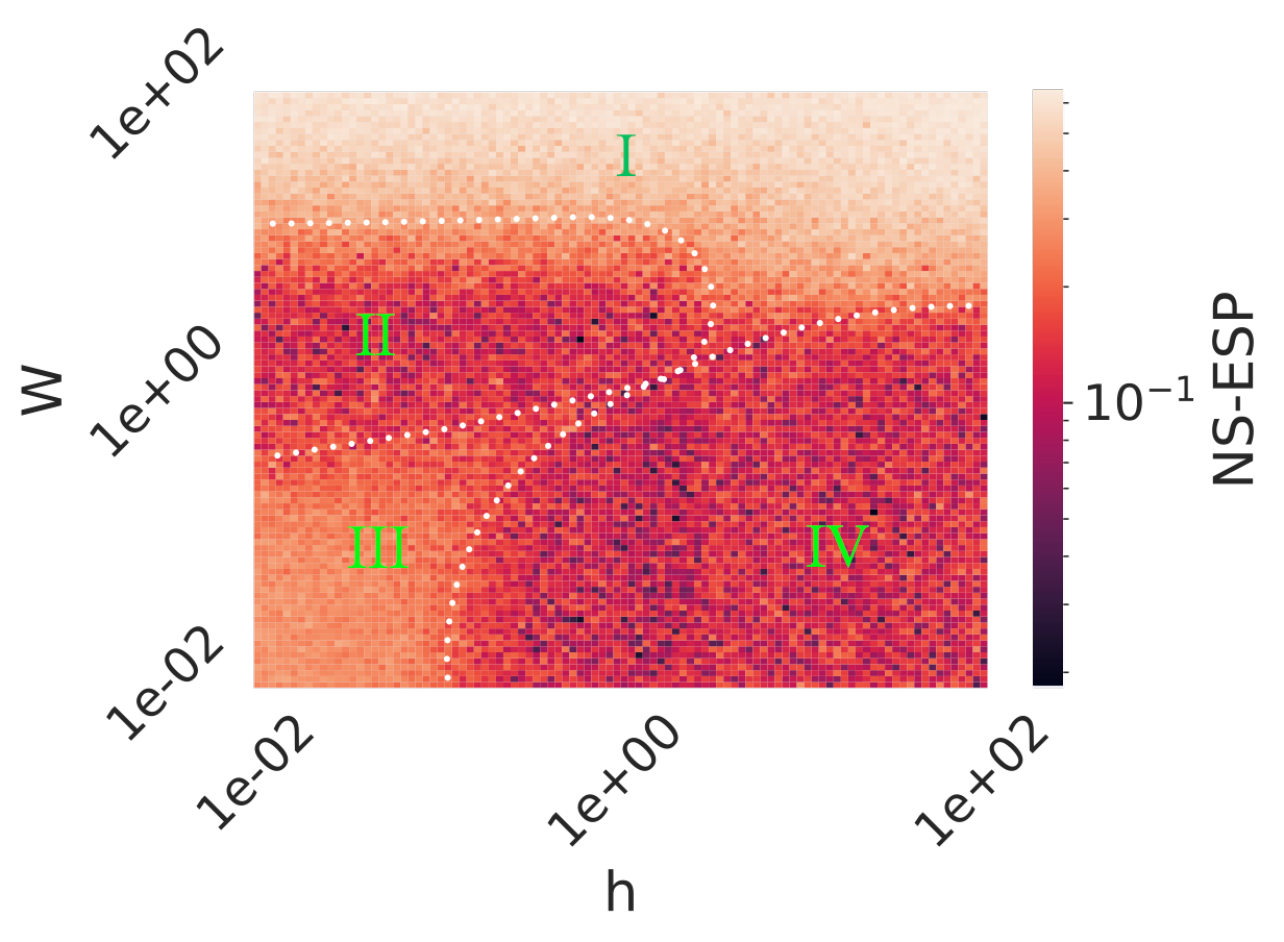}
}
\par
\subfloat[$\rho(W)$]{
\label{subfig:sk_rho_grid}
	\includegraphics[width=0.8\hsize]{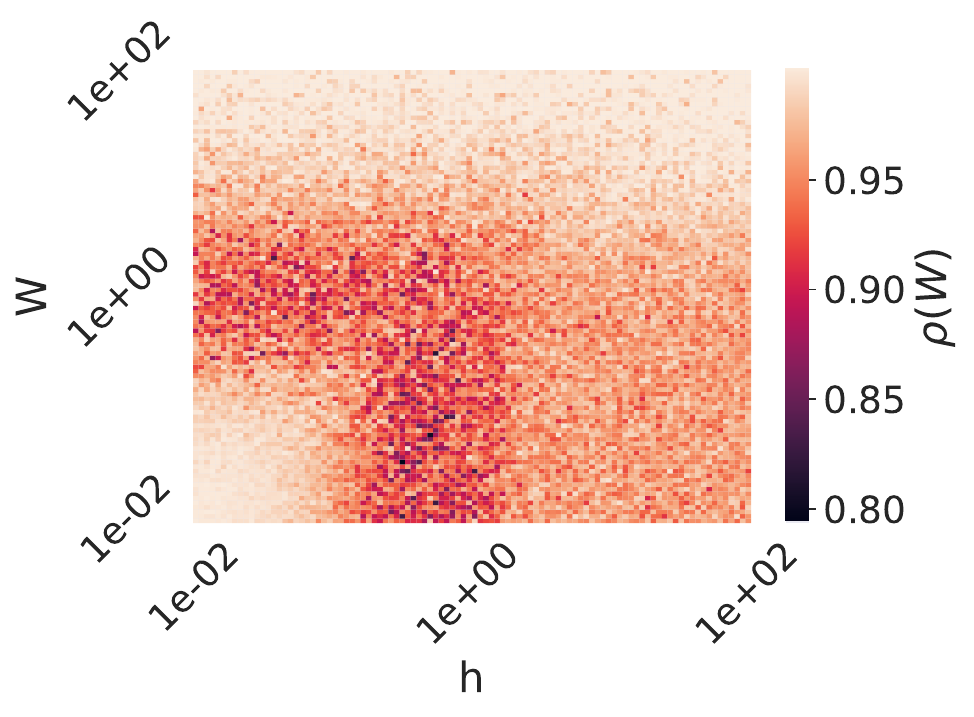}
}

\caption{Numerical simulation analysis of a 2-qubit system whose PTM composed of SK Hamiltonians and single qubit reset operations that is prescribed in Eq.~\eqref{eqn:sk_hamiltonian} and \eqref{eqn:reset_unitary_input_sk}. (a) Nonstationary ESP indicator: the left-hand side of Eq.~\eqref{eqn:esp_ext2} computed by random inputs $\{\mathbf{u}_t \sim \mathrm{Uniform}([-1, 1])\}$ such that $|\{\mathbf{u}_t\}|=200$ and $w=10$. (b) Spectral radius distribution. Letters I--IV in (a) indicate the phase diagram discussed in existing work \cite{Mart_nez_Pe_a_2021}. White dotted lines in (a) were added by this authors for visual aids to distinguish each region.
 }
  \label{fig:sk}
\end{figure}
We conducted additional numerical experiments using the same setup as in Sec.~\ref{subsec:esp}. Here, we tried to replicate the dynamical phase transition \cite{Mart_nez_Pe_a_2021} intrinsic to this system by the spectral radius of PTMs.  

We computed the spectral radius $\rho(W)$ of PTMs and the nonstationary ESP indicator (Eq.~\ref{eqn:ns_esp_indicator}) for randomly sampled SK Hamiltonians (Eq.~\eqref{eqn:sk_hamiltonian}) for every configuration in $J_s, K \in \{10^{-2 + 4k/99} \mid k\in \{0,1,\dots,99\}\}$. Here, $w=20$ and $\|\{\mathbf{u}_t\}\| = 200$ were used for the nonstationary ESP indicator calculations.

The results are in Fig.~\ref{fig:sk}. Figure~\ref{subfig:sk_esp_vanilla} corresponds to the dynamical phase transition of QRC that has been explored in existing work \cite{Mart_nez_Pe_a_2021}. Each region of different fading memory characteristics is labeled as I to IV, as in the existing work \cite{Mart_nez_Pe_a_2021}. We can see that the nonstationary ESP indicator results in Fig.~\ref{subfig:sk_esp_vanilla} have similar phase structure corresponding to the spectral radius results in Fig.~\ref{subfig:sk_rho_grid}, which, in part, supports our theorem in which the spectral radius $\rho(W)$ takes an important role in the system's ESP.

\begin{figure}
\centering

\subfloat[NS ESP]{
\label{subfig:sk_esp_vanilla_disc}
	\includegraphics[width=0.7\hsize]{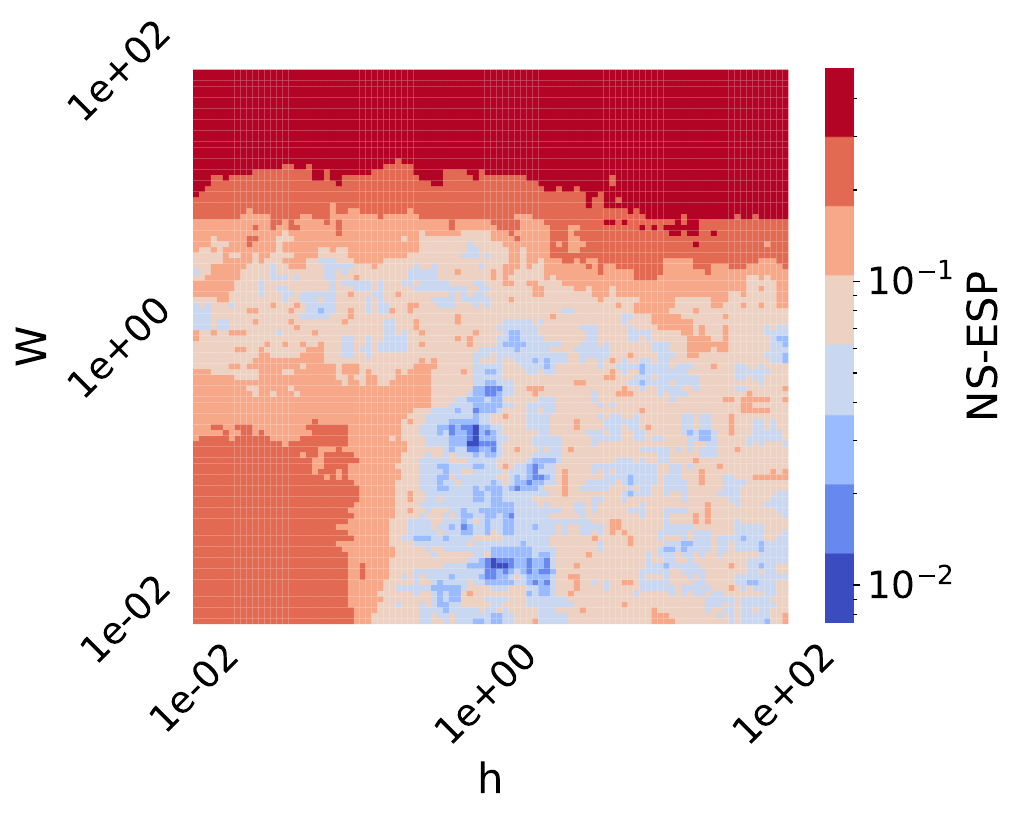}
}
\par
\subfloat[$\rho(W)$]{
\label{subfig:sk_rho_grid_disc}
	\includegraphics[width=0.7\hsize]{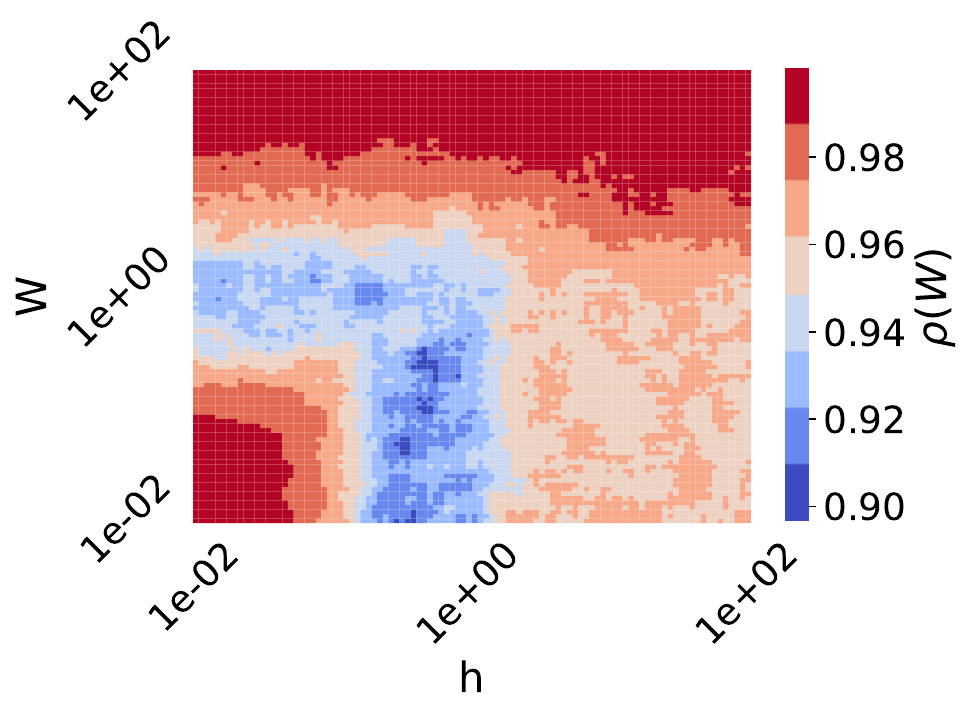}
}
\par
\subfloat[$\rho_{\mathrm{eff}}(W)$]{
\label{subfig:sk_rho_eff_grid_disc}
	\includegraphics[width=0.7\hsize]{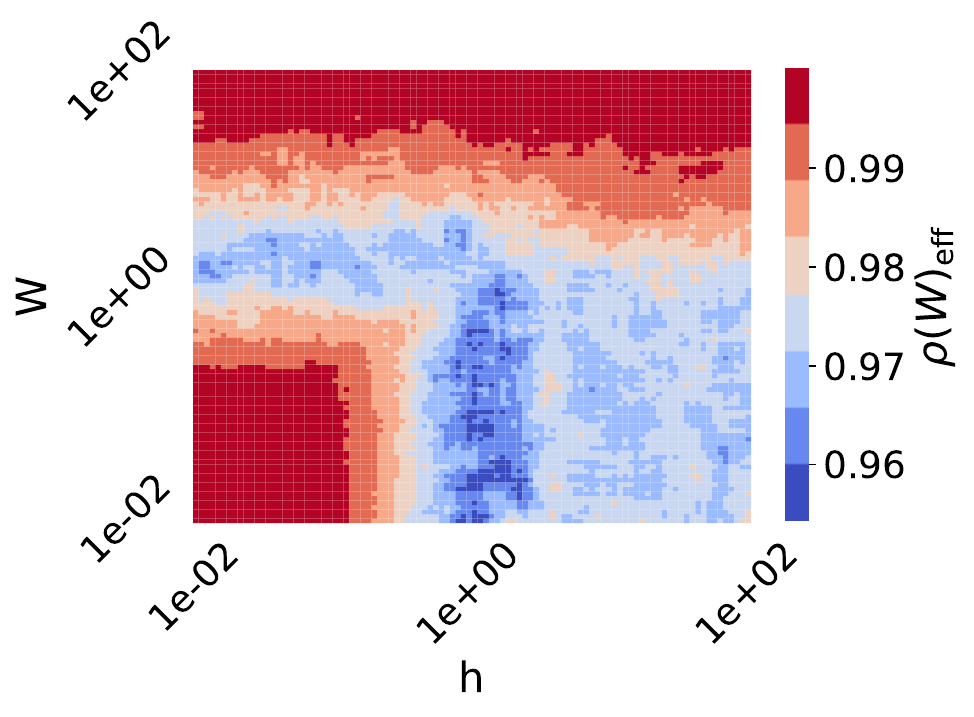}
}
\par
\subfloat[$\frac{\rho_{\mathrm{eff}}(W)}{\rho(W)}$]{
\label{subfig:sk_rho_eff_over_rho_grid_disc}
	\includegraphics[width=0.7\hsize]{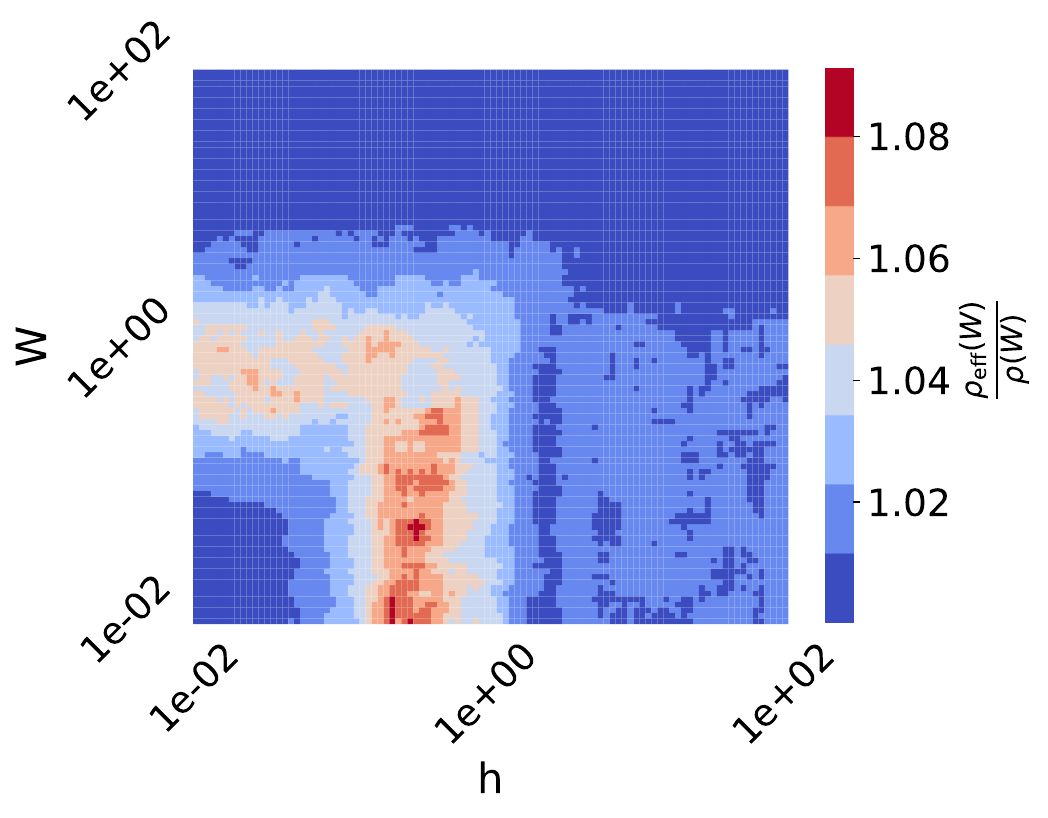}
}
\caption{
The discretized heatmap of convolved and geometrically averaged (a) the nonstationary ESP indicator, (b) $\rho(W)$, (c) $\rho_\mathrm{eff}(W)$ and (d) $\rho_\mathrm{eff}(W) / \rho(W)$.
}
  \label{fig:sk_disc}
\end{figure}
To further analyze the relationship between the nonstationary ESP and the spectral radius of PTM, the results of Fig.~\ref{fig:sk} was convolved and geometrically averaged by a kernel $K$ of the following form:
\begin{equation}
	K = \begin{pmatrix}
		1 & 1 & 1 & 1 & 1\\
		1 & 1 & 1 & 1 & 1\\
		1 & 1 & 0 & 1 & 1\\
		1 & 1 & 1 & 1 & 1\\
		1 & 1 & 1 & 1 & 1\\
	\end{pmatrix},
\end{equation}
and depicted by discretized heatmaps in Fig.~\ref{fig:sk_disc}. The results showed that, although $\rho(W)$  (Fig.~\ref{subfig:sk_rho_grid_disc}) and $\rho_\mathrm{eff}(W)$ (Fig.~\ref{subfig:sk_rho_eff_grid_disc}) have globally similar phase structures to the nonstationary ESP indicator (Fig.~\ref{subfig:sk_esp_vanilla_disc}), the fading memory of the parameter region indexed as IV was rather underestimated by $\rho(W)$. We argue that this was caused by the discrepancy between $\rho(W)$ and $\rho_\mathrm{eff}(W)$ in the region II, as shown in Fig.~\ref{subfig:sk_rho_eff_over_rho_grid_disc}, where $\rho_\mathrm{eff}(W)$, which represents fading memory strength under input-driven dynamics, can be underestimated by $\rho(W)$.

\begin{figure}[t]
\centering
\subfloat[$C_{tot}^{MC}$]{
\label{subfig:sk_mc_tot_grid}
\includegraphics[width=0.8\hsize]{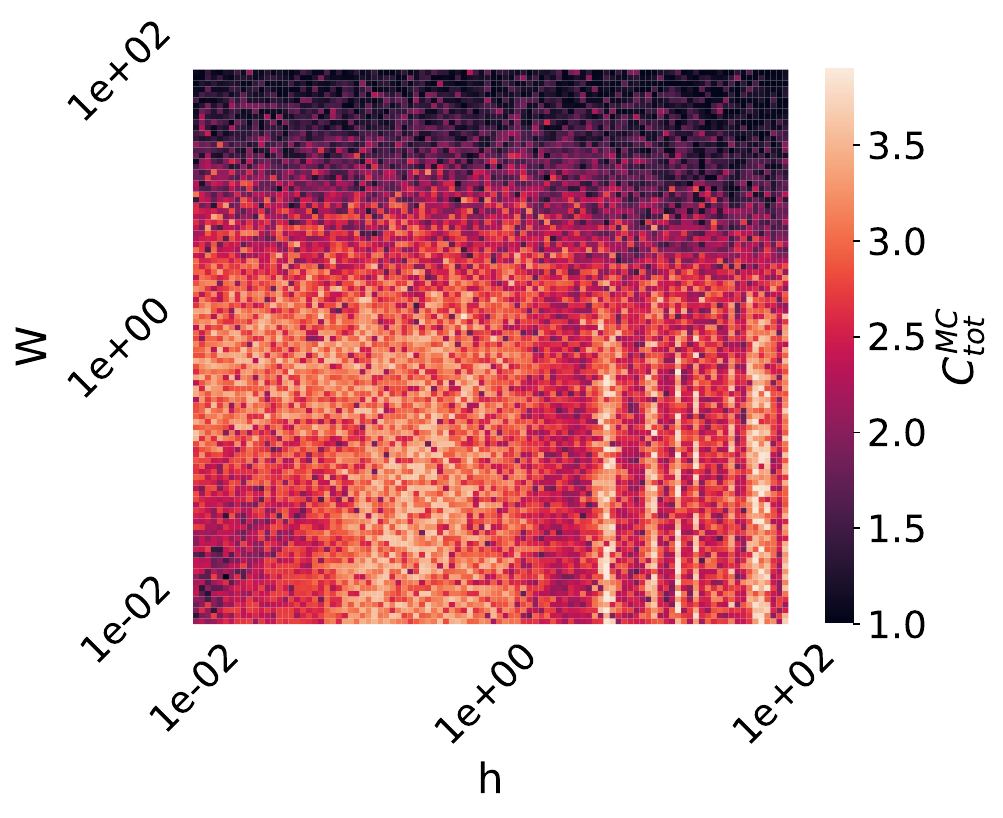}
}
\par
\hspace{0.3cm}\subfloat[Variance decay]{
\label{subfig:sk_var_decay}
\includegraphics[width=0.82\hsize]{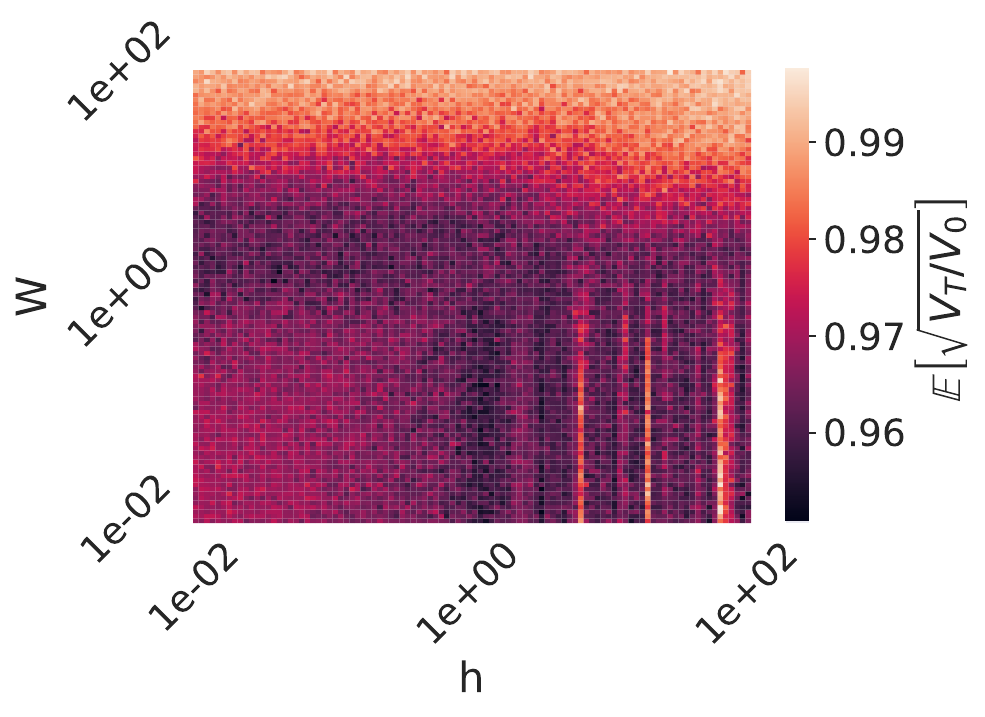}
}
\caption{
MC calculation results for the SK model. Except for the lower left, the low MC region corresponds to weak contractivity. (a) $C_{tot}^{MC}$, (b) Average of the variance decay calculated by Eq.~\eqref{eqn:var_decay}.
}
\label{fig:sk_mc_var_decay}
\end{figure}

\begin{figure}[t]
\centering

\subfloat[]{
\label{subfig:sk_mc1_rho}
	\includegraphics[width=0.8\hsize]{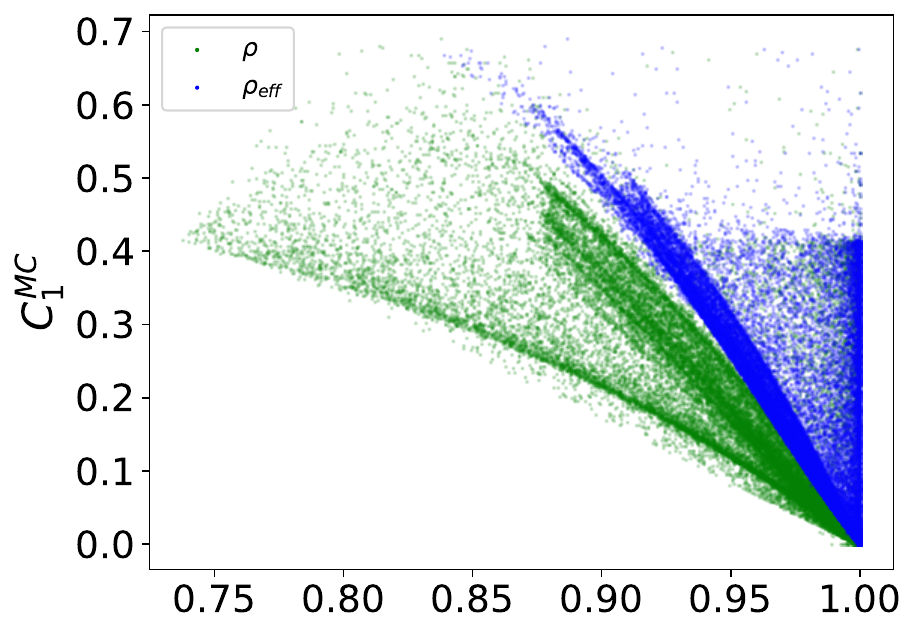}
}
\par
\subfloat[]{
\label{subfig:sk_mc_tot_rho}
	\includegraphics[width=0.8\hsize]{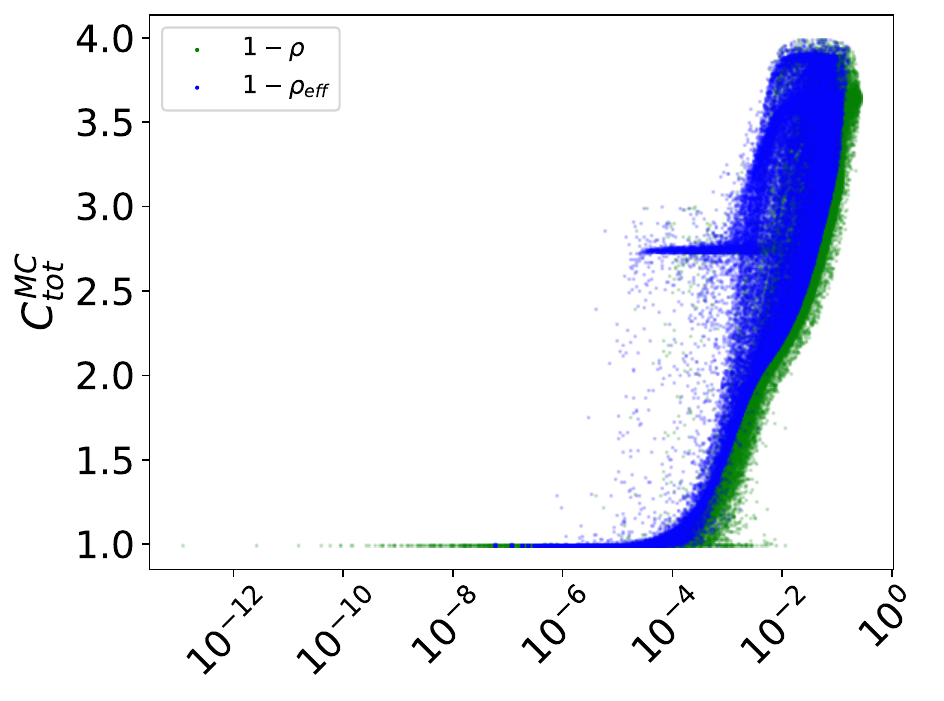}
}
\caption{(a) $C_{1}^{MC}$ by $\rho(W)$ (green) and $\rho_{\mathrm{eff}}(W)$ (blue). (b) $C_{tot}^{MC}$ by $1 - \rho(W)$ (green) and $1 - \rho_{\mathrm{eff}}(W)$ (blue). Each dot corresponds to different parameter configurations of the Hamiltonian in Eq.~\eqref{eqn:sk_hamiltonian}.
}
\label{fig:sk_mc}
\end{figure}

\subsubsection{Memory profile}
In addition, we numerically calculated the MC of QRCs with SK Hamiltonians for all of the configurations in Sec.~\ref{subsubsec:dymical_phase} to identify if there were any relationships between the nonstationary ESP and MC. First, we compared $C_{tot}^{MC}$ of each QRC with the variance decay of the following form:
\begin{equation}
\label{eqn:var_decay}
	\mathrm{Var}_{decay}(w, \{\mathbf{u}_t\}) \equiv \frac{\sqrt{\min\left[\overline{\mathrm{Var}}_w^t(f; s_0), \overline{\mathrm{Var}}_w^t(f; s'_0)\right]}}{\sqrt{\min\left[\overline{\mathrm{Var}}_w^w(f; s_0), \overline{\mathrm{Var}}_w^w(f; s'_0)\right]}}.
\end{equation}
Here, the numerator of the right-hand side of Eq.~\eqref{eqn:var_decay} quantifies windowed variance at time $t$, and the denominator quantifies windowed variance at time $w$. Both calculations use a time window of size $w$, and the variance decay, in total, describes how stationary the output signals are, assuming the mean of the output signals does not change.

This term was an important addition to the traditional ESP in the nonstationary ESP definition. We computed the variance decay using the same setup as in Sec.~\ref{subsubsec:dymical_phase}, the MCs were calculated using an input sequence of length $1.3 \times 10^6$, each of which was independently sampled from a uniform distribution within range $[-1, 1]$, where $3 \times 10^5$ outputs were discarded as washouts and the remaining $10^6$ outputs were used in the calculations. The results are depicted in Fig.~\ref{fig:sk_mc_var_decay}. There is a region where strong nonstationary ESP does not ensure relatively high $C_{tot}^{MC}$ compared with other surrounding regions, as shown in the lower right part of Fig.~\ref{subfig:sk_mc_tot_grid}. However, the high-variance-decay regions in the lower-right part of the Fig.~\ref{subfig:sk_var_decay} correspond to the high $C_{tot}^{MC}$ in Fig.~\ref{subfig:sk_mc_tot_grid}. We hypothesize that variance decay near one ensures stable dynamics and induces a stronger fading memory, provided it has nonstationary ESP.

In addition, Fig.~\ref{fig:sk_mc} shows that there exists an almost linear relationship between the lower bound of $C_{1}^{MC}$ and $\rho$/$\rho_{\mathrm{eff}}$, and an almost log-linear relationship between $C_{tot}^{MC}$ and $(1 - \rho)$/$(1-\rho_{\mathrm{eff}})$. These results are similar to those for mRC in Fig.~\ref{fig:mrc-mc-theory} in a sense that $\rho$, which is a corresponding parameter to $a$ in mRC, has a direct relationship with $C_{1}^{MC}$ or $C_{tot}^{MC}$, as in Rem.~\ref{rem:mrc-ipc}.

\section{Conclusions}

In the present paper, we theoretically analyzed a condition of a dissipative quantum system to satisfy nonstationary ESP and showed that it is related to the coherent interaction with its environment, and the spectral radius of the internal dynamics in PTM form. To make the analysis more straightforward for practical implementation, we divided the system dynamics into the internal dynamics part and the input encoding part. The results indicate that conditions necessary for the traditional ESP of QRC is similar to the conditions necessary for the traditional ESP of classical ESN in a sense that the spectral radius and Schur stability of PTM and recurrent weight are important, respectively for QRC and ESN. In addition, for the nonstationary ESP, QRC needs to have consistent input encoding method to ensure variable outputs under variable inputs. To demonstrate these theoretical results, we numerically showed that the nonstationary ESP indicators had an almost log-linear relationship with the effective input-driven spectral radius of PTM, using the SK Hamiltonian and reset-input encoding method. This suggests that the spectral property of PTM has huge importance  in the ESP of QRCs.

Because it is complex to analyze PTM in detail, we devised a simple one-dimensional RC model, mRC, and showed similar results to QRC cases. That is, model parameters of mRC, that are similar to spectral radius and coherence influx in QRC, determine the nonstationary ESP of mRC.  Specifically, the multiplicative factor $a$, which is similar to the spectral radius in QRC, determines the length of memory in mRC, and the additive factor $b$, which is similar to the coherence influx in QRC, ensures finite output signals under any inputs. Furthermore, the analytical form of $C_{tot}^{MC}$ using those parameters was provided for mRC.

In addition, motivated by the above results, which imply the spectral radius of PTM being important for the fading memory property of QRC, we replicated dynamical phase transition results \cite{Mart_nez_Pe_a_2021} by the spectral radius of PTM, using QRC governed by SK Hamiltonian and reset-input encoding. The results confirmed that the spectral radius of PTM can explain the fading memory of the QRC. Furthermore, we found that the lower bound of $C_{1}^{MC}$ and $C_{tot}^{MC}$ of QRC governed by SK Hamiltonian and reset-input encoding has an almost linear/log-linear relationship with the spectral radius, respectively. This suggests that there will be a possible analytical relationship between the linear memory capacity and the spectral radius of PTM in QRC, similar to the analytical relationship between the linear memory capacity and the parameter $a$ in mRC.

Our work provides a methodology for analyzing QRC using PTM and an experimentally checkable condition of a QRC having fading memory. Specifically, we showed that the ESP of a QRM is heavily dependent on the spectral radius of PTM describing the non-input-driven dynamics, the interaction between the system and its quantum coherent environment, and the input encoding method relative to the non-input-driven dynamics. This work will lead to a better understanding of QRC and other information processing techniques that exploit open quantum systems.

\section{Acknowledgements}
This work was supported by the MEXT Quantum Leap Flagship Program (MEXT Q-LEAP) grant no. JPMXS0120319794.

\begin{appendix}

\section{Pauli transfer matrix formulation}

\begin{dfn} {Multi-qubit Bloch vector in Pauli transfer matrix formulation (the coherence vector) }\\
\label{dfn:multi-qubit-bloch}
A set of all $N$-qubit physical state $\mathbf{r}(\rho)$ and $|\rho\rangle\rangle$ in a Pauli transfer matrix formulation, $\mathcal{Q}(N)$ and $\hat{\mathcal{Q}}(N)$, are respectively defined as 
\begin{equation}
\label{eqn:q_n}
\begin{aligned}
\mathcal{Q}(N) &= \bigg\{\mathbf{r}(\rho) \in \mathbb{R}^{4^N-1}\ \text{s.t}\ \mathbf{r}(\rho)_i = \mathrm{tr}\left(P_{i}\rho\right)  \bigg|\\
&i \geq 1,\ \rho \in \mathbb{C}^{2^N \times 2^N}\ \text{s.t}\ \rho = \rho^\dagger,\ \mathrm{tr}(\rho) = 1\ \mathrm{and}\ \rho \succeq 0\bigg\},\\
\hat{\mathcal{Q}}(N) &= \bigg\{|\rho\rangle\rangle \equiv \begin{pmatrix}
	1\\
	\mathbf{r}(\rho)
\end{pmatrix} \bigg| \mathbf{r}(\rho) \in \mathcal{Q}(N)\bigg\},
\end{aligned}
\end{equation}
\end{dfn}
$\mathbf{r}(\rho)$ above is called a multi-qubit Bloch vector, or a coherence vector. For this manuscript, $\mathbf{r}(\rho)$ is written as $\mathbf{r}$ for simplicity. In addition, $|\rho\rangle\rangle^T$ is written as $\langle\langle\rho|$ because of the analogy to the bracket notation of quantum mechanics.

\begin{lem} {System dynamics under Pauli transfer matrix}
\label{lem:state-update-pauli}

Let an $N$-qubit quantum channel be $\mathcal{E}$ and the corresponding Pauli transfer matrix be $\hat{O}_{\mathcal{E}}: \hat{\mathcal{Q}}(N) \to \hat{\mathcal{Q}}(N)$; then,
	\begin{enumerate}
	\item $\hat{O}_{\mathcal{E}}$ can be written as $\hat{O}_{\mathcal{E}} = \begin{pmatrix}
	1 & \mathbf{0}^T\\
	\mathbf{b} & W
\end{pmatrix}$, where $\mathbf{b} \in \mathbb{R}^{4^N-1}$ and $W \in \mathbb{R}^{(4^N -1) \times (4^N -1)}$.
	\item Given an initial state $|\rho^{(0)}\rangle\rangle \equiv \begin{pmatrix}
	1 \\
	\mathbf{r}^{(0)} \end{pmatrix} \in \hat{\mathcal{Q}}(N)$, the channel application can be written as 
	\begin{equation}
		|\rho^{(n+1)}\rangle\rangle = \hat{O}_{\mathcal{E}}|\rho^{(n)}\rangle\rangle,
	\end{equation}
	where 
	\begin{equation}
	|\rho^{(n)}\rangle\rangle = \begin{pmatrix}
	1\\
		\mathbf{r}^{(n)}
\end{pmatrix} = \begin{pmatrix}
	1 \\
	\sum_{i=0}^{n-1}W^i\mathbf{b} + W^{n}\mathbf{r}^{(0)}
	
\end{pmatrix}.
	\end{equation}
\end{enumerate}
\end{lem}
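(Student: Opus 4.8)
The plan is to establish the two parts in order: part~1 by reading off the entries of $\hat{O}_{\mathcal{E}}$ from its defining formula \eqref{eqn:transfer_matrix_kraus}, and part~2 by a short induction on $n$ built on the block structure from part~1. For part~1, I would isolate the zeroth row and column of $\hat{O}_{\mathcal{E}}$. Since $P_0 = \bigotimes_j I^{(j)}$ and the Kraus operators of a channel satisfy the trace-preservation identity $\sum_k K_k^\dagger K_k = I$, cyclicity of the trace gives, for the zeroth row, $(\hat{O}_{\mathcal{E}})_{0,j} \propto \mathrm{tr}\bigl(\sum_k K_k P_j K_k^\dagger\bigr) = \mathrm{tr}\bigl(P_j \sum_k K_k^\dagger K_k\bigr) = \mathrm{tr}(P_j)$, which equals $2^N$ for $j=0$ and $0$ for every non-identity Pauli string ($j\geq 1$); with the normalization that fixes $(\hat{O}_{\mathcal{E}})_{0,0}=1$ this is exactly the row $(1,\ \mathbf{0}^T)$. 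Defining $\mathbf{b}$ to be the column $((\hat{O}_{\mathcal{E}})_{i,0})_{i\geq 1}\in\mathbb{R}^{4^N-1}$ and $W$ the lower-right block $((\hat{O}_{\mathcal{E}})_{i,j})_{i,j\geq 1}\in\mathbb{R}^{(4^N-1)\times(4^N-1)}$ yields the claimed form. It is worth recording at this point that the action $|\mathcal{E}(\rho)\rangle\rangle = \hat{O}_{\mathcal{E}}|\rho\rangle\rangle$ is a genuine matrix--vector product: this follows from the reconstruction identity $\rho = \frac{1}{2^N}\sum_i P_i|\rho\rangle\rangle_i$ of Lemma~\ref{lem:mq_bloch} together with linearity of $\mathcal{E}$, and the shape of the first row is consistent with $\mathrm{tr}(\mathcal{E}(\rho)) = \mathrm{tr}(\rho) = 1$ for every state.

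For part~2, the block form immediately gives the one-step recursion
\begin{equation}
\begin{pmatrix} 1 \\ \mathbf{r}^{(n+1)} \end{pmatrix} = \begin{pmatrix} 1 & \mathbf{0}^T \\ \mathbf{b} & W \end{pmatrix} \begin{pmatrix} 1 \\ \mathbf{r}^{(n)} \end{pmatrix} = \begin{pmatrix} 1 \\ \mathbf{b} + W\mathbf{r}^{(n)} \end{pmatrix},
\end{equation}
so $\mathbf{r}^{(n+1)} = \mathbf{b} + W\mathbf{r}^{(n)}$. I would then induct on $n$: for $n=0$ the empty sum plus $W^0\mathbf{r}^{(0)}$ reproduces $\mathbf{r}^{(0)}$; assuming $\mathbf{r}^{(n)} = \sum_{i=0}^{n-1} W^i \mathbf{b} + W^n \mathbf{r}^{(0)}$, substitution into the recursion and reindexing $W\cdot W^i = W^{i+1}$ gives $\mathbf{r}^{(n+1)} = \mathbf{b} + \sum_{i=0}^{n-1} W^{i+1}\mathbf{b} + W^{n+1}\mathbf{r}^{(0)} = \sum_{i=0}^{n} W^i \mathbf{b} + W^{n+1}\mathbf{r}^{(0)}$, closing the induction.

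No step presents a real obstacle; the only point requiring care is the normalization convention for the Pauli strings in \eqref{eqn:transfer_matrix_kraus} (whether the factor $2^{-N}$ is placed in the definition of $\hat{O}_{\mathcal{E}}$ or absorbed into normalized $P_i$), since that is what makes $(\hat{O}_{\mathcal{E}})_{0,0}=1$ rather than $2^N$. Once that is pinned down, the argument is bookkeeping: the substantive ingredients --- trace preservation of $\mathcal{E}$, tracelessness of non-identity Pauli strings, and the Bloch-vector reconstruction formula of Lemma~\ref{lem:mq_bloch} --- are all already available.
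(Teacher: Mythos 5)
Your proposal is correct and follows essentially the same route as the paper: trace preservation plus cyclicity of the trace for the first row, the Bloch reconstruction identity of Lemma~\ref{lem:mq_bloch} to identify $\hat{O}_{\mathcal{E}}|\rho\rangle\rangle$ with $|\mathcal{E}(\rho)\rangle\rangle$, and iteration of the affine recursion for the closed form. You are in fact slightly more careful than the paper, which writes $\mathrm{tr}(P_j)=\delta_{j,0}$ and thereby glosses over exactly the $2^{-N}$ normalization convention you flag, and which leaves the induction for $\mathbf{r}^{(n)}$ implicit.
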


\begin{proof}
	1.  $\mathrm{tr}\bigg(I \sum_k K_k P_j K_k^\dagger \bigg) = \mathrm{tr}\bigg(\sum_k K_k^\dagger I K_k P_j \bigg) = \mathrm{tr}\bigg(P_j\bigg) = \delta_{j,0}$. Therefore, the first row, except for the first column, is all zero.

	2. Given $\{K_k\}$ as a Kraus representation of a quantum channel $\mathcal{E}$ is,
	\begin{equation}
		|\mathcal{E}(\rho)\rangle\rangle_i = \sum_k\mathrm{tr}\left(P_i K_k\rho K_k^\dagger\right).
	\end{equation}
	On the other hand, from Eq.~\eqref{eqn:transfer_matrix_kraus},
	
	\begin{equation}
		\hat{O}_\mathcal{E}|\rho\rangle\rangle_i = \sum_k\sum_j \mathrm{tr}\left(P_i K_k P_j K_k^\dagger\right)\mathrm{tr}\left(P_j \rho\right).
	\end{equation}
	From the first statement of Lem.~\ref{lem:mq_bloch}, we can replace $\sum_j P_j\mathrm{tr}(P_j\rho)$ with $\rho$ and obtain
	\begin{equation}
		\hat{O}_\mathcal{E}|\rho\rangle\rangle_i = \sum_k\mathrm{tr}\left(P_i K_k \rho K_k^\dagger\right) = |\mathcal{E}(\rho)\rangle\rangle_i.
	\end{equation}

\end{proof}

\begin{lem}{Property of a Pauli transfer matrix}
\label{lem:transfer_matrix}

An $N$-qubit Pauli transfer matrix
\begin{equation*}
	\hat{O}_{\mathcal{E}} = \begin{pmatrix}
	1 & \mathbf{0}^T\\
	\mathbf{b} & W
\end{pmatrix} : \hat{\mathcal{Q}}(N) \to \hat{\mathcal{Q}}(N)
\end{equation*}
 corresponding to a quantum channel ${\mathcal{E}}$ has the following properties:
\begin{enumerate}
	\item $\mathbf{b} \in \mathcal{Q}(N)$.

	\item $\rho(W) \leq 1$.
	\item $\rho(W) < 1 \Rightarrow \forall \mathbf{r},\ (\hat{O}_{\mathcal{E}})^n \mathbf{r} \underset{n \to \infty}\to (I - W)^{-1}\mathbf{b}$.
	\item For an eigenvalue of $W$: $\lambda$, if $|\lambda| = 1$, then $\mu_W(\lambda) = \gamma_W(\lambda)$.
	\item 
		$\mathbf{b} = \mathbf{0} \Leftrightarrow \mathcal{E} \text{ is unital}$.

\end{enumerate}
Here, $\rho(A) \equiv \max_i |\lambda_i(A)|$ is the spectral radius of the matrix $A$, and $\mu_A(\lambda)$ and $\gamma_A(\lambda)$ are the algebraic and geometric multiplicity of a matrix $A$ with respect to the eigenvalue $\lambda$.

\end{lem}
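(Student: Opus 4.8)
The plan is to treat the five items in the order (5), (1), then (2), (4), (3), since (2) and (4) both rest on a single power-boundedness estimate for $W$ and (3) is an easy consequence once $\rho(W)<1$ is available. The common starting observation is that the maximally mixed state $\rho_\ast = I/2^N$ has coherence-vector representation $|\rho_\ast\rangle\rangle = (1,\mathbf{0})^T$ by Eq.~\eqref{eqn:rho_ket_ket}, so by Lem.~\ref{lem:state-update-pauli} we have $\hat{O}_{\mathcal{E}}\,|\rho_\ast\rangle\rangle = (1,\mathbf{b})^T = |\mathcal{E}(\rho_\ast)\rangle\rangle$. For (1), the right-hand side is the coherence vector of $\mathcal{E}(\rho_\ast)$, which is a legitimate density matrix because $\mathcal{E}$ is CPTP; hence $\mathbf{b}\in\mathcal{Q}(N)$ by definition of $\mathcal{Q}(N)$. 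For (5), the same identity shows $\mathbf{b}=\mathbf{0}$ if and only if $\mathcal{E}(\rho_\ast)=\rho_\ast$, equivalently $\mathcal{E}(I)=I$ by linearity, i.e.\ $\mathcal{E}$ is unital; this is precisely Rem.~\ref{rem:unital}.

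The heart of the argument is the claim $\sup_n \sigma_{\mathrm{max}}(W^n) < \infty$. Because $\rho_\ast \succ 0$, every density matrix sufficiently close to $\rho_\ast$ is still positive semidefinite, so $\mathbf{0}$ is an interior point of $\mathcal{Q}(N)$: there is $\delta>0$ with $\{\mathbf{v} : \|\mathbf{v}\| \le \delta\} \subset \mathcal{Q}(N)$. Fix a unit vector $\mathbf{v}$ and run the PTM dynamics from the two physical initial coherence vectors $\delta\mathbf{v}$ and $\mathbf{0}$. By Lem.~\ref{lem:state-update-pauli} the two trajectories satisfy $\mathbf{r}^{(n)} - \mathbf{r}'^{(n)} = W^n(\delta\mathbf{v})$, and since both $\mathbf{r}^{(n)}$ and $\mathbf{r}'^{(n)}$ are coherence vectors of physical states (the images under $\mathcal{E}^n$ of density matrices), Lem.~\ref{lem:mq_bloch}(3) gives $\|W^n(\delta\mathbf{v})\| \le \|\mathbf{r}^{(n)}\| + \|\mathbf{r}'^{(n)}\| \le 2c_N$. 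Dividing by $\delta$ and taking the supremum over unit $\mathbf{v}$ yields $\sigma_{\mathrm{max}}(W^n) \le 2c_N/\delta$ for all $n$.

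Power-boundedness immediately gives (2) and (4). If $W$ had an eigenvalue $\lambda$ with $|\lambda|>1$, then picking a real vector from the real or imaginary part of a corresponding eigenvector on which $W$ acts with modulus $|\lambda|$ would force $\|W^n(\cdot)\| \to \infty$, a contradiction; hence $\rho(W)\le 1$, which is (2). If some eigenvalue $\lambda$ with $|\lambda|=1$ had a nontrivial Jordan block (size $m\ge 2$), the powers $W^n$ would have entries growing like $n^{m-1}$, again contradicting boundedness; hence every eigenvalue on the unit circle is non-defective, i.e.\ $\mu_W(\lambda)=\gamma_W(\lambda)$ whenever $|\lambda|=1$, which is (4). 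Finally, for (3): if $\rho(W)<1$ then $1\notin\mathrm{spec}(W)$, so $I-W$ is invertible, the Neumann series $\sum_{i\ge 0}W^i$ converges to $(I-W)^{-1}$, and $W^n\to 0$; by Lem.~\ref{lem:state-update-pauli}, $\mathbf{r}^{(n)} = \sum_{i=0}^{n-1}W^i\mathbf{b} + W^n\mathbf{r}^{(0)} \to (I-W)^{-1}\mathbf{b}$ regardless of the initial state, and the leading coordinate stays $1$, which is statement (3).

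The step I expect to be the main obstacle is the power-boundedness estimate underlying (2) and (4): eigenvectors of $W$ are not themselves physical coherence vectors, so one cannot simply iterate the channel on an eigenvector, and one must instead exploit either the convexity and interiority of $\mathcal{Q}(N)$ around the maximally mixed state or the difference-of-trajectories trick above, together with the routine passage to real and imaginary parts for complex eigenvalues. Once that estimate is in hand, the remaining items are standard linear algebra together with the earlier lemmas.
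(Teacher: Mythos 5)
Your proof is correct and follows essentially the same route as the paper's: items (1) and (5) via the action on the maximally mixed state, (2) and (4) by contradiction from boundedness of trajectories, and (3) from the geometric-series form of $\mathbf{r}^{(n)}$. Your intermediate step establishing $\sup_n \sigma_{\mathrm{max}}(W^n)<\infty$ explicitly, via the interior ball around $\mathbf{0}$ in $\mathcal{Q}(N)$ and the difference-of-trajectories identity, is in fact a cleaner justification than the paper's appeal to a physical state ``non-orthogonal to $\vec\lambda_{\max}$,'' but it is a refinement of the same argument rather than a different one.
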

\begin{proof}
1. An application of $\hat{O}$ to a completely mixed state as $\hat{O}\begin{pmatrix}
	1\\
	\mathbf{0}
\end{pmatrix} = \begin{pmatrix}
	1\\
	\mathbf{b}
\end{pmatrix}$ must yield a physical state. Therefore,  $\mathbf{b} \in \mathcal{Q}(N)$.

2. If $\rho(W) > 1$, selecting a physical state $c_N\mathbf{r}^{(0)}$ that is non-orthogonal to $\vec{\mathbf{\lambda}}_{max}$, one of the eigenvectors corresponding to the maximum eigenvalue, reads $\|\mathbf{r}^{(n)}\| \underset{n\to \infty}\to \infty$, and it is always possible. Therefore, $\rho(W) \leq 1$ by contradiction.
 
3. 
Let $A \equiv \sum_{i=0}^{n-1}W^i\mathbf{b}$; then, $A - WA = (I-W)A = \mathbf{b} - W^n \mathbf{b} = (I - W^n)\mathbf{b}$. Therefore, $A = (I-W)^{-1}(I - W^n) \mathbf{b}$. If $\rho(W) < 1$, $I - W$ is nonsingular and $W^n\mathbf{b} \underset{n\to \infty}\to 0$. Therefore,
\begin{equation}
\begin{aligned}
	\mathbf{r}^{(n)} &= \left(I - W\right)^{-1}(I - W^n)\mathbf{b} + W^n\mathbf{r}^{(0)} \\
		&\underset{n \to \infty} \to \left(I - W\right)^{-1}\mathbf{b} \equiv \mathbf{r}_\infty.
\end{aligned}
\end{equation}

	4. If $\exists i\ \textrm{s.t.}\ \mu_R(\lambda_i) \neq \gamma_R(\lambda_i)$, then there exists a Jordan canonical form $W = MJM^{-1}$, having a Jordan block such as $J_{\lambda_i} =\begin{pmatrix}
		e^{i\theta} & 1 & \cdots\\
		0 & e^{i\theta} & \cdots\\
		0 & 0 & \ddots\\
	\end{pmatrix}$. $n$ times application of $\hat{O}$ leads to a term $J_{\lambda_i}^n = \begin{pmatrix}
				e^{in\theta} & ne^{i(n-1)\theta} & \cdots\\
		0 & e^{in\theta} & \cdots\\
		0 & 0 & \ddots\\
	\end{pmatrix} \underset{n \to \infty} \to \begin{pmatrix}
				\cdot & \infty & \cdots\\
		0 & \cdot & \cdots\\
		0 & 0 & \ddots\\
	\end{pmatrix}$. However, $\forall n,\ \|\mathbf{r}^{(n)}\| \leq c < \infty$, so $\forall i,\ \mu_R(\lambda_i) = \gamma_R(\lambda_i)$ by contradiction. 
	
	5.  First, $\mathbf{r}(\frac{I}{2^N}) = \mathbf{0}$, because $\forall i\geq 0,\ \mathrm{tr}(IP_i) = 0$. Therefore, if, $\mathbf{b}=\mathbf{0}$, then 
	\begin{equation}
	\label{eqn:unital}
		\hat{O}_{\mathcal{E}}\begin{pmatrix}
			1\\
			\mathbf{0}
		\end{pmatrix} = \begin{pmatrix}
			1\\
			\mathbf{b}
		\end{pmatrix} = \begin{pmatrix}
			1\\
			\mathbf{0}
		\end{pmatrix}.
	\end{equation} That is, $\mathcal{E}$ is unital.\\
	 Conversely, if $\mathcal{E}$ is unital, then $\mathbf{b} = 0$ is obvious from Eq.~\eqref{eqn:unital}. Especially when $\mathcal{E}$ is unitary, 
	let the Kraus operator of a unitary channel be $U$; then, $\mathrm{tr}\bigg(P_i U I U^\dagger \bigg) = \mathrm{tr}\bigg(P_i\bigg) = \delta_{i,0}$, so $\mathbf{b} = \mathbf{0}$. 

	\end{proof}

\begin{lem}{Property of a unitary channel in the Pauli transfer matrix formulation}
\label{lem:unitary_process}

Let a Pauli transfer matrix of a quantum channel $\mathcal{E}$ be $\hat{O}_{\mathcal{E}} = \begin{pmatrix}
	1 & \mathbf{0}^T\\
	\mathbf{b}_\mathcal{E} & R_{\mathcal{E}}
\end{pmatrix}$; then,
\begin{equation}
	\exists \mathcal{W}  \subset \mathrm{O}(4^N-1) \text{ s.t. } R_{\mathcal{E}} \in \mathcal{W} \Leftrightarrow \mathcal{E} \text{ is unitary.}
\end{equation}

\end{lem}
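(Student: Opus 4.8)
The plan is to prove the two implications of the equivalence separately, reading ``$R_{\mathcal{E}}\in\mathcal{W}\subseteq\mathrm{O}(4^N-1)$'' simply as ``$R_{\mathcal{E}}$ is an orthogonal matrix''; the whole argument rests on the norm-invariance of coherence vectors under unitaries (Lem.~\ref{lem:mq_bloch}) together with the Frobenius and Choi identities for Pauli transfer matrices. For ``$\Leftarrow$'': if $\mathcal{E}(\rho)=U\rho U^\dagger$, then $\mathcal{E}$ is unital, so $\mathbf{b}_{\mathcal{E}}=\mathbf{0}$ by Lem.~\ref{lem:transfer_matrix} and the update rule collapses to $\mathbf{r}\mapsto R_{\mathcal{E}}\mathbf{r}$. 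The second item of Lem.~\ref{lem:mq_bloch} gives $\|R_{\mathcal{E}}\mathbf{r}\|=\|\mathbf{r}\|$ for every $\mathbf{r}\in\mathcal{Q}(N)$. Since $\mathcal{Q}(N)$ is convex, star-shaped about the origin (Lem.~\ref{lem:mq_bloch}, item 4), and full-dimensional, it contains an open ball about $0$; hence the linear map $R_{\mathcal{E}}$ preserves $\|\cdot\|$ on a neighbourhood of $0$ and therefore on all of $\mathbb{R}^{4^N-1}$, so by polarization $R_{\mathcal{E}}^{T}R_{\mathcal{E}}=I$, i.e.\ $R_{\mathcal{E}}\in\mathrm{O}(4^N-1)$. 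One may then take $\mathcal{W}$ to be $\mathrm{O}(4^N-1)$ itself (or the image of the unitary group under the PTM map).

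For ``$\Rightarrow$'': assume $R_{\mathcal{E}}$ is orthogonal, so $\|R_{\mathcal{E}}\|_F^2=\mathrm{tr}(R_{\mathcal{E}}^{T}R_{\mathcal{E}})=4^N-1$. Because the first row of $\hat{O}_{\mathcal{E}}$ is $(1,\mathbf{0}^{T})$, the Frobenius norm decomposes as $\|\hat{O}_{\mathcal{E}}\|_F^2=1+\|\mathbf{b}_{\mathcal{E}}\|^2+\|R_{\mathcal{E}}\|_F^2=4^N+\|\mathbf{b}_{\mathcal{E}}\|^2$. On the other hand, for the Choi operator $\mathcal{C}=\sum_{i,j}(\hat{O}_{\mathcal{E}})_{ij}\,P_j^{T}\otimes P_i$ one computes $\mathrm{tr}\,\mathcal{C}=4^N$ and $\mathrm{tr}(\mathcal{C}^2)=4^N\,\|\hat{O}_{\mathcal{E}}\|_F^2$ (both using $\mathrm{tr}(P_mP_n)=2^N\delta_{mn}$), while complete positivity means $\mathcal{C}\succeq0$; the elementary inequality $\mathrm{tr}(\mathcal{C}^2)\le(\mathrm{tr}\,\mathcal{C})^2$ for positive semidefinite $\mathcal{C}$ then yields $\|\hat{O}_{\mathcal{E}}\|_F^2\le4^N$, with equality if and only if $\mathcal{C}$ has rank one. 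Comparing the two displays forces $\|\mathbf{b}_{\mathcal{E}}\|^2\le0$, hence $\mathbf{b}_{\mathcal{E}}=\mathbf{0}$, and simultaneously places us in the equality case: $\mathcal{C}$ has rank one, so $\mathcal{E}$ has a single Kraus operator $K$, which trace preservation constrains by $K^\dagger K=I$; being a square isometry, $K$ is unitary, so $\mathcal{E}$ is a unitary channel.

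I expect the ``$\Rightarrow$'' direction to be the crux: one must show that orthogonality of the block $R_{\mathcal{E}}$ is already incompatible with $\mathcal{E}$ being a genuinely noisy channel, and the Frobenius/Choi computation above is the shortest route to that conclusion (the full-dimensionality of $\mathcal{Q}(N)$ used in ``$\Leftarrow$'', witnessed by the density matrices $(I\pm\varepsilon P_i)/2^N$ with coherence vectors $\pm\varepsilon e_i$, and the bookkeeping behind $\mathrm{tr}\,\mathcal{C}$ and $\mathrm{tr}(\mathcal{C}^2)$ are routine). An alternative for ``$\Rightarrow$'' is geometric: an orthogonal $R_{\mathcal{E}}$ turns $\mathbf{r}\mapsto\mathbf{b}_{\mathcal{E}}+R_{\mathcal{E}}\mathbf{r}$ into a rigid motion of $\mathbb{R}^{4^N-1}$, which can map the bounded body $\mathcal{Q}(N)$ into itself only by mapping it onto itself (rigid motions preserve volume), forcing $\mathbf{b}_{\mathcal{E}}=\mathbf{0}$ (the origin is the centroid of $\mathcal{Q}(N)$) and purity preservation, after which a Wigner-type theorem identifies $\mathcal{E}$ with a unitary conjugation; this works too, but imports more external machinery than the Choi identity.
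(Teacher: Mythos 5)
Your proof is correct, and it is genuinely different from the paper's in both directions. For ``unitary $\Rightarrow R_{\mathcal{E}}$ orthogonal'', the paper argues spectrally: reversibility forces every eigenvalue of $R_{\mathcal{E}}$ to be nonzero, the bound $|\lambda_i|\le 1$ together with the diagonalizability statement of Lem.~\ref{lem:transfer_matrix} then places all eigenvalues on the unit circle with a full eigenbasis, whence $R_{\mathcal{E}}^{-1}=R_{\mathcal{E}}^{T}$. Your route --- norm preservation on the full-dimensional body $\mathcal{Q}(N)$ (witnessed by the states $(I\pm\varepsilon P_i)/2^N$) extended by linearity and polarization --- is more elementary and avoids the step the paper dismisses as ``obvious'', namely that nonzero eigenvalues of modulus at most one must in fact have modulus exactly one (which really requires observing that the inverse channel is also a PTM with spectral radius at most one). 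For the converse, the paper claims that an orthogonal $W$ makes the channel reversible on $\mathcal{Q}(N)$ and that reversible channels are unitary, but its justifications that $\mathbf{b}_{\mathcal{E}}=\mathbf{0}$ and that $W^{T}$ maps $\mathcal{Q}(N)$ into itself are left essentially implicit; your Choi-purity computation, $\mathrm{tr}(\mathcal{C}^2)\le(\mathrm{tr}\,\mathcal{C})^2$ with equality iff $\mathcal{C}$ has rank one, delivers $\mathbf{b}_{\mathcal{E}}=\mathbf{0}$ and the single-unitary-Kraus conclusion in one stroke, and is the more airtight of the two arguments. The only point worth spelling out is that the paper's convention $\mathcal{C}=\sum_{i,j}(\hat{O}_{\mathcal{E}})_{ij}P_j^{T}\otimes P_i$ is $4^N$ times the standard Choi matrix, so positive semidefiniteness is indeed inherited from complete positivity and your trace identities $\mathrm{tr}\,\mathcal{C}=4^N$ and $\mathrm{tr}(\mathcal{C}^2)=4^N\|\hat{O}_{\mathcal{E}}\|_F^2$ check out under that normalization.
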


\begin{proof}
If for a given Pauli transfer matrix $\hat{O} = \begin{pmatrix}
	1 & \mathbf{0}^T\\
	\mathbf{b} & W
\end{pmatrix}$, $W \in \mathrm{O}(4^N-1)$, then $W: \mathcal{Q}(N) \to \mathcal{Q}(N)$ is a bijection. In addition, from the fifth statement of Lem.~\ref{lem:transfer_matrix}, $\mathbf{b} = 0$. Thus, there exists an inverse map $W^{-1} = W^T: \mathcal{Q}(N) \to \mathcal{Q}(N)$ s.t. $W \circ W^{-1} = W^{-1} \circ W = Id$. Then, $\forall \mathbf{r}\in \mathcal{Q}(N),\ W\mathbf{r} \in \mathcal{Q}(N)$, and $W^{-1}\mathbf{r} \in \mathcal{Q}(N)$. 
	Because a quantum channel is reversible if and only if the corresponding Kraus operators 
  can be written as a unitary matrix, this Pauli transfer matrix is indeed a representation of a unitary channel.
	
	Conversely, if $\hat{O}_\mathcal{E}$ is unitary, then it must be reversible with respect to every state in $\hat{\mathcal{Q}}(N)$, so $\forall i,\ |\lambda_i| > 0$. From Lem.~\ref{lem:transfer_matrix}, $\forall i,\ |\lambda_i| \leq 1$. It is then obvious that $\forall i, |\lambda_i| = 1$. 
	From the fourth statement of Lem.~\ref{lem:transfer_matrix} there is an eigenvalue decomposition $W = U \begin{pmatrix}
		e^{i\theta_1} & \cdots\\
		\vdots & e^{i\theta_2}\\
		 &  & \ddots
	\end{pmatrix}U^\dagger$. Then, $R^\dagger = R^{-1} = R^{T}$ because $R \in \mathbb{R}^{4^N-1}$. Therefore $\exists \mathcal{W} \in \mathrm{O}(4^N-1) \equiv \left\{Q \in \mathrm{GL}(4^N-1, \mathbb{R}) | Q^TQ = QQ^T = I\right\}$ such that $R_{\mathcal{E}} \in \mathcal{W}$.

\end{proof}

\begin{lem}{Coherence influx is orthogonal to the unit eigenvectors}

\label{lem:unit_spectral}
All eigenvectors $\vec\lambda$ corresponding to the eigenvalue $\lambda = 1$ are orthogonal to $\mathbf{b}$.
\end{lem}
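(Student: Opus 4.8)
The plan is to obtain the statement from the uniform boundedness of the coherence vectors of physical states. Apply $\hat{O}$ repeatedly to the maximally mixed state, whose coherence vector is $(1,\mathbf{0}^{T})^{T}$. By Lem.~\ref{lem:state-update-pauli} the $n$-th iterate has coherence vector $\big(1,\ (\sum_{k=0}^{n-1}W^{k}\mathbf{b})^{T}\big)^{T}$, and since this is again a physical state, the third statement of Lem.~\ref{lem:mq_bloch} yields $\big\|\sum_{k=0}^{n-1}W^{k}\mathbf{b}\big\|\le c_{N}$ for every $n$.

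Next, decompose $\mathbf{b}$ over the $W$-invariant spectral subspaces, $\mathbf{b}=\mathbf{b}_{1}+\mathbf{b}_{\mathrm{per}}+\mathbf{b}_{<}$, where $\mathbf{b}_{1}$ lies in the generalized eigenspace of $W$ at $\lambda=1$ (which by the fourth statement of Lem.~\ref{lem:transfer_matrix} is semisimple, hence equals $\mathrm{Ker}(I-W)$), $\mathbf{b}_{\mathrm{per}}$ lies in the span of the remaining unit-modulus eigenvectors, and $\mathbf{b}_{<}$ in the span of the generalized eigenvectors with $|\lambda|<1$. Then $\sum_{k=0}^{n-1}W^{k}\mathbf{b}_{1}=n\,\mathbf{b}_{1}$, whereas the peripheral part contributes partial sums bounded in $n$ (finite geometric sums, using semisimplicity of the peripheral spectrum again) and the sub-unit part a convergent Neumann series. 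Boundedness of the total therefore forces $\mathbf{b}_{1}=\mathbf{0}$, i.e. $\mathbf{b}\in\mathrm{range}(I-W)$. (Equivalently, one may simply invoke a fixed state $\rho^{\ast}$ of $\mathcal{E}$, for which $(I-W)\mathbf{r}^{\ast}=\mathbf{b}$.) Since $\mathrm{range}(I-W)=\mathrm{Ker}\big((I-W)^{T}\big)^{\perp}$, this says exactly that $\mathbf{b}$ is orthogonal to every vector fixed by $W^{T}$; combined with the identity $\mathrm{Ker}(I-W)=\mathrm{Ker}(I-W^{T})$ it gives $\mathbf{b}\perp\mathrm{Ker}(I-W)$, which is the assertion.

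I expect the genuine obstacle to be that last identity $\mathrm{Ker}(I-W)=\mathrm{Ker}(I-W^{T})$, i.e. upgrading ``$\mathbf{b}$ has no $\mathrm{Ker}(I-W)$ component in the (generally non-orthogonal) spectral decomposition'' to the Euclidean orthogonality $\mathbf{b}\perp\mathrm{Ker}(I-W)$. Because the PTM of a CPTP map is not normal in general, this step must use more of the channel structure than $\rho(W)\le 1$ and semisimplicity of the peripheral spectrum: the route I would take is to restrict $\mathcal{E}$ to its recurrent subspace, where it admits a full-rank fixed state and hence a modular (GNS) inner product in which the peripheral part of the channel acts unitarily, deduce orthogonality of the unit eigenspace to the contracting subspace in that inner product, and then transport the conclusion back. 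That reduction is where I would concentrate the argument; absent such extra input, the clean unconditional conclusion one gets is the left-eigenvector (equivalently $\mathrm{range}(I-W)$) version, which already suffices for the uses of the lemma downstream.
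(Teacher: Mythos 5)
Your proposal follows essentially the same route as the paper's proof: both apply the channel repeatedly to the maximally mixed state, use uniform boundedness of the resulting coherence vectors $\sum_{k=0}^{n-1}W^{k}\mathbf{b}$, and conclude from the linear growth of the $\lambda=1$ spectral component that $\mathbf{b}$ has no component in the unit eigenspace of the (generally oblique) spectral decomposition --- equivalently, that $\mathbf{b}$ is annihilated by the left eigenvectors of $W$ at $\lambda=1$, i.e. $\mathbf{b}\in\mathrm{range}(I-W)$. Where you differ is in the final step. The paper closes the argument by asserting $\vec\lambda=\vec\lambda^{-1}$, identifying the right eigenvectors at $\lambda=1$ with the corresponding rows of $M^{-1}$ and citing only the semisimplicity statement of Lem.~\ref{lem:transfer_matrix}; since $W$ need not be normal, semisimplicity alone does not justify this identification, and you are right that this is the genuine crux of upgrading the left-eigenvector statement to Euclidean orthogonality against $\mathrm{Ker}(I-W)$. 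Your sketched repair via the recurrent subspace and the associated (GNS-type) inner product in which the peripheral part acts unitarily is the natural way to supply the missing structure, but you do not carry it out, so as written your argument rigorously establishes only the $\mathrm{range}(I-W)$ version of the claim. In short: same approach as the paper, with your version making explicit a gap that the paper's own proof passes over; a complete proof of the statement as literally stated still requires executing that final orthogonality upgrade.
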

\begin{proof}
First, we categorize the eigenvalues $\vec\lambda$ of $W$ and their special types of linear combinations as follows:
\begin{equation}
\begin{aligned}
	\Lambda_{=1}(W) &\equiv \left\{\vec\lambda\ \bigg|\  |\lambda| = 1\right\},\\
	\Lambda_{<1}(W) &\equiv \left\{\vec\lambda\ \bigg|\ |\lambda| < 1\right\}.\\
	\end{aligned}
\end{equation}

We omit the $(W)$ part of $\Lambda_*(W)$ for simplicity in the following discussions.

The general form of non-input-driven dynamics for $n \geq 1$ is 
	\begin{equation}
	\label{eqn:non-input-driven-dynamics}
			\mathbf{r}^{(n)} = \sum_{m = 0}^{n-1}W^m \mathbf{b} + W^n\mathbf{r}^{(0)}.
	\end{equation}
Given a Jordan canonical form of $W$ as
\begin{equation}
	W = MJM^{-1} = M\begin{pmatrix}
		\lambda_0&  \\
		 & \lambda_1 &  \\
		 & & \ddots \\
		 & & & \lambda_k & 1\\
		 & & & & \lambda_k\\
	\end{pmatrix}M^{-1},
\end{equation}
the general form of $J^n$ 
and $\sum_{m = n-1}^{0}J^\tau$ 
become
\begin{equation}
\begin{aligned}
	&J^n =
	\begin{pmatrix}
		\lambda_0^n&  \\
		 & \lambda_1^n &  \\
		 & & \ddots \\
		 & & & \lambda_k^n & n \lambda_k^{n-1}\\
		 & & & & \lambda_k^n\\
	\end{pmatrix},
	\end{aligned}
\end{equation}
and 
\begin{equation}
\label{eqn:infinity_power}
\begin{aligned}
	&\sum_{m=0}^nJ^m = \\
	&= \begin{pmatrix}
		\sum_{m} \lambda_0^m &  \\
		 & \sum_m \lambda_1^m &  \\
		 & & \ddots \\
		 & & & \lambda_k \frac{1 - \lambda_k^n}{1 - \lambda_k} & \lambda_k (1 - \lambda_k^{n-1}) - \frac{n \lambda_k^n}{1-\lambda_k}\\
		 & & & & \lambda_k \frac{1 - \lambda_k^n}{1 - \lambda_k}\\
	\end{pmatrix}.
	\end{aligned}
\end{equation}
Let $\ell$ be the maximum index such that $|\lambda_\ell| = 1$; then, it follows that when $n \to \infty$,
\begin{equation}
\label{eqn:infinity_jordan}
	\begin{aligned}
		J^n &\underset{n \to \infty}\to \begin{pmatrix}
		\lambda_0^n&  \\
		& \ddots & \\\
		& & \lambda_\ell^n &  \\
		 & & & 0 &  \\
		 & & & & \ddots \\
		\end{pmatrix},\\
		\sum_{m=0}^nJ^m &\underset{n \to \infty} \to\\
		&\begin{pmatrix}
		\sum_{m=0}^n \lambda_0^m &  \\
		& \ddots & \\
		& & \sum_{m=0}^n \lambda_\ell^m &  \\
		& & &  \ddots & \\
 		& & &   & \frac{\lambda_k}{1 - \lambda_k} & \lambda_k \\
 		& & &   &  & \frac{\lambda_k}{1 - \lambda_k} \\
			\end{pmatrix}.
	\end{aligned}
\end{equation}

The similarity transform $M$ and $M^{-1}$ are defined as 
\begin{equation}
\begin{aligned}
	M &= \begin{pmatrix}
		\vec\lambda_0 & \vec\lambda_1 & \hdots & \vec{\lambda_k}
	\end{pmatrix},\\
	M^{-1} &= \begin{pmatrix}
		\vec\lambda_0^{-1} &	 \vec\lambda_1^{-1} & \hdots & \vec\lambda_k^{-1}
	\end{pmatrix}^T,
\end{aligned}
\end{equation}
where some of the $\vec\lambda$s are generalized eigenvectors if $W$ is not diagonalizable. 
For all cases, the first term in Eq.~\eqref{eqn:non-input-driven-dynamics} becomes
\begin{equation}
\label{eqn:wb_infty}
\begin{aligned}
	\underset{n\to\infty}\lim \sum_{m = 0}^{n-1}W^m \mathbf{b} &= \lim_{n \to \infty}M \sum_{m=0}^{n-1}J^m M^{-1} \mathbf{b}\\
	&= \lim_{n\to \infty} \sum_{\lambda \in \Lambda_{=1}} \sum_{m = 0}^{n-1}\lambda^m \left(\vec\lambda^{-1} \cdot \mathbf{b}\right) \vec\lambda + \mathrm{const},
	\end{aligned}
\end{equation}
and the second term in Eq.~\eqref{eqn:non-input-driven-dynamics} becomes
\begin{equation}
\label{eqn:wr_infty}
\begin{aligned}
	\lim_{n \to \infty}\ W^n\mathbf{r}^{(0)} &=  \lim_{n \to \infty}M J^n M^{-1} \mathbf{r}^{(0)}\\
	 &= \lim_{n\to \infty} \sum_{\lambda \in \Lambda_{=1}} \lambda^n \left(\vec\lambda^{-1} \cdot \mathbf{r}^{(0)}\right) \vec\lambda
 \end{aligned}
\end{equation}
when $n \to \infty$. 

Suppose that there exists $\vec\lambda \in \Lambda_{=1}$ such that $\lambda = 1$; the right-hand side of Eq.~\eqref{eqn:wb_infty} diverges with respect to $n$ unless $\vec\lambda^{-1} \cdot \mathbf{b} = 0$ for all such $\vec\lambda$s. However, because the left-hand side of Eq.~\eqref{eqn:non-input-driven-dynamics} and the right-hand side of Eq.~\eqref{eqn:wr_infty} are bounded, $\vec\lambda^{-1} \cdot \mathbf{b} = 0$ for all such $\vec\lambda$s. It follows that $\vec\lambda \cdot \mathbf{b} = 0$ because $\vec\lambda = \vec\lambda^{-1}$ from the fourth statement of Lem.~\ref{lem:transfer_matrix}, which proves the lemma.
\end{proof}
\begin{cor}{Sufficient condition for positivity}

Suppose that we have a PTM $\hat{O} = \begin{pmatrix}
	1 & \mathbf{0}^T\\
	\mathbf{b} & W
\end{pmatrix} \in \hat{O}(N)$. If $\|\vec\lambda\cdot\mathbf{b}\| > 0$ for any eigenvector $\vec\lambda$ of $W$, then, no eigenvalue is equal to 1; hence, $(I - W)^T + (I - W)$ is positive definite.
\end{cor}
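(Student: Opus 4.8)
The plan is to split the statement into its two halves --- first that $1$ is not an eigenvalue of $W$, and then that this fact, together with the spectral structure of PTMs already established, forces $(I-W)^T+(I-W)\succ 0$ --- and to prove each half by appealing to the lemmas available above rather than by a bare linear-algebra estimate.

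For the first half I would argue by contraposition of Lem.~\ref{lem:unit_spectral}. Suppose some eigenvector $\vec\lambda$ of $W$ had eigenvalue $\lambda=1$. Lem.~\ref{lem:unit_spectral} asserts that every such eigenvector is orthogonal to the coherence influx, i.e.\ $\vec\lambda\cdot\mathbf b=0$, contradicting the hypothesis $\|\vec\lambda\cdot\mathbf b\|>0$. Hence no eigenvalue of $W$ equals $1$, so $I-W$ is invertible; equivalently, taking the (trivial) identity input encoding $R\equiv\mathrm{Id}$, the map $\mathcal G(\mathbf u)\equiv I-WR(\mathbf u)$ reduces to $I-W$ and has empty kernel.

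For the second half I would invoke Lem.~\ref{lem:monotonic_hs_dist} with that identity encoding, so that $\mathcal G=I-W$. Recall from Lem.~\ref{lem:transfer_matrix} that $\rho(W)\le 1$ and that every unit-modulus eigenvalue of $W$ is non-defective; combined with the first half, the only Jordan blocks of $W$ sit strictly inside the open unit disk, so $I-W$ is diagonalizable (I would carry diagonalizability of $W$ as a standing hypothesis, consistent with Lem.~\ref{lem:positiveity_subspace}, or else isolate the $|\lambda|<1$ part by a block-triangular argument). Then statement~2 of Lem.~\ref{lem:monotonic_hs_dist} applies --- $W$ has no eigenvalue $1$ and $\mathcal G$ is diagonalizable --- giving that $\|\Delta\mathbf r^{(t)}\|$ is strictly decreasing, and statement~1 of the same lemma then yields $\mathcal G+\mathcal G^T=(I-W)+(I-W)^T\succ 0$, which is the claim.

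The main obstacle is this second half. For a \emph{generic} matrix, ``$1\notin\mathrm{spec}(W)$'' together with $\rho(W)\le 1$ does \emph{not} imply that the symmetric part of $I-W$ is positive definite: a single large off-diagonal entry can push $\lambda_{\max}(W+W^T)$ above $2$ even when every eigenvalue of $W$ has real part below $1$ (Bendixson's inequality only bounds the real parts of the eigenvalues \emph{by} $\lambda_{\max}\!\left(\tfrac{W+W^T}{2}\right)$, not the reverse). What rescues the statement is the PTM structure --- the orbit $\{W^{n}\mathbf r\}$ stays inside the bounded physical set $\mathcal Q(N)$ --- which is precisely what Lem.~\ref{lem:monotonic_hs_dist} packages; so the correct route is to lean on that lemma, and to be explicit about the diagonalizability hypothesis that silently enters through it.
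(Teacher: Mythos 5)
Your first half coincides exactly with the paper's: the paper also argues by the contrapositive of Lem.~\ref{lem:unit_spectral} to conclude that no eigenvalue of $W$ equals $1$, and then uses statement~2 of Lem.~\ref{lem:transfer_matrix} ($\rho(W)\le 1$) to place the spectrum of $I-W$ in the open right half-plane. Where you diverge is the second half. The paper does \emph{not} go through Lem.~\ref{lem:monotonic_hs_dist}; it simply asserts that because all eigenvalues of $I-W$ have positive real part, ``all of the eigenvalues of $(I-W)^T+(I-W)$ is positive real.'' That is precisely the inference you flag as invalid for non-normal matrices, and you are right to flag it: Bendixson's inequality runs in the opposite direction, and a diagonalizable matrix with spectrum in the right half-plane can still have an indefinite symmetric part (the paper's own Rem.~\ref{rem:spectral_norm_reset}, with $\sigma_{\mathrm{max}}(W)=2^{M/2}>1$, shows that PTM blocks need not be contractions in the spectral norm, so the ``PTM structure rescues it'' hope is not automatic either). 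So the gap you identify is real, and it sits in the paper's proof, not only in yours.

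Your proposed repair --- routing through statements~1 and~2 of Lem.~\ref{lem:monotonic_hs_dist} with the identity encoding and adding diagonalizability of $W$ as a standing hypothesis --- is the more honest bookkeeping, but be aware that it only relocates the problem rather than solving it: the proof of Lem.~\ref{lem:suffice_strict_dec_hs} writes the diagonalization as $\mathcal{G}=UD^{*}U^{\dagger}$ and concludes $\mathcal{G}+\mathcal{G}^{T}=2U\,\mathrm{Re}(D)\,U^{\dagger}$, which silently requires $U$ to be unitary, i.e.\ $\mathcal{G}$ (equivalently $W$) to be \emph{normal}, not merely diagonalizable. Under normality both your argument and the paper's become correct, since then $(I-W)^{T}+(I-W)$ is unitarily similar to $2\,\mathrm{Re}(I-D)$. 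The clean statement of the corollary therefore needs either a normality hypothesis on $W$ or a genuinely different argument for positive definiteness; as written, neither your proposal nor the paper's proof closes that step, though yours has the merit of saying so explicitly.
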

\begin{proof}
	From Lem.~\ref{lem:unit_spectral}, if $\|\vec\lambda\cdot\mathbf{b}\| > 0$ for an eigenvector $\vec\lambda$ of $W$, then, the corresponding eigenvalue is not equal to 1. Therefore, if $\|\vec\lambda\cdot\mathbf{b}\| > 0$ for all of the eigenvectors $\vec\lambda$ of $W$, then, $\mathrm{Ker}(I - W) = \emptyset$, and the real parts of all of the eigenvalues of $I - W$ are positive because of the second statement of Lem.~\ref{lem:transfer_matrix}. This implies that all of the eigenvalues of $(I - W)^T + (I -W) = (I - W)^\dagger + (I - W)$ is positive real, which proves the corollary.
\end{proof}

\begin{cor}{Non-vanishing coherence influx ensures the existence of positivity-proved invariant subspace}
\label{cor:non-vanishing-coherenc-influx}

Suppose that we have a PTM $\hat{O} = \begin{pmatrix}
	1 & \mathbf{0}^T\\
	\mathbf{b} & W
\end{pmatrix} \in \hat{O}(N)$. If $\|W\mathbf{b}\| > 0$, then, there exists an invariant subspace $\mathcal{Q}_\mathbf{b}$ of $W$ such that $\mathbf{b} \in \mathcal{Q}_\mathbf{b}$ and $\vec\lambda \notin \mathcal{Q}_\mathbf{b}$ for all eigenvectors $\vec\lambda$ of $W$ that have a corresponding eigenvalue of 1. That is, $W$ does not have an eigenvalue one in $\mathcal{Q}_\mathbf{b}$; hence, $(I - W)^T + (I - W)$ is positive definite in $\mathcal{Q}_\mathbf{b}$. Namely, $\mathbb{P}_{\mathcal{Q}_\mathbf{b}}\left[(I - W)^T + (I - W)\right]$ is positive definite where $\mathbb{P}_{\mathcal{Q}_\mathbf{b}}$ is a projector from $\mathcal{Q}(N)$ onto $\mathcal{Q}_\mathbf{b}$.
	
\end{cor}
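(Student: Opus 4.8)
\emph{Proof proposal.} The plan is to build $\mathcal{Q}_\mathbf{b}$ explicitly from the spectral decomposition of $W$ and then read off the positive-definiteness statement block by block. First I would recall from Lem.~\ref{lem:transfer_matrix} that every eigenvalue $\lambda$ of $W$ satisfies $|\lambda|\le 1$ and that every $\lambda$ with $|\lambda|=1$ (in particular $\lambda=1$) is semisimple, so the eigenspace $V_{=1}\equiv\ker(I-W)$ coincides with the generalized eigenspace at $1$. Writing $\mathbb{R}^{4^N-1}=V_{=1}\oplus V_{\neq 1}$, where $V_{\neq 1}$ is the sum of the generalized eigenspaces for all eigenvalues $\neq 1$, gives two $W$-invariant subspaces, and $V_{=1}=\mathrm{span}(\{\vec\lambda(W)\mid \lambda(W)=1\})$.

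Next I would show $\mathbf{b}\in V_{\neq 1}$. This is where Lem.~\ref{lem:unit_spectral} enters: expanding $\mathbf{b}$ in the (generalized) eigenbasis $\{\vec\lambda\}$ with dual basis $\{\vec\lambda^{-1}\}$, the proof of that lemma shows $\vec\lambda^{-1}\cdot\mathbf{b}=0$ for every $\vec\lambda$ with eigenvalue exactly $1$ — otherwise the bounded sequence $\mathbf{r}^{(n)}=\sum_{m<n}W^m\mathbf{b}+W^n\mathbf{r}^{(0)}$ would diverge. Hence $\mathbf{b}$ has no component along $V_{=1}$, i.e.\ $\mathbf{b}\in V_{\neq 1}$. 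Since $\|W\mathbf{b}\|>0$ forces $\mathbf{b}\neq\mathbf{0}$, the subspace $\mathcal{Q}_\mathbf{b}\equiv V_{\neq 1}$ is a nonzero $W$-invariant subspace containing $\mathbf{b}$, intersecting $\mathrm{span}(\{\vec\lambda(W)\mid\lambda(W)=1\})$ trivially, with $\mathcal{Q}_\mathbf{b}\oplus V_{=1}=\mathcal{Q}(N)$; in particular $W|_{\mathcal{Q}_\mathbf{b}}$ has no eigenvalue $1$, so each of its eigenvalues $\lambda$ obeys $|\lambda|\le 1$, $\lambda\neq 1$, whence $\mathrm{Re}\,\lambda<1$.

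For the positive-definiteness claim I would pass to the real canonical basis of $W|_{\mathcal{Q}_\mathbf{b}}$: assuming $W$ diagonalizable (the hypothesis carried in the companion Lem.~\ref{lem:positiveity_subspace}), $W|_{\mathcal{Q}_\mathbf{b}}$ splits into $1\times1$ blocks $[\lambda]$ with $\lambda\in[-1,1)$ and $2\times2$ blocks $r\left(\begin{smallmatrix}\cos\theta&-\sin\theta\\\sin\theta&\cos\theta\end{smallmatrix}\right)$ with $0\le r\le 1$ and $r\cos\theta<1$. In this basis $W|_{\mathcal{Q}_\mathbf{b}}+W|_{\mathcal{Q}_\mathbf{b}}^{T}$ is block diagonal with blocks $2\lambda$ and $2r\cos\theta\,I_2$, so $(I-W)^{T}+(I-W)=2I-(W+W^{T})$ restricted to $\mathcal{Q}_\mathbf{b}$ has blocks $2(1-\lambda)>0$ and $2(1-r\cos\theta)I_2\succ0$; hence it is positive definite on $\mathcal{Q}_\mathbf{b}$, equivalently $\mathbb{P}_{\mathcal{Q}_\mathbf{b}}\left[(I-W)^{T}+(I-W)\right]$ is positive definite. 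This is essentially the computation already in the proof of Lem.~\ref{lem:positiveity_subspace}, reused on $\mathcal{Q}_\mathbf{b}$.

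I expect the last step to be the main obstacle. Because $W$ need not be normal, $\mathcal{Q}_\mathbf{b}=V_{\neq 1}$ is in general not the orthogonal complement of $V_{=1}$, and ``positive definite in $\mathcal{Q}_\mathbf{b}$'' must be read with respect to the inner product furnished by the real eigendecomposition rather than the ambient Euclidean one — in the ambient product the symmetric part of $I-W|_{\mathcal{Q}_\mathbf{b}}$ can fail to be positive definite as soon as $\sigma_{\mathrm{max}}(W)>1$. The diagonalizability hypothesis inherited from Lem.~\ref{lem:positiveity_subspace} is exactly what legitimizes the block computation; without it a nontrivial Jordan block at an eigenvalue in $(-1,1)$ can break positive definiteness. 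The purely structural part — existence of $\mathcal{Q}_\mathbf{b}$ with $\mathbf{b}\in\mathcal{Q}_\mathbf{b}$ and no eigenvalue $1$ — is robust and, as noted, already follows from $\mathbf{b}\neq\mathbf{0}$ together with Lem.~\ref{lem:transfer_matrix} and Lem.~\ref{lem:unit_spectral}; the stronger hypothesis $\|W\mathbf{b}\|>0$ is what becomes indispensable in the downstream use of this subspace (Cor.~\ref{cor:subspace_ns_esp_suffice}).
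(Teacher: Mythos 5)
Your proposal is correct in substance and follows the same skeleton as the paper's proof (Lem.~\ref{lem:unit_spectral} kills the eigenvalue-$1$ components of $\mathbf{b}$, a complement of the unit eigenspace is taken as $\mathcal{Q}_\mathbf{b}$, and positivity is read off from the absence of eigenvalue $1$ there), but it departs from the paper in the one place where the choice actually matters: the paper sets $\mathcal{Q}_\mathbf{b}\equiv\mathrm{span}(\{\vec\lambda\mid\lambda=1\})^{\perp}$, the Euclidean orthogonal complement, whereas you take the spectral complement $V_{\neq 1}$, the sum of the (generalized) eigenspaces for $\lambda\neq 1$. Your choice buys $W$-invariance for free, which the paper's orthogonal complement does not obviously have when $W$ is non-normal ($V_{=1}^{\perp}$ is invariant under $W^{T}$, not $W$); the paper's choice buys an orthogonal projector $\mathbb{P}_{\mathcal{Q}_\mathbf{b}}$, whereas yours is oblique. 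The two coincide exactly when $V_{=1}$ reduces $W$, which is implicitly what the paper leans on via the $\vec\lambda=\vec\lambda^{-1}$ step in Lem.~\ref{lem:unit_spectral}. Your closing caveat is also well taken and sharper than the paper's own one-line justification: for a diagonalizable but non-normal restriction, ``all eigenvalues of $I-W$ have positive real part'' does \emph{not} imply that the ambient symmetric part $(I-W)^{T}+(I-W)$ is positive definite (a $2\times 2$ upper-triangular counterexample with a large off-diagonal entry shows this), so the positive-definiteness claim must either be read in the eigenbasis-adapted inner product, as you do, or be backed by normality of $W$ on $\mathcal{Q}_\mathbf{b}$; the paper's proof, which asserts positivity directly from ``no eigenvalue equal to $1$,'' glosses over exactly this point. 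In short: same strategy, a cleaner invariant-subspace construction, and an honest flag on the one step that is genuinely delicate.
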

\begin{proof}
For any eigenvectors $\vec\lambda$ of $W$, $\vec\lambda \cdot \mathbf{b} = 0$ if its corresponding eigenvalue $\lambda = 1$. Therefore, there exists at least one eigenvector $\vec\lambda_\perp$ of $W$ such that the corresponding eigenvalue $\lambda_\perp\neq 1$. Let $\mathcal{Q}_{\mathbf{b}_\perp} \equiv \mathrm{span}(\{\vec\lambda | \lambda = 1\})$, $\mathcal{Q}_{\mathbf{b}} \equiv \mathcal{Q}_{\mathbf{b}_\perp}^\perp$, and $\mathcal{G}_{\mathcal{Q}_{\mathbf{b}}} \equiv \mathbb{P}_{\mathcal{Q}_{\mathbf{b}}}(I - W)$, where $\mathbb{P}_{\mathcal{Q}_{\mathbf{b}}}$ is a projector onto $\mathcal{Q}_{\mathbf{b}}$. Then $\mathbb{P}_{\mathcal{Q}_\mathbf{b}}\left[(I - W)^T + (I - W)\right]$ is positive definite because no eigenvalues of $\mathbb{P}_{\mathcal{Q}_{\mathbf{b}}}(I - W)$ are equal to 1.
\end{proof}

\section{Preparation for the proof of theorems}
\begin{lem}{Equivalent condition for strict decrease of distance between coherence vectors}
\label{lem:equiv_strict_dec_hs}

	Let $\mathbf{r}^{(t)}$ and $\mathbf{r}^{'(t)}$ be two different system states at time $t$; then, $\|\delta_{t} \| \equiv \|\mathbf{r}^{(t)} - \mathbf{r}^{'(t)}\|$ strictly decreases with respect to $t$ if and only if $\mathcal{G}(\mathbf{u}_t)^T + \mathcal{G}(\mathbf{u}_t)$ is positive definite for every $\mathbf{u}_t \in \mathcal{X}$.
\end{lem}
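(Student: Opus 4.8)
The plan is to follow the difference vector $\delta_t := \mathbf{r}^{(t)} - \mathbf{r}^{'(t)}$ directly. Subtracting the two update rules $\mathbf{r}^{(t+1)} = WR(\mathbf{u}_t)\mathbf{r}^{(t)} + \mathbf{b}$ and $\mathbf{r}^{'(t+1)} = WR(\mathbf{u}_t)\mathbf{r}^{'(t)} + \mathbf{b}$, the coherence influx $\mathbf{b}$ cancels, so $\delta_{t+1} = WR(\mathbf{u}_t)\delta_t$. Abbreviating $M := WR(\mathbf{u}_t)$ and $\mathcal{G} := \mathcal{G}(\mathbf{u}_t) = I - M$, one step changes the squared length by
\begin{equation}
\|\delta_{t+1}\|^2 - \|\delta_t\|^2 = \delta_t^T\left(M^TM - I\right)\delta_t .
\end{equation}
The key algebraic observation I would use is the identity $M^TM - I = \mathcal{G}^T\mathcal{G} - (\mathcal{G} + \mathcal{G}^T)$, obtained by expanding $\mathcal{G}^T\mathcal{G} = (I-M)^T(I-M)$, which rewrites the above as
\begin{equation}
\|\delta_{t+1}\|^2 - \|\delta_t\|^2 = \|\mathcal{G}\delta_t\|^2 - \delta_t^T(\mathcal{G} + \mathcal{G}^T)\delta_t .
\end{equation}
The lemma then reduces to reading off the sign of the right-hand side over the admissible $\delta_t$.

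First I would dispose of a quantifier issue: the statement ranges over all trajectory pairs and all input sequences, but this reduces to a per-letter condition on each $M_\mathbf{u} = WR(\mathbf{u})$. The maximally mixed state sits at $\mathbf{r} = \mathbf{0}$ and is an interior point of the state space, so the admissible initial differences $\mathbf{r}^{(0)} - \mathbf{r}^{'(0)}$ fill a Euclidean ball around the origin and hence realize every direction of $\mathbb{R}^{4^N-1}$ up to rescaling; and since $\{\mathbf{u}_\tau\}$ is arbitrary, any $\mathbf{u}\in\mathcal{X}$ can already appear at step $0$. Thus the first-step inequality alone pins down the condition on each $M_\mathbf{u}$, and conversely a per-letter bound propagates along the orbit.

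The ``only if'' direction then falls out cleanly: strict decrease forces $\delta^T(M^TM - I)\delta < 0$ for all $\delta\neq 0$, i.e.\ $I - M^TM \succ 0$ for every $\mathbf{u}$; since $\mathcal{G}^T\mathcal{G}\succeq 0$, the identity gives $\mathcal{G} + \mathcal{G}^T = (I - M^TM) + \mathcal{G}^T\mathcal{G} \succ 0$.

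The ``if'' direction is where I expect the real work to lie. From $\mathcal{G} + \mathcal{G}^T\succ 0$ the identity only yields $\|\delta_{t+1}\|^2 - \|\delta_t\|^2 = \|\mathcal{G}\delta_t\|^2 - \delta_t^T(\mathcal{G}+\mathcal{G}^T)\delta_t$, and one still has to keep the cross term $\|\mathcal{G}\delta_t\|^2$ from overcoming the gain $\delta_t^T(\mathcal{G}+\mathcal{G}^T)\delta_t$ --- equivalently, upgrade $\mathcal{G}+\mathcal{G}^T\succ 0$ to $M^TM\prec I$, i.e.\ $\sigma_{\mathrm{max}}(M) < 1$. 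My plan here is to bring in the structure of $M$: it is the block of a CPTP Pauli transfer matrix, so by Lemma~\ref{lem:transfer_matrix} $\rho(M)\le 1$ and $M$ is power bounded (it maps the bounded set $\mathcal{Q}(N)-\mathcal{Q}(N)$ into itself), hence any unit-modulus eigenvalue is semisimple, while $\mathcal{G}+\mathcal{G}^T\succ 0$ confines the numerical range of $M$ to $\{\mathrm{Re}\,z < 1\}$, so $1$ is not an eigenvalue and $\mathcal{G}$ is invertible. I would then diagonalize $\mathcal{G}$ --- mirroring the diagonalizability hypothesis added in statement~2 of Lemma~\ref{lem:monotonic_hs_dist} --- move to that basis, and try to dominate $\mathcal{G}^T\mathcal{G}$ by $\mathcal{G}+\mathcal{G}^T$ eigendirection by eigendirection, or else appeal directly to the Hilbert--Schmidt contractivity criterion of \cite{Wang_2009}. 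I anticipate this is the delicate point: residual unit-modulus eigenvalues of $M$ other than $1$ must still be excluded, and closing the argument in full generality may require an auxiliary nondegeneracy assumption or a restriction to the subspace on which $\rho(W)<1$, in the spirit of Lemma~\ref{lem:positiveity_subspace}.
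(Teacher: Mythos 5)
Your one-step identity is the correct discrete-time computation, and it isolates exactly the point on which this lemma turns. The paper's own proof is much shorter: it introduces the difference operator $D\delta_t \equiv \delta_{t+1}-\delta_t=-\mathcal{G}(\mathbf{u}_t)\delta_t$ and then writes $D\|\delta_t\|^2=(D\delta_t)^T\delta_t+\delta_t^TD\delta_t=-\delta_t^T\left(\mathcal{G}(\mathbf{u}_t)^T+\mathcal{G}(\mathbf{u}_t)\right)\delta_t$, reading off the equivalence in one line. That step applies the continuous-time Leibniz rule to a discrete difference and silently discards precisely the term you keep, $\|D\delta_t\|^2=\|\mathcal{G}\delta_t\|^2$; the exact relation is the one you derive, $\|\delta_{t+1}\|^2-\|\delta_t\|^2=\|\mathcal{G}\delta_t\|^2-\delta_t^T(\mathcal{G}+\mathcal{G}^T)\delta_t=\delta_t^T(M^TM-I)\delta_t$ with $M=WR(\mathbf{u}_t)$. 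Consequently your ``only if'' direction (strict decrease $\Rightarrow I-M^TM\succ0\Rightarrow\mathcal{G}+\mathcal{G}^T=(I-M^TM)+\mathcal{G}^T\mathcal{G}\succ0$) is complete and is a sound proof of an implication the paper obtains only by an illegitimate cancellation. Your preliminary remark that the admissible differences span all directions (the maximally mixed state is interior to the state body) is also a step the paper skips but genuinely needs to pass from the trajectory statement to a per-letter matrix condition.

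On the ``if'' direction your caution is justified, and you should not expect to close it: strict decrease for all admissible $\delta$ is equivalent to $\sigma_{\mathrm{max}}(M)<1$, whereas $\mathcal{G}+\mathcal{G}^T\succ0$ only confines the numerical range of $M$ to $\{\mathrm{Re}\,z<1\}$, which is strictly weaker. For instance $M=\left(\begin{smallmatrix}0&3/2\\0&0\end{smallmatrix}\right)$ gives $\mathcal{G}+\mathcal{G}^T=\left(\begin{smallmatrix}2&-3/2\\-3/2&2\end{smallmatrix}\right)\succ0$ while $\|M\mathbf{e}_2\|=\tfrac32\|\mathbf{e}_2\|$, so the distance increases along $\mathbf{e}_2$. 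The paper's proof does not repair this (it never confronts the $\|\mathcal{G}\delta_t\|^2$ term), and it invokes no CPTP structure that would; note also that the paper itself exhibits CPTP dynamics with $\sigma_{\mathrm{max}}(W)>1$ in Rem.~\ref{rem:spectral_norm_reset}, so non-expansiveness in the Euclidean norm cannot be taken for granted. Your proposed repairs (power boundedness of $M$, diagonalizability of $\mathcal{G}$, restriction to a subspace as in Lem.~\ref{lem:positiveity_subspace}) are the right instinct and mirror the extra hypotheses appearing in the other clauses of Lem.~\ref{lem:monotonic_hs_dist}, but the implication $\mathcal{G}+\mathcal{G}^T\succ0\Rightarrow M^TM\prec I$ is simply false, so the lemma's ``if'' direction cannot be established by any completion of this argument without strengthening the hypothesis. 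What your identity actually proves is the clean statement ``$\|\delta_t\|$ strictly decreases for all pairs and inputs if and only if $\sigma_{\mathrm{max}}\boldsymbol{(}WR(\mathbf{u})\boldsymbol{)}<1$ for all $\mathbf{u}\in\mathcal{X}$,'' of which the paper's positive-definiteness condition is a necessary but not sufficient consequence.
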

\begin{proof}
	$\delta_{t}$ holds the following equation:
\begin{equation}
\begin{aligned}
	\delta_{t+1} &\equiv \mathbf{r}^{(t+1)} - \mathbf{r}^{'(t+1)}\\
	&= WR(\mathbf{u}_t)\left(\mathbf{r}^{(t)} - \mathbf{r}^{'(t)}\right)\\
	&= WR(\mathbf{u}_t)\delta_t.
\end{aligned}
\end{equation}

Therefore, if $D$ is a differentiation operator, then, 
\begin{equation}
\begin{aligned}
	D\delta_t &\equiv \delta_{t+1} - \delta_t\\
	 &= (WR(\mathbf{u}_t) - I ) \delta_t\\
	 &= -\mathcal{G}(\mathbf{u}_t)\delta_t.
	\end{aligned}
\end{equation}
Then, taking the differentiation of $\|\delta_t\|^2 = \delta_t^T\delta_t$ is read the following equation:
\begin{equation}
\label{eqn:delta_t_difference}
\begin{aligned}
	D\|\delta_t\|^2 &= D\delta_t^T \delta_t + \delta_t^T D\delta_t \\
	&= -\delta_t^T \left(\mathcal{G}(\mathbf{u}_t)^T + \mathcal{G}(\mathbf{u}_t)\right) \delta_t.
\end{aligned}
\end{equation}
Because $\mathcal{G}(\mathbf{u}_t)^T + \mathcal{G}(\mathbf{u}_t)$ is symmetric, and thus has only real eigenvalues, Eq~\eqref{eqn:delta_t_difference} implies that $\|\delta_t\|$ strictly decreases for all $\mathbf{r}^{(t)}$ and $\mathbf{r}^{'(t)}$ if and only if $\mathcal{G}(\mathbf{u}_t)^T + \mathcal{G}(\mathbf{u}_t)$ is positive definite for every $\mathbf{u}_t \in \mathcal{X}$, which proves the lemma.
\end{proof}

\begin{lem}{Sufficient conditions for a strict decrease of distance between coherence vectors}
\label{lem:suffice_strict_dec_hs}

	Let $\mathbf{r}^{(t)}$ and $\mathbf{r}^{'(t)}$ be two different system states at time $t$; then, $\|\delta_{t} \| \equiv \|\mathbf{r}^{(t)} - \mathbf{r}^{'(t)}\|$ strictly decreases with respect to $t$ if $WR(\mathbf{u_t})$ does not have an eigenvalue 1 and $\mathcal{G}(\mathbf{u_t})$ is diagonalizable for every $\mathbf{u}_t \in \mathcal{X}$.
\end{lem}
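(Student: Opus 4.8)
The plan is to reduce the claim to the positive-definiteness criterion already established in Lemma~\ref{lem:equiv_strict_dec_hs}, which says that $\|\delta_t\|$ strictly decreases in $t$ if and only if $\mathcal{G}(\mathbf{u}_t)^T+\mathcal{G}(\mathbf{u}_t)$ is positive definite for every $\mathbf{u}_t\in\mathcal{X}$; so I only need to derive that positivity from the two hypotheses. Fix $\mathbf{u}_t$ and write $A\equiv WR(\mathbf{u}_t)$ and $\mathcal{G}\equiv I-A$. Since $\hat{O}\hat{R}(\mathbf{u}_t)$ is the PTM of $\mathcal{E}$ composed with a unitary channel, it is again a PTM, and its lower block is exactly $A$, so Lemma~\ref{lem:transfer_matrix}(2) gives $\rho(A)\le1$. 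Every eigenvalue of $\mathcal{G}$ is $1-\lambda$ with $\lambda\in\mathrm{spec}(A)$, $|\lambda|\le1$, and $\lambda\ne1$ by hypothesis; its real part is $1-\mathrm{Re}\,\lambda\ge1-|\lambda|\ge0$, vanishing only when $\lambda=1$, which is excluded. Hence $\mathcal{G}$ is positive-stable, and it is diagonalizable because $A$ is.

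Next I would diagonalize $\mathcal{G}=S\Lambda S^{-1}$ over $\mathbb{C}$ and pass to the eigencoordinates $\tilde\delta_t\equiv S^{-1}\delta_t$, in which the recursion $\delta_{t+1}=A\delta_t$ decouples into $(\tilde\delta_{t+1})_j=\lambda_j(\tilde\delta_t)_j$. Consequently, in the inner product attached to this eigenbasis the norm of $\delta_t$ is non-increasing (all $|\lambda_j|\le1$) and strictly decreasing as soon as $\delta_t$ has a nonzero component along some eigendirection with $|\lambda_j|<1$; in particular this norm, and hence $\|\delta_t\|$ by norm equivalence, tends to $0$ whenever $A$ has no eigenvalue of modulus exactly $1$. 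This already delivers the fading-memory consequence used later in Prop.~\ref{prop:suffice_convergence}.

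To upgrade this to the \emph{monotone Euclidean} statement of the lemma one must (i) replace the eigenbasis weight by the identity and (ii) exclude unit-modulus eigenvalues of $A$ other than $1$. For (i) I would use the orthogonal splitting of $\mathcal{Q}(N)$ into the span of $A$'s unit eigenvectors and its complement (cf.\ Lemma~\ref{lem:positiveity_subspace}, Cor.~\ref{cor:non-vanishing-coherenc-influx}), which is the whole space precisely when $1\notin\mathrm{spec}(A)$ and on which the eigenbasis of $A$ can be chosen orthonormal, so that $\mathbb{P}_{\mathcal{Q}_\mathbf{b}}$ is trivial and $\mathcal{G}^T+\mathcal{G}\succ0$. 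I expect this to be the main obstacle: for an arbitrary diagonalizable matrix, positive-stability of the spectrum does not force a positive-definite symmetric part — an ill-conditioned similarity $S$ spoils it, e.g.\ $\mathcal{G}=\bigl(\begin{smallmatrix}1/2&-10\\0&3/2\end{smallmatrix}\bigr)$ — and $\rho(A)\le1$ does not by itself rule out an eigenvalue such as $\lambda=-1$ that keeps $\|\delta_t\|$ constant along one direction. The delicate part is therefore to extract, from complete positivity and trace preservation of the PTM together with $1\notin\mathrm{spec}(A)$, both the orthonormality of the relevant eigenbasis and the absence of unit-modulus eigenvalues; absent that, one retreats to the weaker but sufficient conclusion $\|\Delta\mathbf{r}^{(t)}\|\to0$ from the previous paragraph.
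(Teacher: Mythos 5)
Your proposal does not close the proof, and you say so explicitly --- but the obstruction you identify is genuine, and it is exactly the step on which the paper's own proof is silent. The paper argues as you anticipate in your first paragraph: it reduces to the positive-definiteness criterion of Lem.~\ref{lem:equiv_strict_dec_hs} and then writes the diagonalization as $\mathcal{G}(\mathbf{u}_t)=UD^*U^\dagger$, concluding $\mathcal{G}+\mathcal{G}^T=2U\,\mathrm{Re}(D)\,U^\dagger\succ0$. That computation is valid only when $U$ is unitary, i.e.\ when $\mathcal{G}(\mathbf{u}_t)$ is \emph{normal}; for a merely diagonalizable $\mathcal{G}=SDS^{-1}$ one gets $\mathcal{G}+\mathcal{G}^T=SDS^{-1}+(S^T)^{-1}DS^T$, which is not congruent to $2\,\mathrm{Re}(D)$. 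Your counterexample $\mathcal{G}=\bigl(\begin{smallmatrix}1/2&-10\\0&3/2\end{smallmatrix}\bigr)$ --- diagonalizable, spectrum in the open right half-plane, yet $\mathcal{G}+\mathcal{G}^T$ indefinite --- shows that the inference ``positive-stable and diagonalizable $\Rightarrow$ positive-definite symmetric part'' is false in general, so the hypothesis of the lemma as stated does not support the paper's argument. In short, the gap you could not bridge is not a deficiency of your attempt relative to the paper: the paper's proof has the same hole, and your proposal locates it precisely. The statement can be repaired either by strengthening the hypothesis to ``$\mathcal{G}(\mathbf{u}_t)$ is normal (admits an orthonormal eigenbasis)'' or by weakening the conclusion to the asymptotic one you actually establish ($\|\Delta\mathbf{r}^{(t)}\|\to0$ under the additional exclusion of all unit-modulus eigenvalues), which is all that downstream results such as Prop.~\ref{prop:suffice_convergence} ultimately need.

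Your two secondary observations are also correct and worth recording. First, excluding only the eigenvalue $1$ of $WR(\mathbf{u}_t)$ does not exclude other unit-modulus eigenvalues (e.g.\ $-1$), along whose eigendirections $\|\delta_t\|$ is conserved, so even the weaker convergence conclusion needs more than the stated hypotheses. Second, per-step strict decrease of the Euclidean norm for \emph{all} pairs of states is equivalent to $\sigma_{\mathrm{max}}\boldsymbol{(}WR(\mathbf{u}_t)\boldsymbol{)}<1$, a condition the paper itself argues (Rem.~\ref{rem:spectral_norm_reset}) is violated in its working examples; this is consistent with the fact that the criterion of Lem.~\ref{lem:equiv_strict_dec_hs} is obtained from a first-order ``differentiation'' that discards the nonnegative term $\|(WR(\mathbf{u}_t)-I)\delta_t\|^2$, so positive definiteness of $\mathcal{G}+\mathcal{G}^T$ is not by itself sufficient for monotone decrease in discrete time. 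Your retreat to the eigenbasis-weighted (Lyapunov) norm is the standard and correct way to salvage a contraction statement here; it is essentially the $P$-weighted Schur-stability route the paper itself uses in condition 3 of Prop.~\ref{prop:suffice_convergence}.
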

\begin{proof}
	Because  $WR(\mathbf{u_t})$ does not have an eigenvalue 1 for every $\mathbf{u}_t\in \mathcal{X}$, all of the eigenvalues of $\mathcal{G}(\mathbf{u}_t) = \left(I - WR(\mathbf{u}_t)\right)$ have positive real parts. If $\mathcal{G}(\mathbf{u}_t)$ is diagonalizable as $UD^*U^\dagger$, then, $\mathcal{G}(\mathbf{u}_t)^T = \mathcal{G}(\mathbf{u}_t)^\dagger  =(UD^*U^\dagger)^\dagger = UDU^\dagger$ because $\mathcal{G}(\mathbf{u}_t)$ is real. Therefore, $\mathcal{G}(\mathbf{u}_t) + \mathcal{G}(\mathbf{u}_t)^T = 2U\mathrm{Re}(D)U^\dagger$, where $\mathrm{Re}(D)$ is the real part of $D$. This implies that $\mathcal{G}(\mathbf{u}_t) + \mathcal{G}(\mathbf{u}_t)^T$ is positive definite. Therefore,  $\|\delta_{t} \|$ strictly decreases with respect to $t$ from Lem.~\ref{lem:equiv_strict_dec_hs}, which proves the lemma.
\end{proof}

\section{Proofs of theorems}
\subsection{Proof of Rem.~\ref{rem:unital}}
\begin{proof}
	It is obvious that $|I\rangle\rangle = \begin{pmatrix}
		1\\
		\mathbf{0}
	\end{pmatrix}$. Therefore, a state update by a PTM $\hat{O} = \begin{pmatrix}
		1 & \mathbf{0}^T\\
		\mathbf{b} & W
	\end{pmatrix}$ reads $\hat{O}|I\rangle\rangle = \mathbf{b}$, so $\mathbf{b}=\mathbf{0}$ if $\hat{O}$ is unital. Conversely, if $\mathbf{b}=0$, then, $\hat{O}|I\rangle\rangle = |I\rangle\rangle$, which imples that $\hat{O}$ is unital.
\end{proof}

\subsection{Proof of Lem.~\ref{lem:mq_bloch}}
\begin{proof}
	
	1. A set of all normalized $N$-qubit Pauli strings $\left\{\frac{P_i}{\sqrt{2^N}}\right\}$ forms an orthonormal basis set of $\mathbb{C}^{2^N}$ under the Hilbert–Schmidt inner product $\langle A, B\rangle = \mathrm{tr}(A^\dagger B)$. Therefore, $\rho = \frac{1}{2^N}\sum_i P_i \mathrm{tr}(\rho P_i)$ is an orthonormal basis expansion of $\rho$.
	
	2. Given a unitary transformation $U$ in a density matrix formulation, 
	\begin{equation}
	\label{eqn:unitary_transformed_pauli_basis}
		\|\mathbf{r}\|^2 = \sum_i \mathrm{tr}(P_i U \rho U^\dagger)^2 = \sum_i \mathrm{tr}(U^\dagger P_i U \rho)^2.
	\end{equation}
	Because $\left\{\frac{U^\dagger P_i U}{\sqrt{2^N}}\right\}$ also forms an orthonormal basis set of $\mathbb{C}^{2^N}$ under the Hilbert–Schmidt inner product, the representation in Eq.~\eqref{eqn:unitary_transformed_pauli_basis} is a change of orthonormal basis set from the original Pauli strings without scaling. Therefore, $\|c_N\mathbf{r}\|$ is unchanged.
	
	3. First, as a reminder, any mixed state density matrix $\rho_{mix}$ can be written as a weighted sum of pure state density matrices $\rho_{pure}^{(i)}$. That is,
		\begin{equation}
			\forall \rho_{mix},\ \exists \{\rho_{pure}^{(i)}\}\text{ s.t. } \rho_{mix} = \sum_i p_i \rho_{pure}^{(i)}\text{ and } \sum_i p_i = 1.
		\end{equation}
		By the Def.~\ref{dfn:multi-qubit-bloch}, the multi-qubit Bloch vector $c_N\mathbf{r}_{mix}$ can also be written as a sum of pure state Bloch vectors $\{c_N \mathbf{r}_{pure}^{(i)}\}$. That is, 
		\begin{equation}
		\label{eqn:bloch_linear_interpolation}
			\forall \mathbf{r}_{mix},\ \exists \{\mathbf{r}_{pure}^{(i)}\}\text{ s.t. } \mathbf{r}_{mix} = \sum_i p_i \mathbf{r}_{pure}^{(i)}\text{ and } \sum_i p_i = 1.
		\end{equation}
		Using the triangle inequality, 
		\begin{equation}
			\forall \mathbf{r}_{mix},\ \|\mathbf{r}_{mix}\| \leq \sum_i p_i \|\mathbf{r}_{pure}^{(i)}\| = \|\mathbf{r}_{pure}\|.
		\end{equation}
		Because any quantum system has only mixed or pure states, $\|\mathbf{r}\|$ is maximized if and only if it is pure.
	Then, from the definition in Eq.~\eqref{eqn:q_n}, 
	\begin{equation}
		\|\mathbf{r}\|^2 = \sum_{i\geq 1} \mathrm{tr}\left(P_{i}\rho\right)^2.
	\end{equation} 
	One of the $N$-qubit pure states is
	\begin{equation}
		\rho = |P_{k^{\otimes N}}\rangle \langle P_{k^{\otimes N}}|\text{ s.t }P_{k^{\otimes N}}|P_{k^{\otimes N}}\rangle = |P_{k^{\otimes N}}\rangle,\quad k \geq 1,
	\end{equation} where $P_{k^{\otimes N}} \equiv \bigotimes_i \sigma_k^{(i)}$. 
  
 For example, that is $|0\rangle^{\otimes N}$ in state vector representation. In this case, $\mathrm{tr}\left(P_{i}\rho\right) = 0$ if any of $P_i$ contains $\sigma_l$ s.t. $l \notin \{0,\ k\}$. The number of non-zero terms is then $\left|\{\sigma_0, \sigma_k\}^{\otimes N}\right| - \left|\{\sigma_0^{\otimes N}\}\right| = 2^N - 1$. For every such term $\mathrm{tr}\left(P_i\rho\right)^2 = 1$, so $\|\mathbf{r}\|^2 \leq 2^N-1$, which proves the proposition.
 \\
	4. Because a linear combination of two density matrices, $\rho_1$ and $\rho_2$, that can be written as $\rho_3 = p\rho_1 + (1-p)\rho_2$ where $p \in [0, 1]$, is also a density matrix, a linear combination of coherence vector is also a proper coherence vector. Specifically, a linear combination of a coherence vector $\mathbf{r}$ with a completely mixed state that can be written as follows:
	\begin{equation}
		\mathbf{r}' = p\mathbf{r} + (1-p) \mathbf{0}= p\mathbf{r} \in \mathcal{Q}(N).
	\end{equation}
	Therefore, $0\leq \forall p \leq 1$, $p\mathbf{r} \in \mathcal{Q}(N)$.

\end{proof}

\subsection{Proof of Lem.~\ref{lem:suffice_trad_esp}}

\begin{proof}
	For QRC, $\mathcal{S} = \mathcal{Q}(N)$ and $f(\{\mathbf{u}_\tau\}_{\tau \leq t} ; s_0) = \mathbf{r}^{(t)}$ are the specific forms of variables in Eq.~\eqref{eqn:esp_ext2}. By the state update rule of QRC in Eq.~\eqref{eqn:qrc_gen_form}, the state difference can be defined as follows:
	\begin{equation}
	\label{eqn:state_diff_qrc}
	\begin{aligned}
		 \left\|\Delta \mathbf{r}^{(t)}\left(\mathbf{r}^{(0)},\mathbf{r}^{(0)'}\right)\right\|  &\equiv \|\mathbf{r}^{(t)} - \mathbf{r}^{'(t)}\| \\
				& = \left\|\left(\prod_{\tau \leq t}WR(\mathbf{u}_{\tau})\right)\left({\mathbf{r}}^{(0)} - {\mathbf{r}}^{(0)'} \right) \right\|\\
			\end{aligned}
	\end{equation}
	If Eq.~\ref{eqn:spectral_converge} holds, then, the last form of Eq.~\eqref{eqn:state_diff_qrc} can be upper bounded by
	\begin{equation}
	\begin{aligned}
			\sigma_{\mathrm{max}}\left(\prod_{\tau \leq t}WR(\mathbf{u}_{\tau})\right) \|\mathbf{r}^{(0)} - \mathbf{r}^{(0)'}\| \underset{t \to \infty} \to 0,
	\end{aligned}
	\end{equation}
	for any $\{\mathbf{u}_t \in \mathcal{X}\}$. Therefore, for any $\{\mathbf{u}_t \in \mathcal{X}\}$ 
	\begin{equation}
				\forall (\mathbf{r}^{(0)}, \mathbf{r}^{'(0)}),\ \|\mathbf{r}^{(t)} - \mathbf{r}^{'(t)}\| \underset{t \to \infty}\to  0,
	\end{equation}
	which proves the lemma.
\end{proof}

\subsection{Proof of Prop.~\ref{prop:coherence-influx}}
\begin{proof}
If $\mathbf{b} = 0$, then, the state update rule of QRC becomes
	\begin{equation}
			\mathbf{r}^{(t)} = \left(\prod_{\tau \leq t}WR(\mathbf{u}_{\tau})\right)\mathbf{r}^{(0)},
	\end{equation}
	and the denominator of the quantity below becomes a constant 0 when $\mathbf{r}^{(0)} = \mathbf{0}$.
\begin{equation}
\label{eqn:b_necessary}
	\frac{ \|\mathbf{r}^{(t)} - \mathbf{r}^{'(t)}\|}{\sqrt{\min\left(\overline{\mathrm{Var}}_w^t[\mathbf{r}^{(t)}; \mathbf{r}^{(0)}], \overline{\mathrm{Var}}_w^t[\mathbf{r}^{(t)}; \mathbf{r}^{'(0)}]\right)}}
\end{equation}
Therefore, Eq.~\eqref{eqn:b_necessary} diverges for some $\mathbf{r}^{'(0)}$, which shows that the QRC does not satisfy the nonstationary ESP.
\end{proof}
\subsection{Proof of Thm.~\ref{thm:sufficient_ns_esp}}
\begin{proof}
		By Lem.~\ref{lem:suffice_trad_esp}, condition 3 yields the below property.
	\begin{equation}
	\label{eqn:contracting-wrr}
	\begin{aligned}
		&\forall \{\mathbf{u}_t\},\ \forall (\mathbf{r}^{(0)}, \mathbf{r}^{'(0)}),\\
		 &\left\|\Delta \mathbf{r}^{(t)}\left(\mathbf{r}^{(0)},\mathbf{r}^{(0)'}\right)\right\| \equiv \|\mathbf{r}^{(t)} - \mathbf{r}^{'(t)}\|  \underset{t \to \infty} \to 0\\
	\end{aligned}
	\end{equation}
	In addition, let
	\begin{equation}
	\label{eqn:w_tilde_b_t}
		\begin{aligned}
			\mathbf{b}^{(t)} \equiv \sum^{t-1}_{\tau=0}\left(\prod_{1\leq n \leq \tau}WR(\mathbf{u}_{t - n})\right)\mathbf{b}.
		\end{aligned}
	\end{equation}

	Then, the general form of input-driven state evolution is
	\begin{equation}
	\begin{aligned}
		\mathbf{r}^{(t)} &= \mathbf{b}^{(t)} + \prod_{\tau \leq t}WR(\mathbf{u}_{\tau})\mathbf{r}^{(0)}\\
		&\underset{t \to \infty} \to \lim_{t\to \infty}  \mathbf{b}^{(t)}.
		\end{aligned}
	\end{equation}
	Here, the second term in the right-hand side of the first row vanishes when $t \to \infty$ because 
	\begin{equation}
	\begin{aligned}
		\left\|\prod_{\tau \leq t}WR(\mathbf{u}_{\tau})\mathbf{r}^{(0)}\right\| &= \left\|\Delta \mathbf{r}^{(t)}\left(\mathbf{r}^{(0)},\mathbf{0}\right)\right\|\\
		&\underset{t \to \infty} \to 0.
		\end{aligned}
	\end{equation}
	as implied in Eq.~\eqref{eqn:contracting-wrr}.
	
	$\mathbf{b}^{(t)}$ is invariant under input-driven dynamics of the next time step $t+1$ if and only if
	\begin{equation}
	\label{eqn:b_to_fixed_point}
	\begin{aligned}
		\mathbf{b}^{(t + 1)} - \mathbf{b}^{(t)} &= \mathbf{b} + WR(\mathbf{u}_t)\mathbf{b}^{(t)} - \mathbf{b}^{(t)}\\
		&= \mathbf{b} - \boldsymbol{\left(}I - WR(\mathbf{u}_t)\boldsymbol{\right)}\mathbf{b}^{(t)}\\
		&= 0.
	\end{aligned}
	\end{equation}
Because of condition 1, the inverse matrix $\left(I - WR(\mathbf{u}_t)\right)^{-1}$ always exists. Therefore, Eq.~\eqref{eqn:b_to_fixed_point} is equivalent to
\begin{equation}
\label{eqn:i_wr_inverse_b}
	\boldsymbol{\left(}I - WR(\mathbf{u}_t)\boldsymbol{\right)}^{-1}\mathbf{b} = \mathbf{b}^{(t)}.
\end{equation}
Let us denote $\mathbf{b}^{(t)} = \mathbf{b}^{(t)}(\mathbf{u}_t)$ to indicate that $\mathbf{b}^{(t)}$ depends on $\mathbf{u}_t$. 
Because of the condition 2, $\mathbf{b}^{(t)}(\mathbf{u}_t)$ is an injective from $\mathcal{X}$ to $\mathcal{Q}(N)$. That is, for any $\mathbf{u}, \mathbf{v} \in \mathcal{X}$ such that $\|\mathbf{u}_t - \mathbf{v}_t\| > \delta > 0$, there exists $\epsilon > 0$ such that $\|\mathbf{b}^{(t)}(\mathbf{u}_t) - \mathbf{b}^{(t)}(\mathbf{v}_t)\| > \epsilon$. Therefore, even if there exists one input $\mathbf{u}_t$ such that Eq.~\eqref{eqn:i_wr_inverse_b} holds, it does not hold for the other input $\mathbf{v}_t \neq \mathbf{u}_t$.
	Because $\underset{t\to \infty}{\liminf}\ \mathrm{Var}_w^t[\{\mathbf{u}_\tau\}] > 0$ indicates that at least one pair of different inputs in time window of size $w$, there must be at least one pair of different $\mathbf{b}^{(t)}(\mathbf{u}_t)$ in time window of size $w$. This implies that $\mathrm{Var}_w^t(\mathbf{r}^{(t)}; \mathbf{r}^{(0)}) > 0$ for any $\mathbf{r}^{(0)}$. Therefore, we can conclude that the nonstationary ESP holds under the unitary input encoding $R$. That is, 
	\begin{equation}
	\label{eqn:qrc_ns_esp}
	\begin{aligned}
		\lim_{t \to \infty}\frac{ \|\mathbf{r}^{(t)} - \mathbf{r}^{'(t)}\|}{\sqrt{\min\left(\overline{\mathrm{Var}}_w^t[\mathbf{r}^{(t)}; \mathbf{r}^{(0)}], \overline{\mathrm{Var}}_w^t[\mathbf{r}^{(t)}; \mathbf{r}^{'(0)}]\right)}}  = 0.
		\end{aligned}
	\end{equation}

\end{proof}

\subsection{Proof of Lem.~\ref{lem:monotonic_hs_dist}}
\begin{proof}
This is a composition of Lem.~\ref{lem:equiv_strict_dec_hs} and Lem.~\ref{lem:suffice_strict_dec_hs}.
\end{proof}
\subsection{Proof of Prop.~\ref{prop:suffice_convergence}}
\begin{proof}
Conditions 1 and 2 are direct consequences of Lem.~\ref{lem:monotonic_hs_dist}.

For condition 3, we follow the proof of Thm.~4.1 in \cite{yildiz2012re}. Here, condition 3 ensures that for all $\mathbf{u}_t \in \mathcal{X}$, there exists a single positive definite symmetric matrix $P \succ 0$ such that $\left(WR(\mathbf{u}_t)\right)^TPWR(\mathbf{u}_t) - P \prec 0$. Suppose we have two system states $\mathbf{r}^{(t)}$ and $\mathbf{r}^{'(t)}$ at time $t$, that were initialized at time $t=0$ with different initial states, and evolved with identical input sequences up to time $t$. We have
\begin{equation}
\begin{aligned}
	\Delta \mathbf{r}^{(t+1)} &\equiv \mathbf{r}^{(t+1)} - \mathbf{r}^{'(t+1)}\\
	& = WR(\mathbf{u}_t)\left(\mathbf{r}^{(t)} - \mathbf{r}^{'(t)}\right)\\
	& = WR(\mathbf{u}_t)\Delta \mathbf{r}^{(t)}.
\end{aligned}
\end{equation}
We need to show that $\left\|\Delta \mathbf{r}^{(t)}\right\| \underset{t \to \infty}\to 0$ for any $\{\mathbf{u}_t\}$ under condition 3. For $x \in \mathbb{R}^{4^N-1}$, let $F(x) = x^T P^* x$ for some symmetric $P^* \succ 0$; then, $F(x) = 0$ if and only if $x = 0$. Because $F(x) \geq 0$, it is sufficient to show that $F(\Delta\mathbf{r}^{(t)})$ is a strongly decreasing sequence of $t$ to prove the convergence of $\|\Delta\mathbf{r}^{(t)}\|$ to 0 as $t \to \infty$. From here we write $\Delta\mathbf{r}^{(t)}$ as $z_t$ for simplicity. Let us define 
\begin{equation}
	\Delta F_{t+1} \equiv F(z_{t+1}) - F(z_t).
\end{equation}
A simple calculation reads
\begin{equation}
	\Delta F_{t+1} = z_t^T\left[\boldsymbol{\left(}WR(\mathbf{u}_t)\boldsymbol{\right)}^TP^*WR(\mathbf{u}_t) - P^*\right]z_t.
\end{equation}
We know that for any $\{\mathbf{u}_t\}$, there exists respective $P \succ 0$ such that $\boldsymbol{\left(}WR(\mathbf{u}_t)\boldsymbol{\right)}^TPWR(\mathbf{u}_t) - P \prec 0$. Therefore, by taking $P^* = P$, the strong decrease of $F(z_t)$ for each $t$ is proved. This results in $F(z_t) \underset{t \to \infty} \to 0$ and that implies $\|\Delta\mathbf{r}^{(t)}\| \underset{t \to \infty}\to 0$ for any $\{\mathbf{u}_t\}$.

If $\lim_{t\to \infty}\sigma_{\mathrm{max}}\boldsymbol{(}\prod_t WR(\mathbf{u}_t)\boldsymbol{)} > 0$, then, there exists initial difference $\Delta\mathbf{r}^{(0)} \neq 0$ such that $\|\Delta\mathbf{r}^{(t)}\| > 0$ when $t \to \infty$. However, it contradicts the conclusion of the convergence of $\|\Delta\mathbf{r}^{(t)}\|$ above. Therefore, $\lim_{t\to \infty}\sigma_{\mathrm{max}}\boldsymbol{(}\prod_t WR(\mathbf{u}_t)\boldsymbol{)} = 0$ under condition 3 is proved.

If condition 4 holds---that is, there exists a matrix norm $\|\cdot\|_M$ such that $\|WR(\mathbf{u}_t)\|_M < 1$ for every $\mathbf{u}_t \in \mathcal{X}$---then, by the sub-multiplicativity of matrix norms reads $\left\|\prod_{\tau=0}^t \boldsymbol{\left(}WR(\mathbf{u}_\tau)\boldsymbol{\right)}\right\|_M \leq \prod_{\tau=0}^t \left\|WR(\mathbf{u}_\tau)\right\|_M \underset{t\to\ \infty}{\to} 0 $ for every $\{\mathbf{u}_t \in \mathcal{X}\}_t$. Because for any different matrix norms $\|\cdot\|_A$ and $\|\cdot\|_B$, there exists $\alpha, \beta > 0$ such that $\alpha\|W\|_A\leq \|W\|_B \leq \beta \|W\|_A$, $\|W\|_B = 0$ implies $\|W\|_A = 0$. Using this fact, if there exists a matrix norm $\|\cdot\|_M$ that satisfies the above convergence, the spectral norm also converges in the same situation. Therefore, $\sigma_{\mathrm{max}}\boldsymbol{(}\prod_t WR(\mathbf{u}_t)\boldsymbol{)} \underset{t \to \infty}\to 0$ is implied, which proves the proposition.
\end{proof}

\subsection{Proof of Rem.~\ref{rem:spectral_norm_reset}}
\begin{proof}
Because there is no dissipation except for the reset operations, the system dynamics is unitary. Let us remember that multiplication of any orthogonal matrix does not change the spectral norm. Therefore, the spectral norm of the total system dynamics is the spectral norm of the reset operation irrespective of input encoding. A single-qubit reset operation has a PTM of form $\Gamma(1)_Z \equiv \begin{pmatrix}
		1 & \mathbf{0}^T\\
		\mathbf{1}_Z & \mathrm{O}
	\end{pmatrix}$, where $\mathrm{O} \in \mathbb{R}^{3 \time 3}$ denotes an all-zero matrix, and $\mathbf{1}_Z \equiv \begin{pmatrix}
		0 & 0 & 1
	\end{pmatrix}^T$. Overall encoding can be written as $\hat{E} \equiv \left(\Gamma(1)_Z^{\otimes M}I^{\otimes N-M}\right)$, where $M$ denotes the number of qubits to be replaced within the $N$-qubits system, and the indices of qubits are sorted so that first $M$-qubits are reset. Let $\hat{E} = \begin{pmatrix}
		1 & 0\\
		\mathbf{b}_E & E
	\end{pmatrix}$.
	
	We will prove $\sigma_\mathrm{max}(E) = 2^\frac{M}{2}$ with an induction with respect to $M$ when $M < N$. If $M=1$ and $N\geq 2$, the matrix form of $\hat{E}$, which we write as $\hat{E}_{1, N}$, can be written as follows:
	\begin{equation}
	\begin{aligned}
\hat{E}_{1, N} = \begin{pmatrix}
I_{4^{N-1}} & \mathbf{0}_{3\cdot 4^{N-1}}^T\\
\mathbf{0}_{2\cdot 4^{N-1}} & \mathrm{O}\\
I_{4^{N-1}} & \mathrm{O}\\	
\end{pmatrix}
\end{aligned}, 
\end{equation}
where, $I_k$ denotes an identity matrix of $k$ dimensions, $\mathbf{0}_k$ denotes an all-zero column vector of $k$ dimensions, and $\mathrm{O}$ denotes an all-zero matrix of appropriate dimensions. Therefore, the length of any of the one-hot vector $\mathbf{r}_{\mathrm{oh}}$ that have value 1 in one of the first $4^{N-1}-1$ indices will be expanded to two, which implies that $\sigma_{\mathrm{max}}(E) = \sqrt{2} = 2^{\frac{M}{2}}$.

Suppose that $\sigma_\mathrm{max}(E) = 2^\frac{K}{2}$ holds for $K$-qubits reset. Then, $\hat{E}_{M+1, N+1}$ can be written as follows:
\begin{equation}
\begin{aligned}
	E_{K+1, N+1} &= \Gamma(1)_Z \otimes \hat{E}_{K, N}\\
	&= \begin{pmatrix}
		\hat{E}_{K, N} & \mathbf{0}_{3\cdot 4^{N}}^T\\
		\mathbf{0}_{2\cdot 4^{N}} & \mathrm{O}\\
		\hat{E}_{K, N} & \mathrm{O}
	\end{pmatrix},
	\end{aligned}
\end{equation}
which implies that $\sigma_\mathrm{max}^2(E_{M, N}) = 2 \sigma_\mathrm{max}^2(E_{K, N})$, so $\sigma_\mathrm{max}(E_{M, N}) = \sqrt{2} \sigma_\mathrm{max}^2(E_{K, N}) = 2^{\frac{M}{2}}$.
Because we already proved the fact for $\hat{E}_{1, N}$ with any $N \geq 1$, this also implies that $\sigma_\mathrm{max}(E) = 2^\frac{M}{2}$ for any $M$-qubits reset when $M < N$ within $N$-qubits system, which proves the remark.
\end{proof}

\subsection{Proof of Lem.~\ref{lem:positiveity_subspace}}
\begin{proof}
	Suppose that we have a PTM $\hat{O} = \begin{pmatrix}
	1 & \mathbf{0}^T\\
	\mathbf{b} & W
\end{pmatrix} \in \hat{O}(N)$. If $\|W\mathbf{b}\| > 0$, then, there exists an invariant subspace $\mathcal{Q}_\mathbf{b}$ of $W$ such that $\mathbf{b} \in \mathcal{Q}_\mathbf{b}$ and $\vec\lambda \notin \mathcal{Q}_\mathbf{b}$ for all eigenvectors $\vec\lambda$ of $W$ that have a corresponding eigenvalue of 1. That is, $W$ does not have an eigenvalue one in $\mathcal{Q}_\mathbf{b}$; hence, $(I - W)^T + (I - W)$ is positive definite in $\mathcal{Q}_\mathbf{b}$. Namely, $\mathbb{P}_{\mathcal{Q}_\mathbf{b}}\left[(I - W)^T + (I - W)\right]$ is positive definite where $\mathbb{P}_{\mathcal{Q}_\mathbf{b}}$ is a projector from $\mathcal{Q}(N)$ onto $\mathcal{Q}_\mathbf{b}$.
\end{proof}

\subsection{Proof of Cor.~\ref{cor:subspace_ns_esp_suffice}}
\begin{proof}
	If $\|W\mathbf{b}\| > 0$, there exist invariant subspaces of $W$: $\mathcal{Q}_\mathbf{b}$ and $\mathcal{Q}_\mathbf{b}^\perp$ such that $\mathcal{Q}_\mathbf{b} \oplus \mathcal{Q}_\mathbf{b}^\perp = \mathcal{Q}(N)$. If there exists an input encoding $\hat{R}$ such that $R$ also has $\mathcal{Q}_\mathbf{b}$ and $\mathcal{Q}_\mathbf{b}^\perp$ as its invariant subspaces, then the state update rule can be written as the direct sum below.
	\begin{equation}
	\begin{aligned}
		\mathbf{r}_\mathbf{b}^{\perp (t+1)} \oplus \mathbf{r}_\mathbf{b}^{(t+1)} = 
			\mathbb{P}_{\mathcal{Q}_b}WR(\mathbf{u}_t)\mathbf{r}^{(t)} \oplus \mathbb{P}_{\mathcal{Q}_b^\perp}WR(\mathbf{u}_t)\mathbf{r}^{(t)} + \mathbf{b}.
		\end{aligned}
	\end{equation}
Because $\mathbf{b} \in \mathcal{Q}_\mathbf{b}$, a state update rule in the subspace $\mathcal{Q}_\mathbf{b}$ become as follows:
\begin{equation}
\label{eqn:subspace_update_rule}
	[\mathbb{P}_{\mathcal{Q}_b}\mathbf{r}_\mathbf{b}^{(t+1)}]_{\mathcal{Q}_\mathbf{b}} = [\mathbb{P}_{\mathcal{Q}_b}WR(\mathbf{u}_t)]_{\mathcal{Q}_\mathbf{b}}[\mathbb{P}_{\mathcal{Q}_b}\mathbf{r}^{(t)}]_{\mathcal{Q}_\mathbf{b}} + [\mathbf{b}]_{\mathcal{Q}_\mathbf{b}},
\end{equation}
where we denote a representation of any matrix or vector in $\mathcal{Q}_\mathbf{b}$ by its orthonormal basis set as $[\cdot]_{\mathcal{Q}_\mathbf{b}}$.

Therefore, the same discussion as the proof of Thm.~\ref{thm:sufficient_ns_esp} reads the sufficient condition of this QRC's subspace nonstationary ESP as all of the following statements:
\begin{enumerate}
 	\item Inverse matrices $[\mathcal{G}_\mathbf{b}(\mathbf{u}_t)]_{\mathcal{Q}_\mathbf{b}}^{-1}$ always exist for all $\mathbf{u}_t \in \mathcal{X}$, where  $\mathcal{G}_\mathbf{b}(\mathbf{u}_t) \equiv \mathbb{P}_{\mathcal{Q}_\mathbf{b}}\left(I - WR(\mathbf{u}_t)\right)$ and $[\cdot]_{{\mathcal{Q}_\mathbf{b}}}$ denotes a representation of a vector or a matrix in $\mathcal{Q}_\mathbf{b}$ with its orthonormal basis set.
 	\item $\mathbf{u}_t \mapsto \mathcal{G}_\mathbf{b}^{-1}(\mathbf{u}_t)\mathbf{b}$ is an injective map from $\mathcal{X}$ to $\mathcal{Q}_\mathbf{b}$.
 	\item $s_t^{\mathbf{b}}\left(W, R; \{\mathbf{u}_t\}\right) \equiv \sigma_{\mathrm{max}}\left(\mathbb{P}_{\mathcal{Q}_\mathbf{b}}\prod_t WR(\mathbf{u}_t)\right) \underset{t\to \infty}\to 0$ for any $\{\mathbf{u}_t \in \mathcal{X}\}$.
\end{enumerate}

\end{proof}

\subsection{Proof of Thm.~\ref{thm:mrc-ns-esp}}
\begin{proof}
Let us again note that $\mathcal{X} = [-1, 1]$. Let two different state sequences $\{x_t\}$ and $\{x_t'\}$ be initiated with different initial states $x_0$ and $x_0'$, respectively, and define
\begin{equation}
\begin{aligned}
	|\Delta x_t| \equiv |x_t - x_t'|.
\end{aligned}
\end{equation}
From Eq.~\eqref{eqn:mrc-general}, 
\begin{equation}
\label{eqn:mrc_converge}
	|\Delta x_t| = \left|a^{t+1}\prod_{\tau=0}^{t}u_\tau\right||\Delta x_0| \sim \mathrm{O}(|a|^t).
\end{equation}
In addition,
\begin{equation}
	x_{t+1} - x_t  = (a u_t - 1)x_t + b.
\end{equation}

First, if $a=0$, $x_t = b$ for all $t$. Therefore, nonstationary ESP does not hold.

Second, if $|a| > 1$, $|\Delta x_t| \underset{t \to \infty}\to \infty$ for any $\{u_t\}$ provided that $|\Delta x_0| > 0$. Therefore, nonstationary ESP does not hold.

Third, if $0 < |a| < 1$, then $x_{t+1} - x_t  = 0$ implies $x_t = \frac{b}{1 - au_t}$. In this case, $|b| > 0$ and $\mathrm{Var}_w(\{u_t\}) > 0$ induce $\mathrm{Var}_w(\{x_t\}) > 0$. Because Eq.~\eqref{eqn:mrc_converge} ensures state-difference decay, nonstationary ESP holds.

Finally, if $|a| = 1$, then $x_{t+1} - x_t  = 0$ only if $u_t = \mathrm{sign}(a)$. Therefore, $\mathrm{Var}_w(\{u_t\}) > 0$ ensures $\mathrm{Var}_w(\{x_t\}) > 0$, provided that $|b| > 0$. However, because Eq.~\eqref{eqn:mrc_converge} does not converge, nonstationary ESP does not hold.

Overall, if $|b| > 0$ and  $0 < |a| < 1$, nonstationary ESP holds.

Conversely, if nonstationary ESP holds, then Eq.~\eqref{eqn:mrc_converge} must converge, and there must be no input sequences such that $x_{t+1} - x_t  = 0$. Therefore, $0 < |a| < 1$ and $|b|$ are required, which proves the lemma.

\end{proof}

We can observe that $\frac{b}{au_t - 1}$ which appears in the proof above, corresponds to $\mathcal{G}(\mathbf{u}_t)^{-1}\mathbf{b}$. This is another resemblance between QRC and mRC.
\subsection{Proof of Cor.~\ref{cor:mrc-bounded}}
\begin{proof}
	Please note that if $x_0 = 0$, then $|x_1| = |b|$. Therefore, we can always assume $|x_0| > 0$ if $|b| > 0$. If $|a| \geq 1$ and $|b| > 0$, then for a constant input sequence $\{u_t = 1\}$,
\begin{equation}
\begin{aligned}
	x_{t} &= ax_{t-1} + b\\
	&\geq a^{t}x_0 + tb \underset{t \to \infty} \to \pm \infty.
\end{aligned}
\end{equation}
However, this contradicts the assumption of $x_t \in \mathcal{S}$ being bounded. Therefore, $|a| < 1$ is required.
From Thm.~\ref{thm:mrc-ns-esp}, it then satisfies nonstationary ESP. 

\end{proof}
\subsection{Preparation for proof of Rem.~\ref{rem:mrc-ipc}}
\begin{lem}{Variance of uniform distribution \cite{CaseBerg_2009}}
\begin{equation}
	\mathbb{E}_{u \sim \mathrm{Uniform}([-1, 1])}[u^2] = \frac{1}{3}.
\end{equation}	
\end{lem}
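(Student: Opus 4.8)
The plan is to evaluate the expectation directly from the definition of the uniform distribution, since this is a one-line computation. First I would recall that the probability density function of $u \sim \mathrm{Uniform}([-1,1])$ is the constant $\tfrac{1}{2}$ on the interval $[-1,1]$ and zero elsewhere, which is normalized correctly because $\int_{-1}^{1} \tfrac{1}{2}\,du = 1$.

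Next I would write the second moment as an integral against this density and evaluate it using the antiderivative $\tfrac{u^3}{3}$:
\begin{equation}
\mathbb{E}_{u \sim \mathrm{Uniform}([-1,1])}[u^2] = \int_{-1}^{1} u^2 \cdot \frac{1}{2}\,du = \frac{1}{2}\left[\frac{u^3}{3}\right]_{-1}^{1} = \frac{1}{2}\left(\frac{1}{3} - \left(-\frac{1}{3}\right)\right) = \frac{1}{3}.
\end{equation}
Since $\mathbb{E}[u] = 0$ by the symmetry of the density about the origin, this second moment coincides with the variance, consistent with the name of the lemma. As an alternative one may simply invoke the standard fact $\mathrm{Var}\big(\mathrm{Uniform}([\alpha,\beta])\big) = \tfrac{(\beta-\alpha)^2}{12}$ from \cite{CaseBerg_2009}, which with $\alpha = -1$, $\beta = 1$ yields $\tfrac{4}{12} = \tfrac{1}{3}$.

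There is essentially no obstacle here: the only points requiring minor care are the normalization constant $\tfrac{1}{2}$ of the density on an interval of length $2$ and the symmetric limits of integration, both of which are routine. The statement is included only because the factor $\tfrac{1}{3}$ appearing in Eq.~\eqref{eqn:mrc_mc} and in the nonstationary ESP threshold $0 < |a| < \sqrt{3}$ for mRC originates precisely from this second moment.
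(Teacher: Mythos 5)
Your proof is correct and follows essentially the same route as the paper: identify the density $\tfrac{1}{2}$ on $[-1,1]$ and integrate $u^2$ against it to obtain $\tfrac{1}{3}$. The additional remarks about $\mathbb{E}[u]=0$ and the $\tfrac{(\beta-\alpha)^2}{12}$ formula are fine but not needed.
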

\begin{proof}
	The probability density function $\rho(x)$ of $\mathrm{Uniform}([-1, 1])$ in $u \in [-1, 1]$ is 
	\begin{equation}
		\rho(u) = \frac{1}{2}
	\end{equation}
Therefore, 
\begin{equation}
\begin{aligned}
		\mathbb{E}_{u \sim \mathrm{Uniform}([-1, 1])}[u^2] &=
		\int_{-1}^1 u^2\rho(u)du\\
		&= \int_{-1}^1 \frac{u^2}{2}du\\
		&= \left[\frac{u^3}{6}\right]_{-1}^1\\
		&= \frac{1}{3}.
	\end{aligned}
\end{equation}
\end{proof}

\subsection{Proof of Rem.~\ref{rem:mrc-ipc}}

\begin{proof}
Because $u_t \sim \mathrm{Uniform}([-1, 1])$, 	$\mathbb{E}_t[u_t] = 0$ and $\mathbb{E}_t[u_t^2] = \frac{1}{3}$. Of course, $u_i$ and $u_j$ are independent for all $i\neq j$. Therefore, $\mathbb{E}\left[(\sum_t u_t)^2\right] = \sum_t \mathbb{E}[u_t^2]$. In the following, please note that the time index $t$ of $x_t$ and $u_t$ has an offset of 1 because $x_{t+1}$ is the state after input of $u_t$ in our definition.

The covariance between $x_{t+1}$ and $u_{t-k}$ for $k \geq 0$ can be written as follows:
\begin{equation}
	\begin{aligned}
	\mathrm{Cov}[x_{t+1}, u_{t-k}] &= \mathbb{E}\left[(x_{t+1} - \mathbb{E}[x_{t+1}])(u_{t-k} - \mathbb{E}[u_{t-k}])\right]\\
	&= \mathbb{E}\left[(x_{t+1} - \mathbb{E}[x_{t+1}])u_{t-k}\right]\\
	&= \mathbb{E}[x_{t+1} u_{t-k}] - \mathbb{E}\left[\mathbb{E}[x_{t+1}]u_{t-k}\right]\\
	&= \mathbb{E}[x_{t+1} u_{t-k}] - \mathbb{E}[x_{t+1}]\mathbb{E}[u_{t-k}]\\
	&= \mathbb{E}[x_{t+1} u_{t-k}]\\
	&= a\mathbb{E}[x_t u_t u_{t-k}] + b\mathbb{E}[u_{t-k}]\\
	&= a\mathbb{E}[x_t u_t u_{t-k}]\\
	&= \frac{a\mathbb{E}[x_t]}{3}\delta_{k, 0}
	\end{aligned}
\end{equation}
The last transform above comes from the independence of $x_t$, $u_t$, and $u_{t-k}$ for $k\geq 1$.
In addition, 
\begin{equation}
\begin{aligned}
	\mathbb{E}[x_{t+1}^2] &= a^2\mathbb{E}[x_t^2u_t^2] + 2ab\mathbb{E}[x_tu_t]\\
	&=\frac{a^2}{3}\mathbb{E}[x_t^2].
\end{aligned}
\end{equation}
Therefore, 
\begin{equation}
\begin{aligned}
	C_{tot}^{MC} = C_{0}^{MC} &= \lim_{t\to\infty}\frac{\mathrm{Cov}^2[x_{t+1}, u_t]}{\mathrm{Var}[x_{t+1}]\mathrm{Var}[u_t]}\\
	&= \lim_{t\to\infty}\frac{a^2\mathbb{E}^2[x_t]}{3^2}\cdot \frac{3^2}{a^2\mathbb{E}[x_t^2]}\\
	&= \lim_{t\to\infty}\frac{\mathbb{E}^2[x_t]}{\mathbb{E}[x_t^2]}.
\end{aligned}
\end{equation}

If nonstationary ESP holds, then $|a| < 1$ and $|b| > 0$. In this case, we have
\begin{equation}
	\mathbb{E}[x_t] = b
\end{equation}
and
\begin{equation}
\begin{aligned}
	\mathbb{E}[x_t^2] &\underset{t \to \infty} \to b^2\left( a^2\mathbb{E}[u_{t}^2] + a^4\mathbb{E}[u_{t}^2] \mathbb{E}[u_{t-1}^2] + \cdots \right)\\
	&= \frac{b^2}{1 - \frac{a^2}{3}}.
\end{aligned}
\end{equation}
Therefore,
\begin{equation}
\begin{aligned}
	C_{tot}^{MC} = C_{0}^{MC} &= b^2\frac{1 - \frac{a^2}{3}}{b^2}\\
	&= 1 - \frac{a^2}{3}.
	\end{aligned}
\end{equation}
\end{proof}
\end{appendix}
\normalem
\bibliography{main.bib}

\begin{thebibliography}{40}%
\makeatletter
\providecommand \@ifxundefined [1]{%
 \@ifx{#1\undefined}
}%
\providecommand \@ifnum [1]{%
 \ifnum #1\expandafter \@firstoftwo
 \else \expandafter \@secondoftwo
 \fi
}%
\providecommand \@ifx [1]{%
 \ifx #1\expandafter \@firstoftwo
 \else \expandafter \@secondoftwo
 \fi
}%
\providecommand \natexlab [1]{#1}%
\providecommand \enquote  [1]{``#1''}%
\providecommand \bibnamefont  [1]{#1}%
\providecommand \bibfnamefont [1]{#1}%
\providecommand \citenamefont [1]{#1}%
\providecommand \href@noop [0]{\@secondoftwo}%
\providecommand \href [0]{\begingroup \@sanitize@url \@href}%
\providecommand \@href[1]{\@@startlink{#1}\@@href}%
\providecommand \@@href[1]{\endgroup#1\@@endlink}%
\providecommand \@sanitize@url [0]{\catcode `\\12\catcode `\$12\catcode
  `\&12\catcode `\#12\catcode `\^12\catcode `\_12\catcode `\%12\relax}%
\providecommand \@@startlink[1]{}%
\providecommand \@@endlink[0]{}%
\providecommand \url  [0]{\begingroup\@sanitize@url \@url }%
\providecommand \@url [1]{\endgroup\@href {#1}{\urlprefix }}%
\providecommand \urlprefix  [0]{URL }%
\providecommand \Eprint [0]{\href }%
\providecommand \doibase [0]{https://doi.org/}%
\providecommand \selectlanguage [0]{\@gobble}%
\providecommand \bibinfo  [0]{\@secondoftwo}%
\providecommand \bibfield  [0]{\@secondoftwo}%
\providecommand \translation [1]{[#1]}%
\providecommand \BibitemOpen [0]{}%
\providecommand \bibitemStop [0]{}%
\providecommand \bibitemNoStop [0]{.\EOS\space}%
\providecommand \EOS [0]{\spacefactor3000\relax}%
\providecommand \BibitemShut  [1]{\csname bibitem#1\endcsname}%
\let\auto@bib@innerbib\@empty
\bibitem [{\citenamefont {Schuld}\ and\ \citenamefont
  {Petruccione}(2018)}]{Schuld_2018}%
  \BibitemOpen
  \bibfield  {author} {\bibinfo {author} {\bibfnamefont {M.}~\bibnamefont
  {Schuld}}\ and\ \bibinfo {author} {\bibfnamefont {F.}~\bibnamefont
  {Petruccione}},\ }\href@noop {} {\emph {\bibinfo {title} {Supervised Learning
  with Quantum Computers}}},\ \bibinfo {edition} {1st}\ ed.\ (\bibinfo
  {publisher} {Springer Publishing Company, Incorporated},\ \bibinfo {year}
  {2018})\BibitemShut {NoStop}%
\bibitem [{\citenamefont {Preskill}(2018)}]{Preskill_2018}%
  \BibitemOpen
  \bibfield  {author} {\bibinfo {author} {\bibfnamefont {J.}~\bibnamefont
  {Preskill}},\ }\href {https://doi.org/10.22331/q-2018-08-06-79} {\bibfield
  {journal} {\bibinfo  {journal} {Quantum}\ }\textbf {\bibinfo {volume} {2}},\
  \bibinfo {pages} {79} (\bibinfo {year} {2018})}\BibitemShut {NoStop}%
\bibitem [{\citenamefont {Mitarai}\ \emph {et~al.}(2018)\citenamefont
  {Mitarai}, \citenamefont {Negoro}, \citenamefont {Kitagawa},\ and\
  \citenamefont {Fujii}}]{Mitarai2018}%
  \BibitemOpen
  \bibfield  {author} {\bibinfo {author} {\bibfnamefont {K.}~\bibnamefont
  {Mitarai}}, \bibinfo {author} {\bibfnamefont {M.}~\bibnamefont {Negoro}},
  \bibinfo {author} {\bibfnamefont {M.}~\bibnamefont {Kitagawa}},\ and\
  \bibinfo {author} {\bibfnamefont {K.}~\bibnamefont {Fujii}},\ }\href
  {https://doi.org/10.1103/PhysRevA.98.032309} {\bibfield  {journal} {\bibinfo
  {journal} {Phys. Rev. A}\ }\textbf {\bibinfo {volume} {98}},\ \bibinfo
  {pages} {032309} (\bibinfo {year} {2018})}\BibitemShut {NoStop}%
\bibitem [{\citenamefont {Schuld}\ \emph {et~al.}(2019)\citenamefont {Schuld},
  \citenamefont {Bergholm}, \citenamefont {Gogolin}, \citenamefont {Izaac},\
  and\ \citenamefont {Killoran}}]{Schuld_2019}%
  \BibitemOpen
  \bibfield  {author} {\bibinfo {author} {\bibfnamefont {M.}~\bibnamefont
  {Schuld}}, \bibinfo {author} {\bibfnamefont {V.}~\bibnamefont {Bergholm}},
  \bibinfo {author} {\bibfnamefont {C.}~\bibnamefont {Gogolin}}, \bibinfo
  {author} {\bibfnamefont {J.}~\bibnamefont {Izaac}},\ and\ \bibinfo {author}
  {\bibfnamefont {N.}~\bibnamefont {Killoran}},\ }\bibfield  {journal}
  {\bibinfo  {journal} {Physical Review A}\ }\textbf {\bibinfo {volume} {99}},\
  \href {https://doi.org/10.1103/physreva.99.032331}
  {10.1103/physreva.99.032331} (\bibinfo {year} {2019})\BibitemShut {NoStop}%
\bibitem [{\citenamefont {Schuld}\ and\ \citenamefont
  {Petruccione}(2021)}]{Schuld2021}%
  \BibitemOpen
  \bibfield  {author} {\bibinfo {author} {\bibfnamefont {M.}~\bibnamefont
  {Schuld}}\ and\ \bibinfo {author} {\bibfnamefont {F.}~\bibnamefont
  {Petruccione}},\ }\bibinfo {title} {Quantum models as kernel methods},\ in\
  \href {https://doi.org/10.1007/978-3-030-83098-4_6} {\emph {\bibinfo
  {booktitle} {Machine Learning with Quantum Computers}}}\ (\bibinfo
  {publisher} {Springer International Publishing},\ \bibinfo {address} {Cham},\
  \bibinfo {year} {2021})\ pp.\ \bibinfo {pages} {217--245}\BibitemShut
  {NoStop}%
\bibitem [{\citenamefont {Pérez-Salinas}\ \emph {et~al.}(2020)\citenamefont
  {Pérez-Salinas}, \citenamefont {Cervera-Lierta}, \citenamefont
  {Gil-Fuster},\ and\ \citenamefont {Latorre}}]{P_rez_Salinas_2020}%
  \BibitemOpen
  \bibfield  {author} {\bibinfo {author} {\bibfnamefont {A.}~\bibnamefont
  {Pérez-Salinas}}, \bibinfo {author} {\bibfnamefont {A.}~\bibnamefont
  {Cervera-Lierta}}, \bibinfo {author} {\bibfnamefont {E.}~\bibnamefont
  {Gil-Fuster}},\ and\ \bibinfo {author} {\bibfnamefont {J.~I.}\ \bibnamefont
  {Latorre}},\ }\href {https://doi.org/10.22331/q-2020-02-06-226} {\bibfield
  {journal} {\bibinfo  {journal} {Quantum}\ }\textbf {\bibinfo {volume} {4}},\
  \bibinfo {pages} {226} (\bibinfo {year} {2020})}\BibitemShut {NoStop}%
\bibitem [{\citenamefont {Havl{\'i}{\v{c}}ek}\ \emph
  {et~al.}(2019)\citenamefont {Havl{\'i}{\v{c}}ek}, \citenamefont
  {C{\'o}rcoles}, \citenamefont {Temme}, \citenamefont {Harrow}, \citenamefont
  {Kandala}, \citenamefont {Chow},\ and\ \citenamefont
  {Gambetta}}]{Havlicek2019}%
  \BibitemOpen
  \bibfield  {author} {\bibinfo {author} {\bibfnamefont {V.}~\bibnamefont
  {Havl{\'i}{\v{c}}ek}}, \bibinfo {author} {\bibfnamefont {A.~D.}\ \bibnamefont
  {C{\'o}rcoles}}, \bibinfo {author} {\bibfnamefont {K.}~\bibnamefont {Temme}},
  \bibinfo {author} {\bibfnamefont {A.~W.}\ \bibnamefont {Harrow}}, \bibinfo
  {author} {\bibfnamefont {A.}~\bibnamefont {Kandala}}, \bibinfo {author}
  {\bibfnamefont {J.~M.}\ \bibnamefont {Chow}},\ and\ \bibinfo {author}
  {\bibfnamefont {J.~M.}\ \bibnamefont {Gambetta}},\ }\href
  {https://doi.org/10.1038/s41586-019-0980-2} {\bibfield  {journal} {\bibinfo
  {journal} {Nature}\ }\textbf {\bibinfo {volume} {567}},\ \bibinfo {pages}
  {209} (\bibinfo {year} {2019})}\BibitemShut {NoStop}%
\bibitem [{\citenamefont {Schuld}\ and\ \citenamefont
  {Killoran}(2019)}]{Schuld_2019_Hilbert}%
  \BibitemOpen
  \bibfield  {author} {\bibinfo {author} {\bibfnamefont {M.}~\bibnamefont
  {Schuld}}\ and\ \bibinfo {author} {\bibfnamefont {N.}~\bibnamefont
  {Killoran}},\ }\bibfield  {journal} {\bibinfo  {journal} {Physical Review
  Letters}\ }\textbf {\bibinfo {volume} {122}},\ \href
  {https://doi.org/10.1103/physrevlett.122.040504}
  {10.1103/physrevlett.122.040504} (\bibinfo {year} {2019})\BibitemShut
  {NoStop}%
\bibitem [{\citenamefont {Goto}\ \emph {et~al.}(2021)\citenamefont {Goto},
  \citenamefont {Tran},\ and\ \citenamefont {Nakajima}}]{goto_2021}%
  \BibitemOpen
  \bibfield  {author} {\bibinfo {author} {\bibfnamefont {T.}~\bibnamefont
  {Goto}}, \bibinfo {author} {\bibfnamefont {Q.~H.}\ \bibnamefont {Tran}},\
  and\ \bibinfo {author} {\bibfnamefont {K.}~\bibnamefont {Nakajima}},\ }\href
  {https://doi.org/10.1103/PhysRevLett.127.090506} {\bibfield  {journal}
  {\bibinfo  {journal} {Phys. Rev. Lett.}\ }\textbf {\bibinfo {volume} {127}},\
  \bibinfo {pages} {090506} (\bibinfo {year} {2021})}\BibitemShut {NoStop}%
\bibitem [{\citenamefont {Lloyd}\ \emph {et~al.}(2020)\citenamefont {Lloyd},
  \citenamefont {Schuld}, \citenamefont {Ijaz}, \citenamefont {Izaac},\ and\
  \citenamefont {Killoran}}]{lloyd2020quantum}%
  \BibitemOpen
  \bibfield  {author} {\bibinfo {author} {\bibfnamefont {S.}~\bibnamefont
  {Lloyd}}, \bibinfo {author} {\bibfnamefont {M.}~\bibnamefont {Schuld}},
  \bibinfo {author} {\bibfnamefont {A.}~\bibnamefont {Ijaz}}, \bibinfo {author}
  {\bibfnamefont {J.}~\bibnamefont {Izaac}},\ and\ \bibinfo {author}
  {\bibfnamefont {N.}~\bibnamefont {Killoran}},\ }\href@noop {} {\bibinfo
  {title} {Quantum embeddings for machine learning}} (\bibinfo {year} {2020}),\
  \Eprint {https://arxiv.org/abs/2001.03622} {arXiv:2001.03622 [quant-ph]}
  \BibitemShut {NoStop}%
\bibitem [{\citenamefont {McClean}\ \emph {et~al.}(2018)\citenamefont
  {McClean}, \citenamefont {Boixo}, \citenamefont {Smelyanskiy}, \citenamefont
  {Babbush},\ and\ \citenamefont {Neven}}]{McClean_2018}%
  \BibitemOpen
  \bibfield  {author} {\bibinfo {author} {\bibfnamefont {J.~R.}\ \bibnamefont
  {McClean}}, \bibinfo {author} {\bibfnamefont {S.}~\bibnamefont {Boixo}},
  \bibinfo {author} {\bibfnamefont {V.~N.}\ \bibnamefont {Smelyanskiy}},
  \bibinfo {author} {\bibfnamefont {R.}~\bibnamefont {Babbush}},\ and\ \bibinfo
  {author} {\bibfnamefont {H.}~\bibnamefont {Neven}},\ }\bibfield  {journal}
  {\bibinfo  {journal} {Nature Communications}\ }\textbf {\bibinfo {volume}
  {9}},\ \href {https://doi.org/10.1038/s41467-018-07090-4}
  {10.1038/s41467-018-07090-4} (\bibinfo {year} {2018})\BibitemShut {NoStop}%
\bibitem [{\citenamefont {Kubota}\ \emph {et~al.}(2023)\citenamefont {Kubota},
  \citenamefont {Suzuki}, \citenamefont {Kobayashi}, \citenamefont {Tran},
  \citenamefont {Yamamoto},\ and\ \citenamefont {Nakajima}}]{Kubota_2023}%
  \BibitemOpen
  \bibfield  {author} {\bibinfo {author} {\bibfnamefont {T.}~\bibnamefont
  {Kubota}}, \bibinfo {author} {\bibfnamefont {Y.}~\bibnamefont {Suzuki}},
  \bibinfo {author} {\bibfnamefont {S.}~\bibnamefont {Kobayashi}}, \bibinfo
  {author} {\bibfnamefont {Q.~H.}\ \bibnamefont {Tran}}, \bibinfo {author}
  {\bibfnamefont {N.}~\bibnamefont {Yamamoto}},\ and\ \bibinfo {author}
  {\bibfnamefont {K.}~\bibnamefont {Nakajima}},\ }\href
  {https://doi.org/10.1103/PhysRevResearch.5.023057} {\bibfield  {journal}
  {\bibinfo  {journal} {Phys. Rev. Res.}\ }\textbf {\bibinfo {volume} {5}},\
  \bibinfo {pages} {023057} (\bibinfo {year} {2023})}\BibitemShut {NoStop}%
\bibitem [{\citenamefont {Domingo}\ \emph {et~al.}(2023)\citenamefont
  {Domingo}, \citenamefont {Carlo},\ and\ \citenamefont
  {Borondo}}]{Domingo2023}%
  \BibitemOpen
  \bibfield  {author} {\bibinfo {author} {\bibfnamefont {L.}~\bibnamefont
  {Domingo}}, \bibinfo {author} {\bibfnamefont {G.}~\bibnamefont {Carlo}},\
  and\ \bibinfo {author} {\bibfnamefont {F.}~\bibnamefont {Borondo}},\ }\href
  {https://doi.org/10.1038/s41598-023-35461-5} {\bibfield  {journal} {\bibinfo
  {journal} {Scientific Reports}\ }\textbf {\bibinfo {volume} {13}},\ \bibinfo
  {pages} {8790} (\bibinfo {year} {2023})}\BibitemShut {NoStop}%
\bibitem [{\citenamefont {Sannia}\ \emph {et~al.}(2022)\citenamefont {Sannia},
  \citenamefont {Mart{\`i}­nez-Pe{\~n}a}, \citenamefont {Soriano},
  \citenamefont {Giorgi},\ and\ \citenamefont
  {Zambrini}}]{sannia2022dissipation}%
  \BibitemOpen
  \bibfield  {author} {\bibinfo {author} {\bibfnamefont {A.}~\bibnamefont
  {Sannia}}, \bibinfo {author} {\bibfnamefont {R.}~\bibnamefont
  {Mart{\`i}­nez-Pe{\~n}a}}, \bibinfo {author} {\bibfnamefont {M.~C.}\
  \bibnamefont {Soriano}}, \bibinfo {author} {\bibfnamefont {G.~L.}\
  \bibnamefont {Giorgi}},\ and\ \bibinfo {author} {\bibfnamefont
  {R.}~\bibnamefont {Zambrini}},\ }\href@noop {} {\bibinfo {title} {Dissipation
  as a resource for quantum reservoir computing}} (\bibinfo {year} {2022}),\
  \Eprint {https://arxiv.org/abs/2212.12078} {arXiv:2212.12078 [quant-ph]}
  \BibitemShut {NoStop}%
\bibitem [{\citenamefont {Sannia}\ \emph {et~al.}(2023)\citenamefont {Sannia},
  \citenamefont {Tacchino}, \citenamefont {Tavernelli}, \citenamefont
  {Giorgi},\ and\ \citenamefont {Zambrini}}]{sannia2023engineered}%
  \BibitemOpen
  \bibfield  {author} {\bibinfo {author} {\bibfnamefont {A.}~\bibnamefont
  {Sannia}}, \bibinfo {author} {\bibfnamefont {F.}~\bibnamefont {Tacchino}},
  \bibinfo {author} {\bibfnamefont {I.}~\bibnamefont {Tavernelli}}, \bibinfo
  {author} {\bibfnamefont {G.~L.}\ \bibnamefont {Giorgi}},\ and\ \bibinfo
  {author} {\bibfnamefont {R.}~\bibnamefont {Zambrini}},\ }\href@noop {}
  {\bibinfo {title} {Engineered dissipation to mitigate barren plateaus}}
  (\bibinfo {year} {2023}),\ \Eprint {https://arxiv.org/abs/2310.15037}
  {arXiv:2310.15037 [quant-ph]} \BibitemShut {NoStop}%
\bibitem [{\citenamefont {Fujii}\ and\ \citenamefont
  {Nakajima}(2017)}]{Fujii_2017}%
  \BibitemOpen
  \bibfield  {author} {\bibinfo {author} {\bibfnamefont {K.}~\bibnamefont
  {Fujii}}\ and\ \bibinfo {author} {\bibfnamefont {K.}~\bibnamefont
  {Nakajima}},\ }\bibfield  {journal} {\bibinfo  {journal} {Physical Review
  Applied}\ }\textbf {\bibinfo {volume} {8}},\ \href
  {https://doi.org/10.1103/physrevapplied.8.024030}
  {10.1103/physrevapplied.8.024030} (\bibinfo {year} {2017})\BibitemShut
  {NoStop}%
\bibitem [{\citenamefont {Cerezo}\ \emph {et~al.}(2021)\citenamefont {Cerezo},
  \citenamefont {Sone}, \citenamefont {Volkoff}, \citenamefont {Cincio},\ and\
  \citenamefont {Coles}}]{Cerezo2021}%
  \BibitemOpen
  \bibfield  {author} {\bibinfo {author} {\bibfnamefont {M.}~\bibnamefont
  {Cerezo}}, \bibinfo {author} {\bibfnamefont {A.}~\bibnamefont {Sone}},
  \bibinfo {author} {\bibfnamefont {T.}~\bibnamefont {Volkoff}}, \bibinfo
  {author} {\bibfnamefont {L.}~\bibnamefont {Cincio}},\ and\ \bibinfo {author}
  {\bibfnamefont {P.~J.}\ \bibnamefont {Coles}},\ }\href
  {https://doi.org/10.1038/s41467-021-21728-w} {\bibfield  {journal} {\bibinfo
  {journal} {Nature Communications}\ }\textbf {\bibinfo {volume} {12}},\
  \bibinfo {pages} {1791} (\bibinfo {year} {2021})}\BibitemShut {NoStop}%
\bibitem [{\citenamefont {Thanasilp}\ \emph {et~al.}(2022)\citenamefont
  {Thanasilp}, \citenamefont {Wang}, \citenamefont {Cerezo},\ and\
  \citenamefont {Holmes}}]{thanasilp2022exponential}%
  \BibitemOpen
  \bibfield  {author} {\bibinfo {author} {\bibfnamefont {S.}~\bibnamefont
  {Thanasilp}}, \bibinfo {author} {\bibfnamefont {S.}~\bibnamefont {Wang}},
  \bibinfo {author} {\bibfnamefont {M.}~\bibnamefont {Cerezo}},\ and\ \bibinfo
  {author} {\bibfnamefont {Z.}~\bibnamefont {Holmes}},\ }\href@noop {}
  {\bibinfo {title} {Exponential concentration and untrainability in quantum
  kernel methods}} (\bibinfo {year} {2022}),\ \Eprint
  {https://arxiv.org/abs/2208.11060} {arXiv:2208.11060 [quant-ph]} \BibitemShut
  {NoStop}%
\bibitem [{\citenamefont {Xiong}\ \emph {et~al.}(2023)\citenamefont {Xiong},
  \citenamefont {Facelli}, \citenamefont {Sahebi}, \citenamefont {Agnel},
  \citenamefont {Chotibut}, \citenamefont {Thanasilp},\ and\ \citenamefont
  {Holmes}}]{xiong2023fundamental}%
  \BibitemOpen
  \bibfield  {author} {\bibinfo {author} {\bibfnamefont {W.}~\bibnamefont
  {Xiong}}, \bibinfo {author} {\bibfnamefont {G.}~\bibnamefont {Facelli}},
  \bibinfo {author} {\bibfnamefont {M.}~\bibnamefont {Sahebi}}, \bibinfo
  {author} {\bibfnamefont {O.}~\bibnamefont {Agnel}}, \bibinfo {author}
  {\bibfnamefont {T.}~\bibnamefont {Chotibut}}, \bibinfo {author}
  {\bibfnamefont {S.}~\bibnamefont {Thanasilp}},\ and\ \bibinfo {author}
  {\bibfnamefont {Z.}~\bibnamefont {Holmes}},\ }\href@noop {} {\bibinfo {title}
  {On fundamental aspects of quantum extreme learning machines}} (\bibinfo
  {year} {2023}),\ \Eprint {https://arxiv.org/abs/2312.15124} {arXiv:2312.15124
  [quant-ph]} \BibitemShut {NoStop}%
\bibitem [{\citenamefont {Chen}\ and\ \citenamefont
  {Nurdin}(2019)}]{chen2019dissipative}%
  \BibitemOpen
  \bibfield  {author} {\bibinfo {author} {\bibfnamefont {J.}~\bibnamefont
  {Chen}}\ and\ \bibinfo {author} {\bibfnamefont {H.~I.}\ \bibnamefont
  {Nurdin}},\ }\href {https://doi.org/10.1007/s11128-019-2311-9} {\bibfield
  {journal} {\bibinfo  {journal} {Quantum Inf. Process.}\ }\textbf {\bibinfo
  {volume} {18}},\ \bibinfo {pages} {198} (\bibinfo {year} {2019})}\BibitemShut
  {NoStop}%
\bibitem [{\citenamefont {Chen}\ \emph {et~al.}(2020)\citenamefont {Chen},
  \citenamefont {Nurdin},\ and\ \citenamefont {Yamamoto}}]{Chen_2020}%
  \BibitemOpen
  \bibfield  {author} {\bibinfo {author} {\bibfnamefont {J.}~\bibnamefont
  {Chen}}, \bibinfo {author} {\bibfnamefont {H.~I.}\ \bibnamefont {Nurdin}},\
  and\ \bibinfo {author} {\bibfnamefont {N.}~\bibnamefont {Yamamoto}},\
  }\bibfield  {journal} {\bibinfo  {journal} {Physical Review Applied}\
  }\textbf {\bibinfo {volume} {14}},\ \href
  {https://doi.org/10.1103/physrevapplied.14.024065}
  {10.1103/physrevapplied.14.024065} (\bibinfo {year} {2020})\BibitemShut
  {NoStop}%
\bibitem [{\citenamefont {Mart{\'{i}}nez-Pe{\~{n}}a}\ and\ \citenamefont
  {Ortega}(2023)}]{Mart_nez_Pe_a_2023}%
  \BibitemOpen
  \bibfield  {author} {\bibinfo {author} {\bibfnamefont {R.}~\bibnamefont
  {Mart{\'{i}}nez-Pe{\~{n}}a}}\ and\ \bibinfo {author} {\bibfnamefont {J.-P.}\
  \bibnamefont {Ortega}},\ }\bibfield  {journal} {\bibinfo  {journal} {Physical
  Review E}\ }\textbf {\bibinfo {volume} {107}},\ \href
  {https://doi.org/10.1103/physreve.107.035306} {10.1103/physreve.107.035306}
  (\bibinfo {year} {2023})\BibitemShut {NoStop}%
\bibitem [{\citenamefont {Gonon}\ and\ \citenamefont
  {Jacquier}(2023)}]{gonon2023universal}%
  \BibitemOpen
  \bibfield  {author} {\bibinfo {author} {\bibfnamefont {L.}~\bibnamefont
  {Gonon}}\ and\ \bibinfo {author} {\bibfnamefont {A.}~\bibnamefont
  {Jacquier}},\ }\href@noop {} {\bibinfo {title} {Universal approximation
  theorem and error bounds for quantum neural networks and quantum reservoirs}}
  (\bibinfo {year} {2023}),\ \Eprint {https://arxiv.org/abs/2307.12904}
  {arXiv:2307.12904 [quant-ph]} \BibitemShut {NoStop}%
\bibitem [{\citenamefont {Mart{\'i}nez-Pe{\~{n}}a}\ \emph
  {et~al.}(2021)\citenamefont {Mart{\'i}nez-Pe{\~{n}}a}, \citenamefont
  {Giorgi}, \citenamefont {Nokkala}, \citenamefont {Soriano},\ and\
  \citenamefont {Zambrini}}]{Mart_nez_Pe_a_2021}%
  \BibitemOpen
  \bibfield  {author} {\bibinfo {author} {\bibfnamefont {R.}~\bibnamefont
  {Mart{\'i}nez-Pe{\~{n}}a}}, \bibinfo {author} {\bibfnamefont {G.~L.}\
  \bibnamefont {Giorgi}}, \bibinfo {author} {\bibfnamefont {J.}~\bibnamefont
  {Nokkala}}, \bibinfo {author} {\bibfnamefont {M.~C.}\ \bibnamefont
  {Soriano}},\ and\ \bibinfo {author} {\bibfnamefont {R.}~\bibnamefont
  {Zambrini}},\ }\bibfield  {journal} {\bibinfo  {journal} {Physical Review
  Letters}\ }\textbf {\bibinfo {volume} {127}},\ \href
  {https://doi.org/10.1103/physrevlett.127.100502}
  {10.1103/physrevlett.127.100502} (\bibinfo {year} {2021})\BibitemShut
  {NoStop}%
\bibitem [{\citenamefont {Jaeger}(2001{\natexlab{a}})}]{Jaeger2001ESP}%
  \BibitemOpen
  \bibfield  {author} {\bibinfo {author} {\bibfnamefont {H.}~\bibnamefont
  {Jaeger}},\ }\href@noop {} {\bibfield  {journal} {\bibinfo  {journal} {Bonn,
  Germany: German National Research Center for Information Technology GMD
  Technical Report}\ }\textbf {\bibinfo {volume} {148}} (\bibinfo {year}
  {2001}{\natexlab{a}})}\BibitemShut {NoStop}%
\bibitem [{\citenamefont {Kobayashi}\ \emph {et~al.}(2024)\citenamefont
  {Kobayashi}, \citenamefont {Tran},\ and\ \citenamefont
  {Nakajima}}]{kobayashi2024hierarchy}%
  \BibitemOpen
  \bibfield  {author} {\bibinfo {author} {\bibfnamefont {S.}~\bibnamefont
  {Kobayashi}}, \bibinfo {author} {\bibfnamefont {Q.~H.}\ \bibnamefont
  {Tran}},\ and\ \bibinfo {author} {\bibfnamefont {K.}~\bibnamefont
  {Nakajima}},\ }\href {https://doi.org/10.1103/PhysRevE.110.024207} {\bibfield
   {journal} {\bibinfo  {journal} {Phys. Rev. E}\ }\textbf {\bibinfo {volume}
  {110}},\ \bibinfo {pages} {024207} (\bibinfo {year} {2024})}\BibitemShut
  {NoStop}%
\bibitem [{\citenamefont {Wood}\ \emph {et~al.}(2015)\citenamefont {Wood},
  \citenamefont {Biamonte},\ and\ \citenamefont {Cory}}]{wood2015tensor}%
  \BibitemOpen
  \bibfield  {author} {\bibinfo {author} {\bibfnamefont {C.~J.}\ \bibnamefont
  {Wood}}, \bibinfo {author} {\bibfnamefont {J.~D.}\ \bibnamefont {Biamonte}},\
  and\ \bibinfo {author} {\bibfnamefont {D.~G.}\ \bibnamefont {Cory}},\
  }\href@noop {} {\bibinfo {title} {Tensor networks and graphical calculus for
  open quantum systems}} (\bibinfo {year} {2015}),\ \Eprint
  {https://arxiv.org/abs/1111.6950} {arXiv:1111.6950 [quant-ph]} \BibitemShut
  {NoStop}%
\bibitem [{\citenamefont {Jaeger}(2001{\natexlab{b}})}]{Jaeger_2001}%
  \BibitemOpen
  \bibfield  {author} {\bibinfo {author} {\bibfnamefont {H.}~\bibnamefont
  {Jaeger}},\ }\href {https://doi.org/10.24406/publica-fhg-291107} {\emph
  {\bibinfo {title} {Short term memory in echo state networks}}}\ (\bibinfo
  {publisher} {GMD Forschungszentrum Informationstechnik},\ \bibinfo {year}
  {2001})\BibitemShut {NoStop}%
\bibitem [{\citenamefont {Sherrington}\ and\ \citenamefont
  {Kirkpatrick}(1975)}]{Sherrington_1975}%
  \BibitemOpen
  \bibfield  {author} {\bibinfo {author} {\bibfnamefont {D.}~\bibnamefont
  {Sherrington}}\ and\ \bibinfo {author} {\bibfnamefont {S.}~\bibnamefont
  {Kirkpatrick}},\ }\href {https://doi.org/10.1103/PhysRevLett.35.1792}
  {\bibfield  {journal} {\bibinfo  {journal} {Phys. Rev. Lett.}\ }\textbf
  {\bibinfo {volume} {35}},\ \bibinfo {pages} {1792} (\bibinfo {year}
  {1975})}\BibitemShut {NoStop}%
\bibitem [{\citenamefont {Žunkovič}\ \emph {et~al.}(2018)\citenamefont
  {Žunkovič}, \citenamefont {Heyl}, \citenamefont {Knap},\ and\ \citenamefont
  {Silva}}]{_unkovi__2018}%
  \BibitemOpen
  \bibfield  {author} {\bibinfo {author} {\bibfnamefont {B.}~\bibnamefont
  {Žunkovič}}, \bibinfo {author} {\bibfnamefont {M.}~\bibnamefont {Heyl}},
  \bibinfo {author} {\bibfnamefont {M.}~\bibnamefont {Knap}},\ and\ \bibinfo
  {author} {\bibfnamefont {A.}~\bibnamefont {Silva}},\ }\bibfield  {journal}
  {\bibinfo  {journal} {Physical Review Letters}\ }\textbf {\bibinfo {volume}
  {120}},\ \href {https://doi.org/10.1103/physrevlett.120.130601}
  {10.1103/physrevlett.120.130601} (\bibinfo {year} {2018})\BibitemShut
  {NoStop}%
\bibitem [{\citenamefont {Abanin}\ \emph {et~al.}(2019)\citenamefont {Abanin},
  \citenamefont {Altman}, \citenamefont {Bloch},\ and\ \citenamefont
  {Serbyn}}]{Abanin_2019}%
  \BibitemOpen
  \bibfield  {author} {\bibinfo {author} {\bibfnamefont {D.~A.}\ \bibnamefont
  {Abanin}}, \bibinfo {author} {\bibfnamefont {E.}~\bibnamefont {Altman}},
  \bibinfo {author} {\bibfnamefont {I.}~\bibnamefont {Bloch}},\ and\ \bibinfo
  {author} {\bibfnamefont {M.}~\bibnamefont {Serbyn}},\ }\bibfield  {journal}
  {\bibinfo  {journal} {Reviews of Modern Physics}\ }\textbf {\bibinfo {volume}
  {91}},\ \href {https://doi.org/10.1103/revmodphys.91.021001}
  {10.1103/revmodphys.91.021001} (\bibinfo {year} {2019})\BibitemShut {NoStop}%
\bibitem [{\citenamefont {Jak\`obczyk}\ and\ \citenamefont
  {Siennicki}(2001)}]{Jakob_2001}%
  \BibitemOpen
  \bibfield  {author} {\bibinfo {author} {\bibfnamefont {L.}~\bibnamefont
  {Jak\`obczyk}}\ and\ \bibinfo {author} {\bibfnamefont {M.}~\bibnamefont
  {Siennicki}},\ }\href
  {https://doi.org/https://doi.org/10.1016/S0375-9601(01)00455-8} {\bibfield
  {journal} {\bibinfo  {journal} {Physics Letters A}\ }\textbf {\bibinfo
  {volume} {286}},\ \bibinfo {pages} {383} (\bibinfo {year}
  {2001})}\BibitemShut {NoStop}%
\bibitem [{\citenamefont {Hartfiel}(2002)}]{hartfiel2002nonhomogeneous}%
  \BibitemOpen
  \bibfield  {author} {\bibinfo {author} {\bibfnamefont {D.}~\bibnamefont
  {Hartfiel}},\ }\href {https://books.google.co.jp/books?id=z2FqDQAAQBAJ}
  {\emph {\bibinfo {title} {Nonhomogeneous Matrix Products}}},\ G -
  Reference,Information and Interdisciplinary Subjects Series\ (\bibinfo
  {publisher} {World Scientific},\ \bibinfo {year} {2002})\BibitemShut
  {NoStop}%
\bibitem [{\citenamefont {Dambre}\ \emph {et~al.}(2012)\citenamefont {Dambre},
  \citenamefont {Verstraeten}, \citenamefont {Schrauwen},\ and\ \citenamefont
  {Massar}}]{dambre2012}%
  \BibitemOpen
  \bibfield  {author} {\bibinfo {author} {\bibfnamefont {J.}~\bibnamefont
  {Dambre}}, \bibinfo {author} {\bibfnamefont {D.}~\bibnamefont {Verstraeten}},
  \bibinfo {author} {\bibfnamefont {B.}~\bibnamefont {Schrauwen}},\ and\
  \bibinfo {author} {\bibfnamefont {S.}~\bibnamefont {Massar}},\ }\href
  {https://doi.org/10.1038/srep00514} {\bibfield  {journal} {\bibinfo
  {journal} {Scientific reports}\ }\textbf {\bibinfo {volume} {2}},\ \bibinfo
  {pages} {514} (\bibinfo {year} {2012})}\BibitemShut {NoStop}%
\bibitem [{\citenamefont {Yildiz}\ \emph {et~al.}(2012)\citenamefont {Yildiz},
  \citenamefont {Jaeger},\ and\ \citenamefont {Kiebel}}]{yildiz2012re}%
  \BibitemOpen
  \bibfield  {author} {\bibinfo {author} {\bibfnamefont {I.~B.}\ \bibnamefont
  {Yildiz}}, \bibinfo {author} {\bibfnamefont {H.}~\bibnamefont {Jaeger}},\
  and\ \bibinfo {author} {\bibfnamefont {S.~J.}\ \bibnamefont {Kiebel}},\
  }\href@noop {} {\bibfield  {journal} {\bibinfo  {journal} {Neural networks}\
  }\textbf {\bibinfo {volume} {35}},\ \bibinfo {pages} {1} (\bibinfo {year}
  {2012})}\BibitemShut {NoStop}%
\bibitem [{\citenamefont {Wang}\ and\ \citenamefont
  {Schirmer}(2009)}]{Wang_2009}%
  \BibitemOpen
  \bibfield  {author} {\bibinfo {author} {\bibfnamefont {X.}~\bibnamefont
  {Wang}}\ and\ \bibinfo {author} {\bibfnamefont {S.~G.}\ \bibnamefont
  {Schirmer}},\ }\bibfield  {journal} {\bibinfo  {journal} {Physical Review A}\
  }\textbf {\bibinfo {volume} {79}},\ \href
  {https://doi.org/10.1103/physreva.79.052326} {10.1103/physreva.79.052326}
  (\bibinfo {year} {2009})\BibitemShut {NoStop}%
\bibitem [{\citenamefont {Basterrech}(2017)}]{Basterrech2017}%
  \BibitemOpen
  \bibfield  {author} {\bibinfo {author} {\bibfnamefont {S.}~\bibnamefont
  {Basterrech}},\ }in\ \href {https://doi.org/10.1109/IJCNN.2017.7965946}
  {\emph {\bibinfo {booktitle} {2017 International Joint Conference on Neural
  Networks (IJCNN)}}}\ (\bibinfo {year} {2017})\ pp.\ \bibinfo {pages}
  {888--896}\BibitemShut {NoStop}%
\bibitem [{\citenamefont {Knee}\ \emph {et~al.}(2018)\citenamefont {Knee},
  \citenamefont {Bolduc}, \citenamefont {Leach},\ and\ \citenamefont
  {Gauger}}]{knee2018qpt}%
  \BibitemOpen
  \bibfield  {author} {\bibinfo {author} {\bibfnamefont {G.~C.}\ \bibnamefont
  {Knee}}, \bibinfo {author} {\bibfnamefont {E.}~\bibnamefont {Bolduc}},
  \bibinfo {author} {\bibfnamefont {J.}~\bibnamefont {Leach}},\ and\ \bibinfo
  {author} {\bibfnamefont {E.~M.}\ \bibnamefont {Gauger}},\ }\href
  {https://doi.org/10.1103/PhysRevA.98.062336} {\bibfield  {journal} {\bibinfo
  {journal} {Phys. Rev. A}\ }\textbf {\bibinfo {volume} {98}},\ \bibinfo
  {pages} {062336} (\bibinfo {year} {2018})}\BibitemShut {NoStop}%
\bibitem [{\citenamefont {Greenbaum}(2015)}]{daniel_2015}%
  \BibitemOpen
  \bibfield  {author} {\bibinfo {author} {\bibfnamefont {D.}~\bibnamefont
  {Greenbaum}},\ }\href {https://doi.org/10.48550/ARXIV.1509.02921} {\bibinfo
  {title} {Introduction to quantum gate set tomography}} (\bibinfo {year}
  {2015})\BibitemShut {NoStop}%
\bibitem [{\citenamefont {Casella}\ and\ \citenamefont
  {Berger}(2001)}]{CaseBerg_2009}%
  \BibitemOpen
  \bibfield  {author} {\bibinfo {author} {\bibfnamefont {G.}~\bibnamefont
  {Casella}}\ and\ \bibinfo {author} {\bibfnamefont {R.}~\bibnamefont
  {Berger}},\ }\href@noop {} {\emph {\bibinfo {title} {Statistical
  Inference}}}\ (\bibinfo  {publisher} {{Duxbury Resource Center}},\ \bibinfo
  {year} {2001})\BibitemShut {NoStop}%
\end{thebibliography}%
\end{document}